\newcommand{\tr}{\mathrm{Tr}}
\newcommand{\bra}[1]{\mbox{$\left\langle #1 \right|$}}
\newcommand{\ket}[1]{\mbox{$\left| #1 \right\rangle$}}
\newcommand{\ketbra}[2]{\mbox{$\left| #1 \right\rangle   \left\langle #2 \right|$}}
\newcommand{\re}{\textrm{Re}}
\newcommand{\comments}[1]{}
\newtheorem{lemma}{Lemma}
\newtheorem{theorem}{Theorem}
\newtheorem{corollary}{Corollary}
\newtheorem{problem}{Problem}
\newtheorem{proposition}{Proposition}
\begin{document}
\title{Operational Interpretation of Coherence in Quantum Key Distribution}

\date{\today}
\author{Jiajun Ma$^{\dag}$}
\affiliation{Center for Quantum Information, Institute for Interdisciplinary Information Sciences, Tsinghua University, Beijing 100084, China}

\author{You Zhou$^{\dag}$}
\affiliation{Center for Quantum Information, Institute for Interdisciplinary Information Sciences, Tsinghua University, Beijing 100084, China}

\author{Xiao Yuan}
\affiliation{Department of Materials, University of Oxford, Parks Road, Oxford OX1 3PH, United Kingdom}

\author{Xiongfeng Ma}
\email{xma@tsinghua.edu.cn}
\affiliation{Center for Quantum Information, Institute for Interdisciplinary Information Sciences, Tsinghua University, Beijing 100084, China}

\begin{abstract}
Quantum coherence was recently formalized as a physical resource to measure the strength of superposition. Based on the resource theory, we present a systematic framework that connects a coherence measure to the security of quantum key distribution. By considering a generic entanglement-based key distribution protocol under the collective attack scenario, we show that the key rate can be quantified by the coherence of the shared bipartite states. This framework allows us to derive the key rate of the BB84 and six-state protocols. By utilizing fine-grained parameters, we also derive the improved key rates of both protocols within the coherence-based framework.
Furthermore, we apply it to a practical issue, detection efficiency mismatch, and obtain an improved result. In conclusion, this framework demonstrates the interplay among coherence, entanglement, and quantum key distribution at a fundamental level.
\end{abstract}

\maketitle
\section{Introduction}
As the notion that captures the quantum superposition between differentiable physical states, quantum coherence represents one of the fundamental features that distinguish quantum mechanics from its classical counterpart~\cite{Born1926,schrodinger1935gegenwartige}. Despite of the early recognition of its importance, quantum coherence was only recently formalized under a rigourous framework of resource theory~\cite{Plenio2014coherence}, which stimulated a rapidly growing research field on quantum coherence, ranging from its mathematical characterizations to its operational interpretations~\cite{streltsov2017coherence}.

The motivation for studying the operational interpretation of quantum coherence is two-folded. First, by linking coherence to the operational advantage of quantum information processing protocols, one can improve existing protocols and derive other ones by exploiting similar mechanisms. Second, the observable phenomenon bestowed by quantum coherence allows one to better understand the boundary between quantum and classical realms, one of the fundamental problems in theoretical physics.

The operational significance of quantum coherence has been recognized in many areas, including quantum metrology~\cite{BRAUNSTEIN1996135}, quantum computing~\cite{nielsen2010quantum},  quantum thermodynamics~\cite{aaberg2014catalytic,lostaglio2015description} and quantum biology~\cite{Huelga2013biology}. With the developed resource theory of coherence, more operational significance of coherence was discovered and quantified~\cite{Hillery2016Coherence,Matera2016Coherent,Marvian2016speed,Napoli2016Robustness, lostaglio2015quantum,Bagan2016Path}. Recently, it was shown that coherence quantifies the amount of unpredictable intrinsic randomness from measuring quantum states~\cite{Yuan15randomness,yuan2016interplay}. Such a relation has been applied to develop source-independent quantum random number generators~\cite{Ma2019Coherence}. Taking a qubit as an example, the state $\ket{\psi_A}=(\ket{0}+\ket{1})/\sqrt{2}$ can yield intrinsic randomness when measured in the $Z$ basis, which is unpredictable to an adversary, Eve. In comparison, the measurement result of $\rho_A=(\ket{0}\bra{0}+\ket{1}\bra{1})/2$ with zero coherence can be fully determined by Eve if she holds the purification of $\rho_A$, that is, $\ket{\psi_{AE}}=(\ket{00}+\ket{11})/\sqrt{2}$.

In this paper, we investigate the significant role of coherence played in quantum key distribution (QKD), via considering the relation between coherence and intrinsic randomness. In the scenario of QKD, two legitimate users, Alice and Bob, intend to share a sequence of private and identical bits, called the secret key. In a QKD security analysis, one always needs to consider two steps in postprocessing. One is information reconciliation that ensures the keys shared by Alice and Bob are identical. The other is privacy amplification that extracts the secure key from the raw key. In general, information reconciliation is a standard classical process, while privacy amplification is determined by the quantum procedures of the protocol. Privacy amplification plays a central role in all security proofs. For example, in the Shor-Preskill security proof~\cite{shor2000simple}, privacy amplification is linked to the phase error correction in an equivalent entanglement protocol~\cite{Lo1999unconditional}. In this paper, we examine the postprocessing in an alternative way. After information reconciliation, the amount of secret key that can be extracted from privacy amplification is essentially determined by the intrinsic randomness that is unknown to Eve. This intrinsic randomness can be quantified by the coherence of the joint system of Alice and Bob. For example, in the entanglement-based version of the BB84 protocol~\cite{bennett1984QKD,bennett1992quantum}, supposing there are only phase errors left, the final state shared by Alice and Bob is a mixture of $(\ket{00}+\ket{11})/\sqrt{2}$ and $(\ket{00}-\ket{11})/\sqrt{2}$. If the phase error rate takes the worst case of $50\%$, the state becomes $(\ket{00}\bra{00}+\ket{11}\bra{11})/2$, which has no coherence in the $Z$ basis, and hence no secret key can be generated.

Following this spirit, we propose a generic security analysis framework for QKD under collective attacks, and we show that the key generation rate is closely related to the amount of coherence within the joint quantum states. To be more specific, we find that the security of the key originates from the coherence of the bipartite quantum state shared by Alice and Bob. Our framework is concise, and is one via which one can derive the final key rate formulas of the BB84 protocol~\cite{bennett1984QKD,shor2000simple} and the six-state protocol~\cite{BD98Six,hoi2001proof}. Moreover, the proposed framework allows one to improve the key rates with fine-grained parameters. Many existing QKD security analyses~\cite{Lo1999unconditional,shor2000simple} are based on entanglement distillation protocols~\cite{Bennett1996Purification}, where entanglement is taken as an essential resource to guarantee the security of the final key. In fact, entanglement is a precondition for secure QKD~\cite{Curty2004Precondition}. Thus, we also discuss the interplay among coherence, entanglement, and QKD security. Later, the potential approach to extend our results from the scenario of collective attacks to the one of coherent attacks will be discussed.

Our paper is organized as follows. In Sec.~\ref{Sec:randomness}, we review the close relation between quantum coherence and intrinsic randomness.  In Sec.~\ref{Sec:framework}, we introduce the security analysis framework based on quantifying coherence, and present an explicit key rate formula related to the coherence of the bipartite state in the key generation basis. In Sec.~\ref{Sec:case}, by applying the framework to the BB84 and six-state protocols, we reproduce the original key rate formulas. Then, in Sec.~\ref{Sec:improve}, with analytical tools well developed under the resource theory of coherence, we improve the key rates of these two protocols by using fine-grained parameters in postprocessing the measurement outcomes. Furthermore, in Sec.~\ref{Sec:mismatch}, we apply the framework to solve a practical issue in QKD, detection efficiency mismatch, where two detectors have nonidentical detection efficiency. The derived key rate shows an advantage over previous analyses. We also discuss the interplay among coherence, entanglement, and QKD security in Sec.~\ref{Sec:Entanglement}. Finally, we conclude our work and discuss future works in Sec.~\ref{Sec:conclude}.

\section{Preliminary: coherence and intrinsic randomness}\label{Sec:randomness}
The resource framework of coherence was recently formalized \cite{Plenio2014coherence}. A comprehensive review of this topic can be found in Ref.~\cite{streltsov2017coherence} and references therein. Here, we briefly review the concepts involved in this paper.

The free state and the free operation are two elementary ingredients in all resource theories. In the context of coherence theory, considering a $d$-dimensional Hilbert space $\mathcal{H}_d$ and a reference (computational) basis ${I} =\{\ket{i}\}_{i=1,2,...,d}$, the free state is the state which is diagonal in the reference basis, i.e., $\delta=\sum_{i=1}^d\delta_i \ket{i}\bra{i}$; the free operation is an incoherent complete positive and trace preserving operation, which cannot generate coherence from incoherent states. Based on this coherence framework, several measures are proposed to quantify the superposition strength of the reference basis, such as relative entropy of coherence~\cite{Plenio2014coherence}, which is defined as
\begin{equation} \label{eq:relcoherence}
C(\rho) = S\left(\rho^{\textrm{diag}}\right)-S(\rho).
\end{equation}
where $\rho^{\textrm{diag}}$ is the diagonal state of $\rho$ in the reference basis, $\sum_i\bra{i}\rho\ket{i}\ket{i}\bra{i}$, and $S(\rho)=-\tr[\rho\log_2(\rho)]$  is the von Neumann entropy of $\rho$.

On the other hand, intrinsic randomness is unpredictable compared with the pseudo-randomness produced by deterministic algorithms. A quantum random number generator (QRNG) serves as an elegant solution to the intrinsic randomness generation, via measuring quantum state in well-designed methods~\cite{Ma2016QRNG,herrero2017quantum}. Under the resource framework of coherence, it was recently observed that the coherence of a quantum state quantifies the extractable intrinsic randomness by measuring it in the reference basis~\cite{Yuan15randomness,yuan2016interplay}.

To be more specific, let $\rho_A$ denote the state of system $A$ that is designed to generate random numbers. Consider a purification of $\rho_A$ as $\ket{\psi}_{AE}$, i.e., $\mathrm{Tr}_E[\ket{\psi}_{AE}\bra{\psi}_{AE}]=\rho_A$ with $\mathrm{Tr}_E$ as the partial trace over system $E$. In a randomness analysis, the purification system $E$ is normally assumed to be held by Eve, who aims at predicting the measurement outcome of $\rho_A$ by manipulating system $E$. Suppose a projective measurement on the $I$ basis is performed on $\rho_A$, then the joint state $\rho_{AE}=\ket{\psi}_{AE}\bra{\psi}_{AE}$ becomes $\rho_{AE}'=\sum_{i} \ket{i}_A\bra{i}\otimes\bra{i}_A\rho_{AE}\ket{i}_A$. Considering the independent and identically distributed (i.i.d.) scenario, the intrinsic randomness that is unpredictable to Eve, denoted by $R(\rho_A)$, is measured by the quantum conditional entropy $S(A|E)_{\rho_{AE}'}$. It is shown to be exactly characterized by the relative entropy of coherence $C(\rho_S)$~\cite{Yuan15randomness,yuan2016interplay},
\begin{equation}\label{Eq:randomcoherence}
R(\rho_A)=S(A|E)_{\rho_{AE}'} = C(\rho_A),
\end{equation}
where $S(A|B)=S(\rho_{AB})-S(\rho_B)$ is the conditional quantum entropy of $\rho_{AB}$. Therefore, the resource theory of coherence provides a useful tool to quantify the intrinsic randomness in the reference basis.

At the end of this section, we clarify notations in the paper for clearness. The $Z$-basis is the reference basis of a qubit, $\{\ket{0},\ket{1}\}$. The $X$ basis $\{\ket{+},\ket{-}\}$ and $Y$ basis $\{\ket{+i},\ket{-i}\}$ are conjugate bases with $\ket{\pm}=(\ket{0}\pm\ket{1})/\sqrt{2}$ and $\ket{\pm i}=\frac1{\sqrt{2}}(\ket{0}\pm i\ket{1})$, respectively. Denote the $Z$-basis measurement result as $\mathcal{Z}$. The $Z$-basis parity projectors on a two-qubit space are
\begin{equation}\label{Eq:Zprojector}
\begin{aligned}
\Pi^+&=\ketbra{00}{00}+\ketbra{11}{11}, \\
\Pi^-&=\ketbra{01}{01}+\ketbra{10}{10}.
\end{aligned}
\end{equation}
Similarly, for the $X$-basis and $Y$-basis, the projectors are
\begin{equation}\label{Eq:XYprojector}
\begin{aligned}
\Pi^+_x&=\ket{++}\bra{++}+\ket{--}\bra{--}, \\
\Pi^-_x&=\ket{+-}\bra{+-}+\ket{-+}\bra{-+}, \\
\Pi^+_y&=\ket{+i-i}\bra{+i-i}+\ket{-i+i}\bra{-i+i}, \\ \Pi^-_y&=\ket{+i+i}\bra{+i+i}+\ket{-i-i}\bra{-i-i}.
\end{aligned}
\end{equation}
Moreover, we define the partial dephasing channel on the $Z$ basis as
\begin{equation}\label{Eq:ParDphase}
\Phi(\rho)=\Pi^+\rho\Pi^++\Pi^-\rho\Pi^-.
\end{equation}


\section{Security framework with coherence}\label{Sec:framework}
In this section, we provide a framework that relates the security analysis of QKD to the resource theory of coherence. In QKD, Alice and Bob intend to share a sequence of private and identical bits, called secret key, via communication over an untrusted quantum channel and an authenticated classical channel. Eve can attack the communication channels with any strategies allowed by the principles of quantum mechanics.

In the following discussions, we consider an entanglement-based QKD scheme, since the prepare-and-measure schemes can be converted to the entanglement-based ones with the standard technique~\cite{shor2000simple}. Also, we consider the security analysis with respect to the condition that the shared states between Alice and Bob of different rounds are i.i.d.. This is the collective attack scenario considered in QKD~\cite{Devetak2005distillation}. Then, a generic QKD protocol can be described by the five points below.

\begin{enumerate} [(i)]
\item
$N$ pairs of qubit states, $\rho_{AB}^{\otimes N}$, are distributed to Alice and Bob.
\item
They randomly sample $N-n$ copies of $\rho_{AB}$ for parameter estimation, where $0 < n < N$.
\item
For the remaining $n$ copies of $\rho_{AB}$, Alice and Bob each performs the $Z$-basis measurement to generate the raw key.
\item
They perform classical information reconciliation on the raw key to share identical keys.
\item
They perform privacy amplification based on the information provided in the parameter estimation to share private keys.
\end{enumerate}

In a security proof, the parameter estimation is a crucial step, which determines the amount of secure keys that can be extracted in QKD. Essentially, Alice and Bob perform some measurement $\Gamma_i\in \mathbf{\Gamma}$ to estimate the information of $\rho_{AB}$, with the probability of measurement outcome $i$ being $\gamma_i=\mathrm{Tr}(\rho_{AB}\Gamma_i)$. As a result, $\rho_{AB}$ should fulfill a set of constraints, $\rho_{AB}\in\mathbf{S}$, where $\mathbf{S}$ denotes the set which contains all the states satisfying constraints from parameter estimation,
\begin{equation}\label{Eq:SetS}
\mathbf{S}:=\{\rho_{AB}|\mathbf{\Gamma}: \mathrm{Tr}(\rho_{AB}\Gamma_i)=\gamma_i\}.
\end{equation}

Now we provide the main result of this paper, which connects the key rate of QKD with the relative entropy of coherence. Our derivation is based on the close relation between intrinsic randomness and quantum coherence.

\begin{theorem}\label{Th:keyrate}
In the asymptotic limit where $n, N\rightarrow\infty$, the secret key rate of the above QKD protocol is given by
\begin{equation}\label{Eq:KR}
K = \min_{\rho_{AB}\in\mathcal{S}}C(\Phi(\rho_{AB}))-I_{ec},
\end{equation}
where $\Phi(\cdot)$ is the partial dephasing operation defined in Eq.~\eqref{Eq:ParDphase}, $C(\cdot)$ is the relative entropy of coherence on the computational basis $Z_A\otimes Z_B$ defined in Eq.~\eqref{eq:relcoherence}, and $I_{ec}$ is the information leakage during key reconciliation.
\end{theorem}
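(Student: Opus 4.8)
The plan is to reduce the claim to a single copy, apply the Devetak--Winter bound for collective attacks, and recognise the resulting privacy term as the coherence of $\Phi(\rho_{AB})$. Since the shared states are i.i.d.\ and we work asymptotically, parameter estimation confines $\rho_{AB}$ to $\mathbf{S}$ (Eq.~\eqref{Eq:SetS}) up to vanishing fluctuations, and the $n$ key rounds become $n$ copies of a fixed $\rho_{AB}\in\mathbf{S}$; Eve's collective attack amounts to her holding a purification $\ket{\psi}_{ABE}$ and choosing the worst $\rho_{AB}\in\mathbf{S}$, which is the source of the minimisation. After Alice and Bob measure in the $Z$ basis we obtain a ccq state $\omega_{\mathcal{Z}_A\mathcal{Z}_BE}$, and the asymptotically achievable one-way key rate (with one-way reconciliation) is $S(\mathcal{Z}_A|E)_{\omega}-I_{ec}$, with $I_{ec}=H(\mathcal{Z}_A|\mathcal{Z}_B)$ the Slepian--Wolf reconciliation cost~\cite{Devetak2005distillation}, which is precisely the leakage appearing in the theorem. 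So the task becomes $\min_{\mathbf{S}}S(\mathcal{Z}_A|E)=\min_{\mathbf{S}}C(\Phi(\rho_{AB}))$.

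The heart of the argument is the identity $C(\Phi(\rho_{AB}))=S(\mathcal{Z}_A|PE)$, where $P$ is a classical register recording which parity subspace ($\Pi^{+}$ or $\Pi^{-}$) the state lies in, handed to Eve. This is exactly Eq.~\eqref{Eq:randomcoherence} applied to $\Phi(\rho_{AB})$: viewing $\Phi$ as the channel that reads and releases the $Z\!\otimes\!Z$ parity, a purification of $\Phi(\rho_{AB})$ is Eve's $E$ together with $P$, so the intrinsic randomness of the $Z_A\!\otimes\!Z_B$ measurement on $\Phi(\rho_{AB})$ is its relative entropy of coherence. To make this concrete I would use that $\Phi$ fixes the computational diagonal, hence $C(\Phi(\rho_{AB}))=S(\rho_{AB}^{\mathrm{diag}})-S(\Phi(\rho_{AB}))=H(\mathcal{Z}_A\mathcal{Z}_B)-S(\Phi(\rho_{AB}))$; identify $S(\Phi(\rho_{AB}))$ with $S(PE)$ block by block, each parity block of $\Phi(\rho_{AB})$ having the same nonzero spectrum as the Gram matrix of Eve's conditional states in that branch; identify $H(\mathcal{Z}_A\mathcal{Z}_B)$ with $S(\mathcal{Z}_A\mathcal{Z}_BPE)$ (the conditional states of $PE$ are rank one in this purification); and collapse $S(\mathcal{Z}_A\mathcal{Z}_B|PE)=S(\mathcal{Z}_A|PE)$ since $\mathcal{Z}_B$ is determined by $\mathcal{Z}_A$ and $P$. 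Discarding $P$ then gives the pointwise bound $C(\Phi(\rho_{AB}))=S(\mathcal{Z}_A|PE)\le S(\mathcal{Z}_A|E)$, which already establishes achievability of Eq.~\eqref{Eq:KR}.

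For the matching direction I would exploit that the parameter-estimation operators of the protocols in question---the parity projectors of Eqs.~\eqref{Eq:Zprojector}--\eqref{Eq:XYprojector}---are $\Phi$-invariant, so $\mathbf{S}$ is closed under $\rho_{AB}\mapsto\Phi(\rho_{AB})$. If $\sigma$ is already block-diagonal across the two parity sectors, the Schmidt vectors of its purification on Eve's side group by parity and are mutually orthogonal, so Eve learns the parity ``for free'' and $S(\mathcal{Z}_A|E)_{\sigma}=S(\mathcal{Z}_A|PE)_{\sigma}=C(\sigma)$. Applying this to $\sigma=\Phi(\rho^{\ast})$ with $\rho^{\ast}$ the minimiser of $C(\Phi(\cdot))$ over $\mathbf{S}$ gives $\min_{\mathbf{S}}S(\mathcal{Z}_A|E)\le C(\Phi(\rho^{\ast}))=\min_{\mathbf{S}}C(\Phi(\cdot))$, which with the pointwise bound yields Eq.~\eqref{Eq:KR}. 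I expect the main obstacle to lie not in the coherence bookkeeping---which is essentially Eq.~\eqref{Eq:randomcoherence} plus the Gram-matrix observation---but in making the collective-attack/Devetak--Winter step fully rigorous: commuting the i.i.d.\ reduction and the minimisation over $\mathbf{S}$ with the asymptotic limit, controlling finite-statistics fluctuations in parameter estimation, and checking that the reconciliation leakage is cleanly $H(\mathcal{Z}_A|\mathcal{Z}_B)$ and decouples from the coherence term.
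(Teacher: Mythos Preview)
Your argument is correct but takes a genuinely different route from the paper. The paper does \emph{not} invoke Devetak--Winter at all in the proof of Theorem~\ref{Th:keyrate}; instead it constructs an equivalent \emph{virtual protocol} in the Shor--Preskill spirit, in which quantum bit-error correction (the parity projection $\{\Pi^{+},\Pi^{-}\}$ followed by a $\sigma_x$ flip) is performed \emph{before} the $Z$-basis measurement. After this step the state is a mixture of $\rho_{AB}^{+}$ and $\sigma_x^{B}\rho_{AB}^{-}\sigma_x^{B}$, and the paper applies the randomness--coherence identity of Eq.~\eqref{Eq:randomcoherence} separately to each branch, then uses additivity of $C$ on orthogonal subspaces to collapse the sum to $C(\Phi(\rho_{AB}))$. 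Your route instead takes DW as a black box and identifies the privacy term via $C(\Phi(\rho_{AB}))=S(\mathcal{Z}_A|PE)$ using the Stinespring dilation of $\Phi$; this is essentially the observation the paper records in Appendix~\ref{Ap:Winter} (Eq.~\eqref{Eq:DWcoh}), reorganised into a proof of the theorem. Two trade-offs are worth noting. First, your ``matching direction'' needs $\mathbf{S}$ to be closed under $\Phi$, which holds for the parity-type constraints of BB84 and six-state but is an extra hypothesis on the parameter-estimation POVMs; the paper's virtual-protocol argument does not need it because it computes the rate of a specific (parity-revealing) protocol directly. Second, the paper emphasises that its derivation decouples privacy amplification from information reconciliation and therefore extends beyond one-way postprocessing, whereas your DW-based argument is tied to one-way reconciliation. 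On the other hand, your approach makes transparent that, under the closure assumption, the coherence formula is not merely an achievable lower bound but exactly the Devetak--Winter optimum.
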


Note that $I_{ec}$ in Eq.~\eqref{Eq:KR}, which accounts for the private key consumed in the information reconciliation step, can be directly estimated by the measurement statistics from parameter estimation. Sometimes parameter estimation is not even needed here as long as an error verification step is applied after information reconciliation~\cite{Fung2010Finite}. Thus, the minimization is only on the first term that quantifies the security of the key by quantum coherence. Without loss of generality, in the following analysis, we employ the one-way information reconciliation scheme such that $I_{ec}=H(\mathcal{Z}_A|\mathcal{Z}_B)$ \footnote{Here, one-way information reconciliation means that in the post-processing stage, Alice (or Bob) could determine the final key from her (or his) sifted key directly. For example, one can use Alice's sifted key after privacy-amplification hashing as the final key.}. Here, the Shannon entropy of a random variable $a$ and the conditional entropy of two random variables $a$ and $b$, are denoted by $H(a)=-\sum_a q(a)\log_2 q(a)$ and $H(a|b)=H(ab)-H(b)$, respectively, where $q(a)$ is the underlying probability distribution and $H(e)=-e\log_2 e-(1-e)\log_2(1-e)$. We need to emphasize that our result can be applied to more general postprocessing protocols, e.g., two-way classical-communication protocol~\cite{gottesman2003proof}. This is possible because our framework entirely decouples the privacy amplification step from the information reconciliation step. In the following proof, we show an equivalent virtual protocol which employs quantum bit error correction that commutes with the $Z$-basis measurement. This follows the same spirit of the Lo-Chau and Shor-Preskill proofs for the BB84 protocol~\cite{Lo1999unconditional,shor2000simple}.

\begin{proof}
Let $K(\rho_{AB})$ denote the key rate when the shared quantum state is known to be $\rho_{AB}$. To estimate the secret key rate $K$, one should consider \emph{the worst case} of $\rho_{AB}\in\mathcal{S}$, i.e.
\begin{equation}\label{Eq:KM}
K=\min_{\rho_{AB}\in\mathcal{S}}K(\rho_{AB}),
\end{equation}
where $\mathcal{S}$ is the set of quantum states $\rho_{AB}$ that is consistent with the measurement statistics obtained in the parameter estimation step, as defined in Eq.~\eqref{Eq:SetS}.

First, we consider an equivalent virtual protocol, where Alice and Bob perform the information reconciliation before the $Z$-basis measurement, i.e., steps (iii) and (iv) in Box~1 are exchanged. Then, step (iii) and step (iv) are replaced by

\begin{enumerate}
\item[(iii')]
With the $Z$-basis measurement results obtained in parameter estimation, Alice and Bob perform quantum bit error correction on the $n$ copies of $\rho_{AB}$, which is equivalent to applying a global $Z$-basis parity projector $\{\Pi^+,\Pi^-\}$  on the joint state. Then, Alice (or Bob) applies the $\sigma_x=\ket{0}\bra{1}+\ket{1}\bra{0}$ to rotate all the joint states to the subspace $\Pi^+$.

\item[(iv')]
Alice and Bob perform the $Z$-basis measurement on the error corrected state to generate the identical key.
\end{enumerate}

Note that the quantum bit error correction in step (iii') commutes with the $Z$-basis measurement, since all operations are performed in the $Z$ basis. Thus, this virtual protocol is equivalent to the one shown in Box~1. The quantum bit error correction in the virtual protocol can be realized with pre-shared $n H(\mathcal{Z}_A|\mathcal{Z}_B)$  Einstein-Podolsky-Rosen (EPR) pairs. In the original protocol, the amount of key cost is given by the conditional entropy $H(\mathcal{Z}_A|\mathcal{Z}_B)$. This step is essentially classical. See Appendix \ref{App:QEC} for more detailed discussions. We define the bit error rate $e_b=\mathrm{Tr}(\Pi^-\rho_{AB})$, and
\begin{equation}\label{eq:ECpart}
H(\mathcal{Z}_A|\mathcal{Z}_B)\le H(e_b),
\end{equation}
where the equality holds for the symmetric case.

After the quantum bit error correction step (iii'), the original $\rho_{AB}^{\otimes n}$ is transformed to $n(1-e_b)$ copies of $\rho_{AB}^+=\Pi^+\rho_{AB}\Pi^+/\mathrm{Tr}(\Pi^+\rho_{AB})$ and $ne_b$ copies of $\sigma_x^B\rho_{AB}^-\sigma_x^B$, with $\rho_{AB}^-=\Pi^-\rho_{AB}\Pi^-/\mathrm{Tr}(\Pi^-\rho_{AB})$. In step (iv'), when Alice and Bob measure these states in the $Z$ basis, they will get identical keys.


To perform the privacy amplification in step (v), one essentially needs to characterize the amount of intrinsic randomness in the error corrected keys that are unpredictable to Eve. Thus the key rate shows
\begin{equation}\label{Eq:K1}
K(\rho_{AB})=\frac1n\left\{n(1-e_b)R(\rho_{AB}^+)+ne_b R(\sigma_x^B\rho_{AB}^-\sigma_x^B)-nH(\mathcal{Z}_A|\mathcal{Z}_B)\right\}.
\end{equation}
Recall the relation between intrinsic randomness and coherence shown in Eq.~\eqref{Eq:randomcoherence},
\begin{equation}\label{Eq:RC}
R(\rho)=C(\rho),
\end{equation}
where the reference basis of relative entropy of coherence $C$ coincides with the basis $Z_A\otimes Z_B$. Then we have
\begin{equation}\label{Eq:3K12}
\begin{aligned}
K(\rho_{AB})&=\frac1n\left\{n(1-e_b)C(\rho_{AB}^+)+ne_b C(\sigma_x^B\rho_{AB}^-\sigma_x^B)-nH(\mathcal{Z}_A|\mathcal{Z}_B)\right\} \\
&=(1-e_b)C(\rho_{AB}^+)+e_b C(\rho_{AB}^-)-H(\mathcal{Z}_A|\mathcal{Z}_B) \\
&=C((1-e_b)\rho_{AB}^++e_b\rho_{AB}^-)-H(\mathcal{Z}_A|\mathcal{Z}_B) \\
&=C(\Phi(\rho_{AB}))-H(\mathcal{Z}_A|\mathcal{Z}_B),
\end{aligned}
\end{equation}
where the third equality employs the additivity property of coherence and the Hilbert space of $\rho_{AB}^+$ and $\rho_{AB}^-$ are orthogonal~\cite{yu2016alternative}. Inserting Eq.~\eqref{Eq:3K12} into Eq.~\eqref{Eq:KM}, one obtains Eq.~\eqref{Eq:KR}.
\end{proof}

Note that in the symmetric case, where the bit value of the raw key is evenly distributed, the information reconciliation part is given by Eq.~\eqref{eq:ECpart} with equality, then the key rate formula can be further written as,
\begin{equation}\label{Eq:KReb}
K = \min_{\rho_{AB}\in\mathcal{S}}C(\Phi(\rho_{AB}))-H(e_b).
\end{equation}
In general, for the asymmetric case, $H(e_b)\geq H(\mathcal{Z}_A|\mathcal{Z}_B)$ on account of Fano's inequality.

\section{Key rates of BB84 and six-state QKD}\label{Sec:case}
As examples, we apply the framework to the security analysis of the BB84 and six-state QKD protocols in the collective-attack scenario. One can see that the secret key rates of these two protocols can be directly derived with the tools well developed within the resource theory of coherence~\cite{streltsov2017coherence}. We list the results of these two re-derivations as the following corollaries. Note that these two protocols only differ from each other on the measurement $\{\Gamma_i\}$ performed in parameter estimation. For simplicity, we consider the symmetric case, where Eq.~\eqref{Eq:KReb} holds.

\subsection{BB84 protocol}
Consider the entanglement-based version of BB84 protocol, where in parameter estimation, Alice and Bob obtain the bit error rate $e_b=\mathrm{Tr}(\Pi^-\rho_{AB})$ and the phase error rate $e_p=\mathrm{Tr}(\Pi^-_x\rho_{AB})$. Then following Theorem \ref{Th:keyrate}, we have the following corollary.

\begin{corollary}\label{Cr:BB84}
The key rate of the BB84 QKD protocol $K_{BB84}$ is given by
\begin{equation} \label{Eq:reBB84}
\begin{aligned}
K_{BB84} &= \min_{\rho_{AB}\in\mathcal{S}}C(\Phi(\rho_{AB}))-H(e_b) \\
&= 1- H(e_p)-H(e_b),
\end{aligned}
\end{equation}
where $\mathcal{S}$ contains all the states yielding the same bit error rate $e_b$ and phase error rate $e_p$ obtained from parameter estimation.
\end{corollary}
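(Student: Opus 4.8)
The plan is to evaluate the minimization $\min_{\rho_{AB}\in\mathcal{S}}C(\Phi(\rho_{AB}))$ appearing in Theorem~\ref{Th:keyrate} and to show it equals $1-H(e_p)$; since we are in the symmetric case, $I_{ec}=H(e_b)$ by Eq.~\eqref{Eq:KReb}, so the claimed formula $K_{BB84}=1-H(e_p)-H(e_b)$ follows immediately. There are two halves: a lower bound $C(\Phi(\rho_{AB}))\ge 1-H(e_p)$ valid for every $\rho_{AB}\in\mathcal{S}$, and a state in $\mathcal{S}$ that saturates it.

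For the lower bound I would first reduce to Bell-diagonal states by a symmetrization (twirling) argument, in the spirit of the Lo--Chau and Shor--Preskill proofs. Let $\mathcal{T}(\cdot)=\tfrac14\big[(\cdot)+(\sigma_x\otimes\sigma_x)(\cdot)(\sigma_x\otimes\sigma_x)+(\sigma_y\otimes\sigma_y)(\cdot)(\sigma_y\otimes\sigma_y)+(\sigma_z\otimes\sigma_z)(\cdot)(\sigma_z\otimes\sigma_z)\big]$, whose output is always diagonal in the Bell basis $\{\ket{\Phi^\pm},\ket{\Psi^\pm}\}$. The three nontrivial elements act on the $Z_A\otimes Z_B$ basis as permutations up to phases, so $\mathcal{T}$ is a convex mixture of incoherent unitaries, hence an incoherent channel, and the relative entropy of coherence cannot increase under it. Moreover $\Pi^\pm$ and $\Pi^\pm_x$ commute with all three Pauli pairs, so $\mathcal{T}$ leaves the estimated quantities $e_b=\mathrm{Tr}(\Pi^-\rho_{AB})$ and $e_p=\mathrm{Tr}(\Pi^-_x\rho_{AB})$ invariant; and since $\{I,\sigma_z\otimes\sigma_z\}$ is a subgroup of the twirl group, $\mathcal{T}\circ\Phi=\mathcal{T}$. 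Hence for every $\rho_{AB}\in\mathcal{S}$,
\[
C(\Phi(\rho_{AB}))\ \ge\ C\big(\mathcal{T}(\Phi(\rho_{AB}))\big)\ =\ C\big(\mathcal{T}(\rho_{AB})\big),
\]
with $\mathcal{T}(\rho_{AB})$ Bell-diagonal and sharing the same pair $(e_b,e_p)$.

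Next I would compute $C$ on a Bell-diagonal state $\sigma=\sum_\mu\lambda_\mu\ketbra{B_\mu}{B_\mu}$. Its $Z_A\otimes Z_B$ diagonal has eigenvalues $\{\tfrac{1-e_b}{2},\tfrac{1-e_b}{2},\tfrac{e_b}{2},\tfrac{e_b}{2}\}$ with $e_b=\lambda_{\Psi^+}+\lambda_{\Psi^-}$, so $S(\sigma^{\textrm{diag}})=1+H(e_b)$, while $S(\sigma)=H(\{\lambda_\mu\})$. Since $\Pi^-_x$ restricted to the Bell basis equals $\ketbra{\Phi^-}{\Phi^-}+\ketbra{\Psi^-}{\Psi^-}$, one has $e_p=\lambda_{\Phi^-}+\lambda_{\Psi^-}$, i.e.\ $\lambda_{\Phi^+}+\lambda_{\Psi^+}=1-e_p$. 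Grouping the four weights by $Z$-parity and applying the chain rule for Shannon entropy,
\[
C(\sigma)=1-(1-e_b)\,H\!\Big(\tfrac{\lambda_{\Phi^+}}{1-e_b}\Big)-e_b\,H\!\Big(\tfrac{\lambda_{\Psi^+}}{e_b}\Big)\ \ge\ 1-H(\lambda_{\Phi^+}+\lambda_{\Psi^+})=1-H(e_p),
\]
the inequality being Jensen's inequality for the concave binary entropy $H$. Together with the reduction above, $\min_{\rho_{AB}\in\mathcal{S}}C(\Phi(\rho_{AB}))\ge 1-H(e_p)$.

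Finally I would verify tightness with the ``independent bit and phase error'' state $\sigma^\star=(1-e_b)(1-e_p)\ketbra{\Phi^+}{\Phi^+}+(1-e_b)e_p\ketbra{\Phi^-}{\Phi^-}+e_b(1-e_p)\ketbra{\Psi^+}{\Psi^+}+e_b e_p\ketbra{\Psi^-}{\Psi^-}$: it satisfies $\mathrm{Tr}(\Pi^-\sigma^\star)=e_b$ and $\mathrm{Tr}(\Pi^-_x\sigma^\star)=e_p$, so $\sigma^\star\in\mathcal{S}$; it obeys $\Phi(\sigma^\star)=\sigma^\star$; and the Jensen step is saturated because $\lambda_{\Phi^+}/(1-e_b)=\lambda_{\Psi^+}/e_b=1-e_p$, giving $C(\Phi(\sigma^\star))=1-H(e_p)$. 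Hence the minimum equals $1-H(e_p)$ and Theorem~\ref{Th:keyrate} yields $K_{BB84}=1-H(e_p)-H(e_b)$. The step I expect to require the most care is the symmetrization: one must use exactly the twirl group $\{I,\sigma_x\otimes\sigma_x,\sigma_y\otimes\sigma_y,\sigma_z\otimes\sigma_z\}$, check that its nontrivial elements are incoherent permutations of the $Z_A\otimes Z_B$ basis (so that $C$ does not increase) and that $\Pi^-$, $\Pi^-_x$ are fixed by it (so that $\mathcal{S}$ is preserved); everything after that is a short entropy computation.
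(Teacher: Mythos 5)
Your proof is correct, but it follows a genuinely different route from the paper's. The paper bounds $C(\Phi(\rho_{AB}))$ directly via the entropic uncertainty relation for the mutually unbiased bases $Z_A\otimes Z_B$ and $X_A\otimes X_B$, i.e.\ $C_{Z_{AB}}(\rho)\geq 2-H(\Delta_{X_{AB}}(\rho))$, and then computes the $X$-basis distribution of the partially dephased state $\Phi(\rho_{AB})$ to show $H(\Delta_{X_{AB}}(\Phi(\rho_{AB})))=1+H(e_p)$; you instead symmetrize with the Bell twirl $\{I,\sigma_x\otimes\sigma_x,\sigma_y\otimes\sigma_y,\sigma_z\otimes\sigma_z\}$, reduce to Bell-diagonal states in $\mathcal{S}$ (using that the twirl is a mixture of incoherent unitaries, fixes $\Pi^-$ and $\Pi^-_x$, and absorbs $\Phi$ because $\Phi$ is exactly the $\{I,\sigma_z\otimes\sigma_z\}$ twirl, as $\sigma_z\otimes\sigma_z=\Pi^+-\Pi^-$), then evaluate $C$ on Bell-diagonal states with the Shannon chain rule and finish with Jensen's inequality. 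Both arguments identify the same extremal Bell-diagonal state with weights $\{(1-e_b)(1-e_p),(1-e_b)e_p,e_b(1-e_p),e_be_p\}$. Your twirling route is more elementary and self-contained (only convexity of $C$ under incoherent unitaries plus a short entropy computation) and it mirrors the Lo--Chau/Shor--Preskill symmetrization, with the Jensen step making the saturation condition transparent; the paper's uncertainty-relation route avoids any reduction of the state set, stays coordinate-free, and is the tool the authors reuse elsewhere (the six-state corollary and the fine-grained improvements), where a twirl would destroy precisely the asymmetries being exploited. Two small points worth one line each in a polished write-up: verify explicitly that $\tfrac12\left(\rho+(\sigma_z\otimes\sigma_z)\rho(\sigma_z\otimes\sigma_z)\right)=\Pi^+\rho\Pi^++\Pi^-\rho\Pi^-$, which is what licenses $\mathcal{T}\circ\Phi=\mathcal{T}$, and handle the degenerate cases $e_b\in\{0,1\}$ where the conditional terms in the chain rule carry zero weight.
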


The result is consistent with the one from the Shor-Preskill security proof~\cite{shor2000simple}. We prove this corollary by first showing that $K(\rho_{AB})\geq K_{BB84}$ for all $\rho_{AB}\in \mathcal{S}$, and then giving a specific state in this set to saturate the inequality.

\begin{proof}
First note that the four eigenstates of $Z_A\otimes Z_B$ and $X_A\otimes X_B$ are a pair of mutually unbiased bases in the four-dimensional system $H_A^2\otimes H_B^2$. Denote $\Delta_{Z _{AB}}$ ($\Delta_{X _{AB}}$) to be the projective measurement outcome on the $Z_A\otimes Z_B$ ($X_A\otimes X_B$) basis. Due to the entropy uncertainty relation~\cite{maassen1988generalized,berta2010uncertainty}, for any state $\rho$, we have
\begin{equation}\label{}
H(\Delta_{Z _{AB}}(\rho))+H(\Delta_{X _{AB}}(\rho))\geq 2+S(\rho).
\end{equation}
Hence the relative entropy of coherence in the $Z$ basis satisfies~\cite{Ma2019Coherence}
\begin{equation}\label{Eq:coh:uncertain}
C_{Z _{AB}}(\rho)=H(\Delta_{Z _{AB}}(\rho))-S(\rho)\geq 2-H(\Delta_{X _{AB}}(\rho)).
\end{equation}
Denoting the rank-2 projective measurement $\{\Pi^+_x,\Pi^-_x\}$ outcomes by $\Delta_{XX}$, one has the key rate in Eq.~\eqref{Eq:KReb},
\begin{equation} \label{Eq:K12}
\begin{aligned}
K_{BB84}(\rho_{AB})&=C(\Phi(\rho_{AB}))-H(e_b) \\
&\geq 2-H(\Delta_{X _{AB}}(\Phi(\rho_{AB})))-H(e_b) \\
&=1-H(\Delta_{XX}(\Phi(\rho_{AB})))-H(e_b), \\
&=1-H(e_p)-H(e_b)
\end{aligned}
\end{equation}
where Eq.~\eqref{Eq:coh:uncertain} is applied for state $\Phi(\rho_{AB})$ in the second line. The third line holds due to the following reason.
For the state $\Phi(\rho_{AB})$ which is the partially dephased state on $\Pi^{+}$ and $\Pi^{-}$ subspaces, it satisfies,
\begin{equation}
\begin{aligned}
\bra{++}\Phi(\rho_{AB})\ket{++}&=\bra{--}\Phi(\rho_{AB})\ket{--}=\frac{1-e_p}{2},\\
\bra{+-}\Phi(\rho_{AB})\ket{+-}&=\bra{-+}\Phi(\rho_{AB})\ket{-+}=\frac{e_p}{2}.
\end{aligned}
\end{equation}
That is, it has equal probabilities inside each of the rank-2 projectors of the $X$ basis, thus $H(\Delta_{X _{AB}}(\Phi(\rho_{AB})))=1+H(\Delta_{XX}(\Phi(\rho_{AB})))=1+H(e_p)$.

Finally, one can see that the Bell diagonal state, with probabilities on $\{\ket{\Phi^+}, \ket{\Phi^-}, \ket{\Psi^+}, \ket{\Psi^-}\}$ being $\{(1-e_b)(1-e_p),(1-e_b)e_p, e_b(1-e_p), e_be_p\}$, reaches the minimal key rate $K_{BB84}$ in the state set $\mathcal{S}$.
\end{proof}

\subsection{Six-state protocol}
Consider the entanglement-based six-state protocol, where in parameter estimation, Alice and Bob perform the measurement in the $X$, $Y$ and $Z$ basis respectively. Then, they estimate the error in these three basis $e_x=e_p$, $e_y=\mathrm{Tr}(\Pi^-_y\rho_{AB})$, and $e_z=e_b$. Hence we have three parameters $e_x, e_y$ and $e_z$ to constrain the state $\rho_{AB}$.

Suppose that the diagonal terms of $\rho_{AB}$, when represented in the Bell diagonal basis $\{\ket{\Phi^+}, \ket{\Phi^-}, \ket{\Psi^+}, \ket{\Psi^-}\}$, is $\{p_0, p_1, p_2, p_3\}$ with $p_i\geq 0$ and $\sum_i p_i=1$. Note that these $p_i$ are directly related to the estimated error rates, i.e.,
\begin{eqnarray}
e_x &= p_1 + p_3,\label{Eq:six:ex}\\
e_y &= p_1 + p_2,\label{Eq:six:ey}\\
e_z &= p_2 + p_3.\label{Eq:six:ez}
\end{eqnarray}
Then following Theorem \ref{Th:keyrate}, we have the following corollary.

\begin{corollary}\label{Cr:six}
The key rate of the six-state QKD protocol $K_{six}$ is given by,
\begin{equation} \label{Eq:resix}
\begin{aligned}
K_{six} &= \min_{\rho_{AB}\in\mathcal{S}}C(\Phi(\rho_{AB}))-H(e_b) \\
&= 1-H(\{p_i\}),
\end{aligned}
\end{equation}
where $\mathcal{S}$ contains all the states yielding the same error rates $e_x$,  $e_y$, and $e_z$ obtained from parameter estimation.
\end{corollary}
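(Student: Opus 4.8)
The plan is to mirror the structure of the proof of Corollary~\ref{Cr:BB84}: first show that $C(\Phi(\rho_{AB}))\ge 1+H(e_z)-H(\{p_i\})$ for every $\rho_{AB}\in\mathcal{S}$, and then exhibit a state in $\mathcal{S}$ that saturates this inequality; subtracting $H(e_b)=H(e_z)$ from the resulting minimum in Eq.~\eqref{Eq:KReb} then gives $K_{six}=1-H(\{p_i\})$. The first observation is that Eqs.~\eqref{Eq:six:ex}--\eqref{Eq:six:ez} together with $\sum_i p_i=1$ form an invertible linear system, so the triple $(e_x,e_y,e_z)$ determines the Bell populations $\{p_0,p_1,p_2,p_3\}$ uniquely; consequently $\mathcal{S}$ is exactly the set of two-qubit states whose diagonal in the Bell basis $\{\ket{\Phi^+},\ket{\Phi^-},\ket{\Psi^+},\ket{\Psi^-}\}$ equals $\{p_0,p_1,p_2,p_3\}$.

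For the lower bound I would run a symmetrization (Pauli-twirl) argument. Consider the channel $\mathcal{T}(\rho)=\tfrac14\sum_{P\in\{I,\sigma_x,\sigma_y,\sigma_z\}}(P\otimes P)\,\rho\,(P\otimes P)$. The four operators $\{I\otimes I,\sigma_x\otimes\sigma_x,\sigma_y\otimes\sigma_y,\sigma_z\otimes\sigma_z\}$ commute and share the Bell basis as a common eigenbasis; by orthogonality of their eigenvalue assignments on the four Bell states, $\mathcal{T}$ kills every Bell off-diagonal entry and preserves every Bell population, so $\mathcal{T}(\rho_{AB})$ is the Bell-diagonal state $\rho^{\mathrm{Bell}}$ with populations $\{p_i\}$, and in particular $\mathcal{T}(\rho_{AB})\in\mathcal{S}$. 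Moreover each $P\otimes P$ maps every $Z_A\otimes Z_B$ basis vector to a basis vector up to a phase, i.e.\ it is an incoherent unitary, so it leaves the relative entropy of coherence of Eq.~\eqref{eq:relcoherence} invariant; and each $P\otimes P$ preserves the two $Z$-parity subspaces defined in Eq.~\eqref{Eq:Zprojector}, hence commutes with the partial dephasing $\Phi$ of Eq.~\eqref{Eq:ParDphase}, giving $\mathcal{T}\circ\Phi=\Phi\circ\mathcal{T}$. Using convexity of the relative entropy of coherence together with these two facts,
\begin{equation}
C(\Phi(\rho^{\mathrm{Bell}}))=C\big(\mathcal{T}(\Phi(\rho_{AB}))\big)\le \frac14\sum_{P}C\big((P\otimes P)\,\Phi(\rho_{AB})\,(P\otimes P)\big)=C(\Phi(\rho_{AB})),
\end{equation}
so the minimum of $C(\Phi(\cdot))$ over $\mathcal{S}$ is attained at the Bell-diagonal state $\rho^{\mathrm{Bell}}$.

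It then remains to evaluate the bound at $\rho^{\mathrm{Bell}}$. Because the Bell states are eigenstates of $Z_A\otimes Z_B$, $\rho^{\mathrm{Bell}}$ carries no coherence between the subspaces $\Pi^+$ and $\Pi^-$, so $\Phi(\rho^{\mathrm{Bell}})=\rho^{\mathrm{Bell}}$ and $C(\Phi(\rho^{\mathrm{Bell}}))=S\big((\rho^{\mathrm{Bell}})^{\mathrm{diag}}\big)-S(\rho^{\mathrm{Bell}})$; a direct computation gives $S(\rho^{\mathrm{Bell}})=H(\{p_i\})$ and, since the $Z_A\otimes Z_B$-diagonal of $\rho^{\mathrm{Bell}}$ is $\{\tfrac{1-e_z}{2},\tfrac{e_z}{2},\tfrac{e_z}{2},\tfrac{1-e_z}{2}\}$, $S\big((\rho^{\mathrm{Bell}})^{\mathrm{diag}}\big)=1+H(e_z)$, whence $\min_{\rho_{AB}\in\mathcal{S}}C(\Phi(\rho_{AB}))=1+H(e_z)-H(\{p_i\})$ and $K_{six}=1-H(\{p_i\})$. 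The part that needs to be checked with care is the pair of structural properties of $\mathcal{T}$ (that it commutes with $\Phi$ and does not increase $C(\Phi(\cdot))$), which reduce to the common-eigenbasis and parity-preservation facts for $\{\sigma_x\otimes\sigma_x,\sigma_y\otimes\sigma_y,\sigma_z\otimes\sigma_z\}$; the rest is bookkeeping. As an alternative to the twirl, one can obtain the same lower bound exactly in the spirit of Corollary~\ref{Cr:BB84} by decomposing $\Phi(\rho_{AB})$ into its two parity blocks, applying the qubit entropic uncertainty relation within each block (the complementary basis of the $\Pi^+$ block being $\{\ket{\Phi^+},\ket{\Phi^-}\}$ and of the $\Pi^-$ block being $\{\ket{\Psi^+},\ket{\Psi^-}\}$, so that the block diagonals are $\{p_0,p_1\}$ and $\{p_2,p_3\}$), and recombining the two block bounds via the chain rule for Shannon entropy.
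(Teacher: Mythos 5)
Your proposal is correct, but your primary argument is genuinely different from the paper's. The paper proves the bound $K(\rho_{AB})\ge 1-H(\{p_i\})$ for every $\rho_{AB}\in\mathcal{S}$ by splitting $\Phi(\rho_{AB})$ into its two parity blocks and applying, inside each rank-2 block, the coherence uncertainty relation $C_{Z'}(\rho)\ge 1-H(\Delta_{X'}(\rho))$ with complementary bases $\{\ket{\Phi^\pm}\}$ and $\{\ket{\Psi^\pm}\}$, recombining via $H(\{p_i\})=H(e_z)+(1-e_z)H\bigl(\tfrac{p_0}{p_0+p_1}\bigr)+e_zH\bigl(\tfrac{p_2}{p_2+p_3}\bigr)$, and only then exhibits the Bell-diagonal state as the saturating case — exactly your ``alternative'' sketch. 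Your main route instead uses a Pauli twirl: since $\Pi^-,\Pi^-_x,\Pi^-_y$ are Bell-diagonal, $\mathcal{S}$ is precisely the set of states with Bell populations $\{p_i\}$, and the twirl $\mathcal{T}$ maps any such state to the Bell-diagonal $\rho^{\mathrm{Bell}}\in\mathcal{S}$; because each $P\otimes P$ is a $Z_A\otimes Z_B$-incoherent unitary preserving the parity subspaces, $\mathcal{T}$ commutes with $\Phi$ and, by convexity of the relative entropy of coherence, does not increase $C(\Phi(\cdot))$, so the minimizer is identified directly and the rate follows from the elementary computation $C(\rho^{\mathrm{Bell}})=1+H(e_z)-H(\{p_i\})$ (all these structural claims check out). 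What each approach buys: your symmetrization argument pinpoints the optimal state without invoking any uncertainty relation and is the more standard QKD-style reduction to Bell-diagonal states, generalizing easily to other symmetrized protocols; the paper's argument stays entirely within the coherence-theoretic toolbox (the uncertainty relation for coherence measures), which is the methodological point it wants to make, and yields the lower bound for all states in $\mathcal{S}$ before identifying the extremal one.
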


The result is consistent with the one from the previous security proof~\cite{hoi2001proof}. Note that the state set $\mathcal{S}$ is more restrained compared to the one in the BB84 protocol. We prove this corollary by first showing that $K(\rho_{AB})\geq K_{six}$ for all $\rho_{AB}\in \mathcal{S}$, and then giving a specific state in this set to saturate the inequality.

\begin{proof}
Considering $\sum_i p_i=1$, with Eqs.~\eqref{Eq:six:ex} to \eqref{Eq:six:ez} $\{p_i\}$ can be estimated by
\begin{equation} \label{Eq:six_para}
\begin{aligned}
p_0 &= \frac{2-e_x-e_y-e_z}{2},\\
p_1 &= \frac{e_x + e_y - e_z}{2},\\
p_2 &= \frac{e_y + e_z - e_x}{2},\\
p_3 &= \frac{e_z + e_x - e_y}{2}.
\end{aligned}
\end{equation}
Applying Eq.~\eqref{Eq:KReb}, the key rate is given by
\begin{equation} \label{Eq:six:re}
\begin{aligned}
K_{six}(\rho_{AB})&=C(\Phi(\rho_{AB}))-H(e_z) \\
&=(1-e_z)C(\rho_{AB}^+)+e_z C(\rho_{AB}^-)-H(e_z) \\
&\geq(1-e_z)\left[1-H\left(\frac{p_0}{p_0+p_1}\right)\right] +e_z\left[1-H\left(\frac{p_2}{p_2+p_3}\right)\right]-H(e_z) \\
&=1-(1-e_z)H\left(\frac{p_0}{p_0+p_1}\right)-e_zH\left(\frac{p_2}{p_2+p_3}\right)-H(e_z) \\
&=1-H(\{p_i\}),
\end{aligned}
\end{equation}
where in the last line we substitute the relation of $e_z$ in Eq.~\eqref{Eq:six:ez}. The third line can be derived as follows. For the $Z$ and $X$ bases, two mutually unbiased bases of a qubit, the uncertainty relation for coherence measures is given by~\cite{Ma2019Coherence}
\begin{equation}\label{Eq:coh_unc}
C_{Z}(\rho)=H(\Delta_{Z }(\rho))-S(\rho)\geq 1-H(\Delta_{X}(\rho)).
\end{equation}
Since $\rho_{AB}^+$ is rank-2, it can be viewed as a ``qubit" state and the corresponding $Z$ and $X$ bases are $Z'=\{\ket{00},\ket{11}\}$ and $X'=\{\ket{\Phi^+}, \ket{\Phi^-}\}$ respectively, where $\ket{\Phi^{\pm}}=(\ket{00}\pm\ket{11})/\sqrt2$. Thus, applying Eq.\eqref{Eq:coh_unc} to $\rho_{AB}^+$, one has
\begin{equation}\label{Eq:six:plus}
C_{Z'}(\rho_{AB}^+)\geq 1-H(\Delta_{X'}(\rho_{AB}^+))=1-H\left(\frac{p_0}{p_0+p_1}\right).
\end{equation}
Similarly,
\begin{equation}\label{Eq:six:minus}
C_{Z''}(\rho_{AB}^-)\geq 1-H(\Delta_{X''}(\rho_{AB}^-))=1-H\left(\frac{p_2}{p_2+p_3}\right),
\end{equation}
where $Z''$ and $X''$ bases are $\{\ket{01},\ket{10}\}$ and $\{\ket{\Psi^+}, \ket{\Psi^-}\}$ respectively, with $\ket{\Psi^{\pm}}=(\ket{01}\pm\ket{10})/\sqrt2$. Based on Eq.~\eqref{Eq:six:plus} and  Eq.~\eqref{Eq:six:minus}, we obtain the inequality in the third line of Eq.~\eqref{Eq:six:re}.

Finally, it is straightforward to verify that the Bell diagonal state with probabilities $\{p_0, p_1, p_2, p_3\}$ reaches the minimal key rate $K_{six}$ in the state set $\mathcal{S}$.
\end{proof}

\section{Improve the key rate with the framework}\label{Sec:improve}
In this section, we show that the above security proof framework allows us to improve the key rates using fine-grained parameters obtained in the measurement outcomes. Essentially, if one can acquire more information about $\rho_{AB}$ in the parameter estimation step, then the state set $\mathcal{S}$ in Eq.~\eqref{Eq:KR} will be constrained more tightly, and the key rate can be improved.

Here we point out an important observation about Theorem \ref{Th:keyrate}. In order to estimate the secret key rate generated by an unknown $\rho_{AB}$, it suffices to gain the information of $\Phi(\rho_{AB})$, rather than the full information of $\rho_{AB}$. To be more specific, Alice and Bob only need to estimate
\begin{equation}\label{Eq:DephaseD}
\Phi(\rho_{AB})=
\left(
  \begin{array}{cccc}
    m_{00} & 0 & 0 & m_{03} \\
    0 & m_{11} & m_{12} & 0 \\
    0 & m_{21} & m_{22} & 0 \\
    m_{30} & 0 & 0 & m_{33} \\
  \end{array}
\right),
\end{equation}
where $m_{ij}$ are the density matrix elements of $\rho_{AB}$ in the $Z$ basis, $\sum_i m_{ii}=1$, $m_{12}=m_{21}^*$ and $m_{03}=m_{30}^*$. The form in Eq.~\eqref{Eq:DephaseD} provides clear clues to improve the key rates. In the following two subsections, we show the refined key rates for BB84 and six-state protocols with the tools from the resource theory of coherence.

\subsection{BB84 protocol}\label{Sec:imBB84}
In the BB84 protocol, the relations between the error rates $e_b$, $e_p$ and the matrix elements of $\rho_{AB}$, as shown in Eq.\eqref{Eq:DephaseD}, are
\begin{eqnarray}
e_b&=&m_{11}+m_{22},\\
e_p&=&1/2-\re[m_{03}]-\re[m_{12}].\label{perror:BB84}
\end{eqnarray}
In parameter estimation, Alice and Bob carry out $Z_A$ and $Z_B$ measurement, respectively. Thus from the measurement results they can obtain not only the bit error rate $e_b$, but also the four diagonal elements in the $Z_A\otimes Z_B$ basis, i.e., $m_{00}$, $m_{11}$, $m_{22}$, $m_{33}$. Hence the bit error rate $e_b$ can be seen as a coarse-grained parameter from the four diagonal elements.

Based on this observation, we give the refined key rate formula for the BB84 protocol. First, let us define the following optimization problem.
\begin{problem}\label{Pro:1}
Minimize $C(\rho(a,b))$ that is subject to  $a+b=1/2-e_p$, $|a|\le \sqrt{m_{00}m_{33}}$ and $|b|\le \sqrt{m_{11}m_{22}}$ with $a, b\in \mathbb{R}$,  where $C$ is the relative entropy of coherence, and
\begin{equation}\label{Eq:Pro:1}
\rho(a,b)=
\left(
  \begin{array}{cccc}
    m_{00} & 0 & 0 & a\\
    0 & m_{11} & b & 0 \\
    0 & b & m_{22} & 0 \\
    a & 0 & 0 & m_{33} \\
  \end{array}
\right).
\end{equation}
\end{problem}
This optimization problem can be efficiently solved via simple numerical methods. In addition, when the diagonal elements satisfy $m_{00}/m_{33}=m_{11}/m_{22}$ (or $m_{00}/m_{33}=m_{22}/m_{11}$), it can be analytically solved, as shown in Lemma \ref{Lem:sym:pro1} in Appendix \ref{sym:pro1}. We have the following theorem to improve the key rate of the BB84 protocol.

\begin{theorem}\label{Th:KR:BB84Opt}
The secret key rate of the BB84 QKD protocol can be estimated via
\begin{eqnarray}\label{Eq:KR:BB84Opt}
K_{BB84}^{opt}=C(\rho(\bar{a}, \bar{b}))-H(e_b),
\end{eqnarray}
where $\{\bar{a}, \bar{b}\}$ is the solution to Problem \ref{Pro:1}.
\end{theorem}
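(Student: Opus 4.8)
The plan is to apply Theorem~\ref{Th:keyrate} (in the symmetric form Eq.~\eqref{Eq:KReb}) and reduce the minimization $\min_{\rho_{AB}\in\mathcal{S}}C(\Phi(\rho_{AB}))$ to Problem~\ref{Pro:1} by pinning down exactly what the fine-grained parameter estimation tells us about $\Phi(\rho_{AB})$. By Eq.~\eqref{Eq:DephaseD}, $\Phi(\rho_{AB})$ is block-diagonal with respect to $\Pi^+=\mathrm{span}\{\ket{00},\ket{11}\}$ and $\Pi^-=\mathrm{span}\{\ket{01},\ket{10}\}$. The $Z_A\otimes Z_B$ measurement in parameter estimation now fixes all four diagonal entries $m_{00},m_{11},m_{22},m_{33}$ (so $e_b=m_{11}+m_{22}$ is determined), while the phase error rate imposes, via Eq.~\eqref{perror:BB84}, the single linear relation $\re[m_{03}]+\re[m_{12}]=1/2-e_p$. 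Positivity of each $2\times 2$ block forces $|m_{03}|\le\sqrt{m_{00}m_{33}}$ and $|m_{12}|\le\sqrt{m_{11}m_{22}}$. Conversely, any block-diagonal matrix carrying these diagonal entries, these two magnitude bounds, and this linear constraint is itself a legitimate density matrix that is fixed by $\Phi$ and consistent with all measured data, hence lies in $\mathcal{S}$. So the image $\{\Phi(\rho_{AB}):\rho_{AB}\in\mathcal{S}\}$ is precisely this family of matrices, parametrized by $(m_{03},m_{12})\in\mathbb{C}^2$ subject to the above.

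Next I would argue the minimum of $C$ over this family is attained with real off-diagonal entries, placing us inside Problem~\ref{Pro:1}. Since $C(\rho)=S(\rho^{\mathrm{diag}})-S(\rho)$ and the diagonal part is now fixed, minimizing $C$ is the same as maximizing $S(\Phi(\rho_{AB}))=H(e_b)+(1-e_b)S(\rho_{AB}^+)+e_b S(\rho_{AB}^-)$. The eigenvalues of a $2\times 2$ positive matrix with diagonal $(p,q)$ and off-diagonal $z$ are $\tfrac{p+q}{2}\pm\sqrt{(\tfrac{p-q}{2})^2+|z|^2}$, so its entropy is a (strictly) decreasing function of $|z|$. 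Replacing $m_{03}$ by $\re[m_{03}]$ and $m_{12}$ by $\re[m_{12}]$ leaves the $e_p$ relation untouched, respects the magnitude bounds, and can only shrink $|m_{03}|,|m_{12}|$, hence can only raise $S$, i.e.\ lower $C$. Therefore the minimization reduces to minimizing $C(\rho(a,b))$ over real $a=m_{03}$, $b=m_{12}$ with $a+b=1/2-e_p$, $|a|\le\sqrt{m_{00}m_{33}}$, $|b|\le\sqrt{m_{11}m_{22}}$, which is exactly Problem~\ref{Pro:1}.

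Finally, achievability is immediate: the optimizer $\rho(\bar a,\bar b)$ of Eq.~\eqref{Eq:Pro:1} is a valid two-qubit state (trace one because $\sum_i m_{ii}=1$, positive semidefinite by the two block constraints), and it reproduces every quantity measured in the protocol --- the four $Z$-basis diagonals, $e_b$, and $e_p$ by construction --- so $\rho(\bar a,\bar b)\in\mathcal{S}$ and $\Phi(\rho(\bar a,\bar b))=\rho(\bar a,\bar b)$ saturates the bound. Combining with Eq.~\eqref{Eq:KReb} yields Eq.~\eqref{Eq:KR:BB84Opt}.

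I expect the main obstacle to be the careful characterization of the image set $\{\Phi(\rho_{AB}):\rho_{AB}\in\mathcal{S}\}$: one must check that the only constraints inherited from the (unknown) full $\rho_{AB}$ by its partially dephased version are the diagonal entries, the $e_p$ linear relation, and the two block-positivity inequalities --- in particular that there is no hidden coupling between the $\Pi^+$ and $\Pi^-$ blocks, so that the feasible region genuinely factorizes into the rectangle $|a|\le\sqrt{m_{00}m_{33}}$, $|b|\le\sqrt{m_{11}m_{22}}$ intersected with the line $a+b=1/2-e_p$. Once this is in place, the monotonicity of the $2\times2$ entropy, the reduction to real parameters, and achievability are all routine.
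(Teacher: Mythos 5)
Your proposal is correct and follows essentially the same route as the paper's proof: starting from Eq.~\eqref{Eq:KReb}, both arguments reduce the minimization over $\mathcal{S}$ to Problem~\ref{Pro:1} by showing that replacing $m_{03},m_{12}$ with their real parts can only lower $C$, and both obtain saturation because the minimizer $\rho(\bar a,\bar b)$ is itself a state in $\mathcal{S}$ fixed by $\Phi$. The only minor difference is in justifying the real-part step --- the paper invokes monotonicity of coherence under an explicit incoherent operation, while you use the equally valid direct observation that each $2\times 2$ block's entropy is decreasing in the off-diagonal magnitude with the diagonal fixed --- and your explicit characterization of the image set $\{\Phi(\rho_{AB}):\rho_{AB}\in\mathcal{S}\}$ (block positivity bounds plus the linear $e_p$ constraint) spells out a step the paper leaves implicit.
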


\begin{proof}
From Eq.~\eqref{Eq:KReb}, we need to prove that
\begin{equation} \label{Eq:BB84Opt:Der}
\begin{aligned}
K &= \min_{\rho_{AB}\in\mathcal{S}}C(\Phi(\rho_{AB}))-H(e_b) \\
&=K_{BB84}^{opt},
\end{aligned}
\end{equation}
where $\mathcal{S}$ contains all the states sharing the same diagonal elements $m_{00}$, $m_{11}$, $m_{22}$, $m_{33}$ and the phase error rate $e_p$ obtained from parameter estimation.

Define $\sigma_{AB}$ as the state by removing the imaginary parts of the off-diagonal terms in $\Phi(\rho_{AB})$,
\begin{equation}\label{Eq:sigma}
\sigma_{AB}=
\left(
  \begin{array}{cccc}
    m_{00} & 0 & 0 & \re[m_{03}] \\
    0 & m_{11} & \re[m_{12}] & 0 \\
    0 & \re[m_{21}] & m_{22} & 0 \\
    \re[m_{30}] & 0 & 0 & m_{33} \\
  \end{array}
\right).
\end{equation}
It is clear that $C(\Phi(\rho_{AB}))\ge C(\sigma_{AB})$, due to the fact that the magnitude of the off-diagonal elements is reduced. Specifically, considering a qubit density matrix,
\begin{equation}\label{}
\rho=
\left(
  \begin{array}{cc}
   \beta & |c|e^{i\varphi}\\
    |c|e^{-i\varphi}& 1-\beta  \\
  \end{array}
\right),
\end{equation}
after applying the incoherent operation $\hat{O}_r(\rho)=\frac1{2}U\rho U^\dag+\frac1{2}\rho$ with $U=\ket{0}\bra{0}+e^{2i\varphi}\ket{1}\bra{1}$, one can get,
\begin{equation}\label{}
\hat{O}_r(\rho)=
\left(
  \begin{array}{cc}
   \beta & |c|\cos(\varphi)\\
    |c|\cos(\varphi) & 1-\beta \\
  \end{array}
\right),
\end{equation}
where the imaginary parts of the off-diagonal terms are removed. As coherence does not increase under incoherent operation, $C(\rho)\geq C(\hat{O}_r(\rho))$~\cite{Plenio2014coherence}.

Since $\Phi(\rho_{AB})$ locates in the two rank-2 subspaces $\Pi_+$ and $\Pi_-$, similarly, one can get $C(\Phi(\rho_{AB}))\ge C(\sigma_{AB})$ via applying incoherent operations on these two subspaces respectively.
As a result, for any state $\rho_{AB}\in \mathcal{S}$,
\begin{equation} \label{Eq:BB84:inq}
\begin{aligned}
K(\rho_{AB}) &=C(\Phi(\rho_{AB}))-H(e_b) \\
&\geq C(\sigma_{AB})-H(e_b) \\
&\geq \min_{\sigma_{AB}\in\mathcal{S_{\sigma}}}C(\sigma_{AB})-H(e_b) \\
&=C(\rho(\bar{a}, \bar{b}))-H(e_b),
\end{aligned}
\end{equation}
where $\mathcal{S_{\sigma}}$ consists of all the corresponding $\sigma_{AB}$ from $\Phi(\rho_{AB})$,
and the last line is on account of the definition of Problem \ref{Pro:1}. Note that $\sigma_{AB}$ is also a quantum state belonging to the state set $\mathcal{S}$, i.e., $\sigma_{AB}\in\mathcal{S}$, thus the inequality above can be saturated and we get Eq.~\eqref{Eq:BB84Opt:Der}.
\end{proof}
Now we have the following corollary.

\begin{corollary}\label{Lm:BB84ip}
For the BB84 QKD protocol, $K_{BB84}^{opt}$ in Eq.\eqref{Eq:KR:BB84Opt} generally yields a higher secret key rate than the Shor-Preskill one, $K_{BB84}$ in Eq.~\eqref{Eq:reBB84}:
\begin{equation} \label{Eq:BB84com}
K_{BB84}^{opt}\geq K_{BB84}.
\end{equation}
\end{corollary}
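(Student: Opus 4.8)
The plan is to show that the Shor--Preskill key rate $K_{BB84}$ arises from a \emph{particular} feasible choice of parameters in Problem~\ref{Pro:1}, while $K_{BB84}^{opt}$ comes from the \emph{optimal} (minimizing) choice; since both are of the form ``$C(\rho(a,b)) - H(e_b)$'' with the same $H(e_b)$ term, the inequality follows immediately from the minimization. More precisely, I would first observe that the Bell-diagonal state identified at the end of the proof of Corollary~\ref{Cr:BB84} --- the one with weights $\{(1-e_b)(1-e_p),(1-e_b)e_p,e_b(1-e_p),e_be_p\}$ on $\{\ket{\Phi^+},\ket{\Phi^-},\ket{\Psi^+},\ket{\Psi^-}\}$ --- is of the form $\rho(a,b)$ for a specific real pair $(a_0,b_0)$. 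A direct computation of its matrix in the $Z_A\otimes Z_B$ basis gives diagonal entries $m_{00}=m_{33}=(1-e_b)/2$, $m_{11}=m_{22}=e_b/2$, and off-diagonal entries $a_0 = (1-e_b)(1-2e_p)/2$, $b_0 = e_b(1-2e_p)/2$. One checks that this pair satisfies the constraints of Problem~\ref{Pro:1}: indeed $a_0+b_0 = (1-2e_p)/2 = 1/2-e_p$ as required, and $|a_0| = (1-e_b)|1-2e_p|/2 \le (1-e_b)/2 = \sqrt{m_{00}m_{33}}$, $|b_0|=e_b|1-2e_p|/2\le e_b/2=\sqrt{m_{11}m_{22}}$, so $(a_0,b_0)$ is feasible.

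Next I would evaluate $C(\rho(a_0,b_0))$. Because this state is Bell-diagonal, its relative entropy of coherence in the $Z_A\otimes Z_B$ basis is $2 - H(\{p_i\})$ where $\{p_i\}$ are the Bell weights above, and since these weights factor as a product distribution $(1-e_b,e_b)\times(1-e_p,e_p)$, we get $H(\{p_i\}) = H(e_b)+H(e_p)$, hence $C(\rho(a_0,b_0)) = 2 - H(e_b) - H(e_p)$. Wait --- I should double-check the normalization: the diagonal of a Bell-diagonal state is uniform, $m_{ii}=1/4$ only in the maximally mixed case, so here $\rho^{\mathrm{diag}}$ has entries $\{(1-e_b)/2,e_b/2,e_b/2,(1-e_b)/2\}$ and $S(\rho^{\mathrm{diag}}) = 1 + H(e_b)$, while $S(\rho) = H(\{p_i\}) = H(e_b)+H(e_p)$, so $C(\rho(a_0,b_0)) = 1 + H(e_b) - H(e_b) - H(e_p) = 1 - H(e_p)$. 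Therefore $C(\rho(a_0,b_0)) - H(e_b) = 1 - H(e_p) - H(e_b) = K_{BB84}$ by Corollary~\ref{Cr:BB84}.

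Finally, since $\{\bar a,\bar b\}$ is by definition the minimizer of $C(\rho(a,b))$ over the feasible set of Problem~\ref{Pro:1}, and $(a_0,b_0)$ is feasible, we have $C(\rho(\bar a,\bar b)) \le C(\rho(a_0,b_0))$, whence
\begin{equation}
K_{BB84}^{opt} = C(\rho(\bar a,\bar b)) - H(e_b) \le C(\rho(a_0,b_0)) - H(e_b) = K_{BB84}.
\end{equation}
Hmm --- this gives the inequality the wrong way. The point I am missing is that $\mathcal{S}$ in Theorem~\ref{Th:KR:BB84Opt} is a \emph{smaller} constraint set than $\mathcal{S}$ in Corollary~\ref{Cr:BB84}: the refined protocol fixes all four diagonals $m_{00},m_{11},m_{22},m_{33}$, not merely their coarse-grained combination $e_b$. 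So the correct comparison is not ``same feasible set, minimize harder'' but rather ``$K_{BB84}^{opt}$ optimizes over a sub-family, and $K_{BB84}$ equals the minimum over the larger family.'' Concretely, $K_{BB84} = \min$ over \emph{all} diagonals consistent with a given $e_b$ and all $(a,b)$ consistent with $e_p$, whereas $K_{BB84}^{opt}$ fixes the diagonals. Since minimizing over a larger set can only decrease the value, $K_{BB84} \le K_{BB84}^{opt}$. Thus the clean way to organize the argument is: (i) show $K_{BB84} = \min_{m_{00},m_{11},m_{22},m_{33}} K_{BB84}^{opt}(m_{00},m_{11},m_{22},m_{33})$ where the min is over diagonals with $m_{00}+m_{33}=1-e_b$, $m_{11}+m_{22}=e_b$ (this is exactly the content of unpacking Corollary~\ref{Cr:BB84}'s proof, and the minimizing diagonal is the symmetric one $m_{00}=m_{33}=(1-e_b)/2$, $m_{11}=m_{22}=e_b/2$); (ii) conclude $K_{BB84}^{opt} \ge K_{BB84}$ since the actual diagonals observed in the experiment are one particular point in that family. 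The main obstacle is therefore not any hard estimate but getting the logic of the two nested minimizations straight --- in particular verifying in step (i) that the coarse-grained Shor--Preskill rate is genuinely the minimum of the fine-grained rate over all diagonal distributions with the prescribed marginals, which one can do by noting that $C(\rho(\bar a,\bar b))$ as a function of the diagonals is minimized when the diagonal distribution inside each parity sector is as uniform as the marginal allows, reducing it to the already-established Shor--Preskill bound via the same uncertainty-relation step (Eq.~\eqref{Eq:coh:uncertain}) used in Corollary~\ref{Cr:BB84}.
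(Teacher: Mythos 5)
Your final argument is correct, and it is essentially the ``direct'' derivation that the paper itself mentions in one sentence right after the corollary (that $K_{BB84}^{opt}\geq K_{BB84}$ follows immediately from Eqs.~\eqref{Eq:reBB84} and \eqref{Eq:BB84Opt:Der} because the fine-grained constraint set is a subset of the coarse-grained one), rather than the proof the paper actually writes out. The paper's displayed proof instead applies the incoherent symmetrizing operations $\hat{O}_{12}\circ\hat{O}_{03}$ to $\sigma_{AB}$, invokes monotonicity of coherence under incoherent operations, and then evaluates the resulting balanced-diagonal instance of Problem~\ref{Pro:1} analytically via Lemma~\ref{Lem:sym:pro1} to recover $1-H(e_p)-H(e_b)$; the payoff of that route is that it exhibits exactly when the bound is tight (namely $m_{00}=m_{33}$, $m_{11}=m_{22}$, as read off from Eq.~\eqref{Eq:sigma:p}), which your argument does not give. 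Your route is shorter and needs nothing beyond Theorem~\ref{Th:KR:BB84Opt} and Corollary~\ref{Cr:BB84}: since any state with the observed diagonals $m_{00},m_{11},m_{22},m_{33}$ and phase error $e_p$ automatically has bit error $e_b=m_{11}+m_{22}$, the fine-grained set is contained in the coarse-grained set, so the minimum of $C(\Phi(\rho_{AB}))-H(e_b)$ over the former is at least the minimum over the latter, i.e.\ $K_{BB84}^{opt}\geq K_{BB84}$. Two cleanups are needed in your write-up: the opening computation (treating the symmetric Bell-diagonal state as feasible for Problem~\ref{Pro:1} with the actual, generally unbalanced, diagonals) should be deleted, since as you yourself note it conflates the two constraint sets and yields the reversed inequality; and your step (i), the claim that $K_{BB84}$ \emph{equals} the minimum of the fine-grained rate over all diagonals with the prescribed parity marginals, is stronger than what the corollary requires and is left unproven --- only the trivial direction (fine-grained min $\geq$ coarse-grained min) is needed, so you should either drop the equality claim or prove it by the paper's symmetrization-plus-Lemma~\ref{Lem:sym:pro1} argument, which is precisely what establishes it.
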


Corollary \ref{Lm:BB84ip} can be directly obtained from Eqs.~\eqref{Eq:reBB84} and Eq.~\eqref{Eq:BB84Opt:Der}. Specifically, since more parameters are utilized to constrain the state $\rho_{AB}$, the state set $\mathcal{S}$ in Eq.~\eqref{Eq:BB84Opt:Der} is the subset of the one in Eq.~\eqref{Eq:reBB84}. As a result, one has $K_{BB84}^{opt}\geq K_{BB84}$. The proof of Corollary \ref{Cr:BB84} is based on the entropy uncertainty relation. Here, we prove Corollary \ref{Lm:BB84ip} with the tools from the coherence theory~\cite{streltsov2017coherence}. In this way, one can clearly see when the inequality in Eq.~\eqref{Eq:BB84com} is saturated.
\begin{proof}
Define $\hat{O}_{ij}$ as the operation
\begin{equation}
\hat{O}_{ij}(\rho)=\frac1{2}S_{ij}\rho S_{ij}+\frac1{2}\rho,
\end{equation}
where $S_{ij}=\ketbra{i}{j}+\ketbra{j}{i}$. Then, we consider the state $\sigma_{AB}'=\hat{O}_{12}\circ \hat{O}_{03}(\sigma_{AB})$, where $\sigma_{AB}$ is defined in Eq.\eqref{Eq:sigma}. Here the labels $\{0,1,2,3\}$ represent the $Z$ basis $\{\ket{00},\ket{01}\ket{10}\ket{11}\}$ respectively. And we have $\sigma_{AB}'$,
\begin{equation}\label{Eq:sigma:p}
\sigma_{AB}'=
\left(
  \begin{array}{cccc}
  \frac{1-e_b}{2} & 0 & 0 & \re[m_{03}]\\
    0 & \frac{e_b}{2} & \re[m_{12}] & 0 \\
    0 & \re[m_{21}] & \frac{e_b}{2} & 0 \\
    \re[m_{30}] & 0 & 0 & \frac{1-e_b}{2}\\
  \end{array}
\right),
\end{equation}
where the diagonal elements of the density matrix become equal in two subspaces $\Pi^+$ and $\Pi^-$ respectively after the operation.
Clearly, $\hat{O}_{ij}$ is an incoherent operation, thus the coherence of $\sigma_{AB}'$ is not larger than that of $\sigma_{AB}$, i.e.,
\begin{equation}\label{}
C(\sigma_{AB})\ge C(\sigma_{AB}').
\end{equation}
By definition, one has
\begin{eqnarray}
K_{BB84}^{opt}&= \min_{\sigma_{AB}\in\mathcal{S_{\sigma}}}C(\sigma_{AB})-H(e_z)\\\nonumber
&\ge \min_{\sigma_{AB}\in\mathcal{S_{\sigma}'}} C(\sigma_{AB}')-H(e_z),
\end{eqnarray}
where $\mathcal{S_{\sigma}'}$ contains all the states $\sigma_{AB}'$ obtained from $\sigma_{AB}$, as shown in Eq.~\eqref{Eq:sigma:p}. In fact, the minimization in the second line is a special case of Problem \ref{Pro:1}. By applying Lemma \ref{Lem:sym:pro1} in  Appendix \ref{sym:pro1}, one can get the minimal value, $1- H(e_p)-H(e_b)$. In the end, we have $K_{BB84}^{opt}\geq K_{BB84}$.
\end{proof}

From Eq.~\eqref{Eq:sigma:p}, it is clear to see that $K_{BB84}^{opt}= K_{BB84}$ when the diagonal elements in the two subspaces $\Pi^+$ and $\Pi^-$ are balanced, i.e., $m_{00}= m_{33}$ and $m_{11}= m_{22}$. In practice, in order to achieve this improvement of the key rate, Alice and Bob need to replace the estimation of $e_b$ with more refined parameters $m_{00}$, $m_{11}$, $m_{22}$, $m_{33}$ in the parameter estimation step, then perform privacy amplification with the updated key rate formula Eq.~\eqref{Eq:KR:BB84Opt}. Note that this modification does not require extra quantum or classical communications between Alice and Bob.

\subsection{Six-state protocol}\label{Sec:im6}
Similar to the case of BB84 protocol, one can improve the key rate of six-state protocol by utilizing more refined parameters, i.e., diagonal elements $m_{00}$, $m_{11}$, $m_{22}$, $m_{33}$, instead of the coarse-grained one, $e_z$. Here, we provide the following theorem.

\begin{theorem}\label{Th:KR:sixOpt}
The secret key rate of six-state QKD protocol can be estimated via
\begin{equation}\label{Eq:KR:sixOpt}
K_{six}^{opt}= C(\tau)-H(e_z),
\end{equation}
where
\begin{equation}\label{Eq:Tau}
\tau=
\left(
  \begin{array}{cccc}
    m_{00} & 0 & 0 & (1-e_x-e_y)/2\\
    0 & m_{11} & (e_y-e_x)/2 & 0 \\
    0 & (e_y-e_x)/2 & m_{22} & 0 \\
    (1-e_x-e_y)/2 & 0 & 0 & m_{33} \\
  \end{array}
\right).
\end{equation}
\end{theorem}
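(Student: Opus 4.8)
\noindent\emph{Proof proposal.}
The plan is to run the same argument as in the proof of Theorem~\ref{Th:KR:BB84Opt}, with one crucial simplification: in the six-state protocol the extra $Y$-basis data removes all remaining freedom in the off-diagonal blocks of $\Phi(\rho_{AB})$, so that the optimization of Problem~\ref{Pro:1} degenerates to the single point $\tau$. Starting from Eq.~\eqref{Eq:KReb}, $K=\min_{\rho_{AB}\in\mathcal{S}}C(\Phi(\rho_{AB}))-H(e_z)$, here $\mathcal{S}$ is the set of states compatible with the refined parameter estimation: the four diagonal elements $m_{00},m_{11},m_{22},m_{33}$ (so that $e_z=m_{11}+m_{22}$ is fixed) together with $e_x=\mathrm{Tr}(\Pi^-_x\rho_{AB})$ and $e_y=\mathrm{Tr}(\Pi^-_y\rho_{AB})$.

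First I would reduce to real off-diagonal blocks exactly as before: for $\rho_{AB}\in\mathcal{S}$, let $\sigma_{AB}$ denote $\Phi(\rho_{AB})$ with the imaginary parts of its off-diagonal entries deleted, as in Eq.~\eqref{Eq:sigma}; applying the incoherent operation $\hat{O}_r$ on each rank-2 subspace $\Pi^{\pm}$ yields $C(\Phi(\rho_{AB}))\ge C(\sigma_{AB})$. The new ingredient is a short computation showing that both rank-2 parity projectors are block-diagonal with respect to $\{\Pi^+,\Pi^-\}$, namely $\Pi^-_x=\ketbra{\Phi^-}{\Phi^-}+\ketbra{\Psi^-}{\Psi^-}$ and $\Pi^-_y=\ketbra{\Phi^-}{\Phi^-}+\ketbra{\Psi^+}{\Psi^+}$; consequently $e_x$ and $e_y$ depend on $\rho_{AB}$ (equivalently on $\Phi(\rho_{AB})$ or on $\sigma_{AB}$, since $\Phi$ is self-adjoint and fixes these projectors) only through $\re[m_{03}]$ and $\re[m_{12}]$, via $e_x=1/2-\re[m_{03}]-\re[m_{12}]$ (which is just Eq.~\eqref{perror:BB84}) and $e_y=1/2-\re[m_{03}]+\re[m_{12}]$. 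Solving these two linear relations pins down $\re[m_{03}]=(1-e_x-e_y)/2$ and $\re[m_{12}]=(e_y-e_x)/2$. Since the diagonal of $\sigma_{AB}$ is also fixed to $(m_{00},m_{11},m_{22},m_{33})$, this forces $\sigma_{AB}=\tau$ with $\tau$ as in Eq.~\eqref{Eq:Tau}, and one checks directly that $\tau\in\mathcal{S}$.

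Combining the two steps, $K(\rho_{AB})=C(\Phi(\rho_{AB}))-H(e_z)\ge C(\sigma_{AB})-H(e_z)=C(\tau)-H(e_z)$ for every $\rho_{AB}\in\mathcal{S}$, and equality is attained at $\rho_{AB}=\tau$ because $\tau$ is already block-diagonal, so $\Phi(\tau)=\tau$ and $C(\Phi(\tau))=C(\tau)$. Hence $K=C(\tau)-H(e_z)=K^{opt}_{six}$, which is the claim. I do not expect a genuine obstacle here beyond the bookkeeping above; the only point that calls for a line of care is verifying that $\tau$ is a bona fide density matrix, which follows from positivity of the $2\times2$ principal blocks of the physical state $\rho_{AB}$, e.g.\ $(\re[m_{03}])^2\le|m_{03}|^2\le m_{00}m_{33}$ and similarly for the other block.
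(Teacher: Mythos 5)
Your proposal is correct and follows essentially the same route as the paper's proof: reduce to the real-part state $\sigma_{AB}$ via the incoherent operations on the $\Pi^{\pm}$ subspaces (as in Theorem~\ref{Th:KR:BB84Opt}), observe that the six-state constraints $e_x=1/2-\re[m_{03}]-\re[m_{12}]$, $e_y=1/2-\re[m_{03}]+\re[m_{12}]$ together with the fixed diagonal force $\sigma_{AB}=\tau$, and saturate the bound because $\tau\in\mathcal{S}$ with $\Phi(\tau)=\tau$. Your explicit projector decompositions and the positivity check of $\tau$ are fine additional bookkeeping but do not change the argument.
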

\begin{proof}
The proof is similar to that of Thereom \ref{Th:KR:BB84Opt}. We need to prove that
\begin{equation}\label{Eq:sixOpt:Der}
\begin{aligned}
K &= \min_{\rho_{AB}\in\mathcal{S}}C(\Phi(\rho_{AB}))-H(e_b)\\\nonumber
&=K_{six}^{opt},
\end{aligned}
\end{equation}
where $\mathcal{S}$ contains all the states sharing the same diagonal elements $m_{00}$, $m_{11}$, $m_{22}$, $m_{33}$ and the error rates $e_x$ and $e_y$ obtained from parameter estimation. Recall Eq.\eqref{Eq:BB84:inq},
\begin{equation} \label{Eq:T3}
\begin{aligned}
K &= \min_{\rho_{AB}\in\mathcal{S}}C(\Phi(\rho_{AB}))-H(e_b) \\
&\geq \min_{\sigma_{AB}\in\mathcal{S_{\sigma}}}C(\sigma_{AB})-H(e_z),
\end{aligned}
\end{equation}
where $\sigma_{AB}$ is defined in Eq.\eqref{Eq:sigma}. Here, $\mathcal{S_{\sigma}}=\{\tau\}$ only has one element, since terms in $\sigma_{AB}$ can all be determined by parameter estimation in the six-state protocol. Namely, one has
\begin{eqnarray}
e_x&=&1/2-\re[m_{03}]-\re[m_{12}],\\
e_y&=&1/2-\re[m_{03}]+\re[m_{12}],\\
e_z&=&m_{11}+m_{22},
\end{eqnarray}
while $m_{00}$, $m_{11}$, $m_{22}$, $m_{33}$ can be estimated with the $Z_A\otimes Z_B$ measurement. Inserting $\sigma_{AB}=\tau$ into Eq.~\eqref{Eq:T3} and noting that $\tau\in \mathcal{S}$, we obtain Eq.~\eqref{Eq:sixOpt:Der}.
\end{proof}

\begin{corollary}\label{Lm:sixip}
For the six-state QKD protocol, $K_{six}^{opt}$ in Eq.~\eqref{Eq:KR:sixOpt} generally yields a higher secret key rate than the original one, $K_{six}$ in Eq.~\eqref{Eq:resix}:
\begin{equation}
K_{six}^{opt}\geq K_{six}.
\end{equation}
\end{corollary}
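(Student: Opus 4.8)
The plan is to mirror the two-part argument used for Corollary~\ref{Lm:BB84ip}. The quick half is a set-inclusion observation: by Theorem~\ref{Th:KR:sixOpt}, $K_{six}^{opt}$ is obtained by minimizing $C(\Phi(\rho_{AB}))-H(e_z)$ over all $\rho_{AB}$ consistent with the fine-grained data $\{m_{00},m_{11},m_{22},m_{33},e_x,e_y\}$, while $K_{six}$ minimizes the same quantity over all $\rho_{AB}$ consistent only with $\{e_x,e_y,e_z\}$. Since $e_z=m_{11}+m_{22}$, the fine-grained constraint set is contained in the coarse one and the subtracted term $H(e_z)$ is identical in both formulas; minimizing over a smaller set can only raise the value, so $K_{six}^{opt}\ge K_{six}$.

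To make the comparison explicit and to see when equality holds, I would run the coherence-theoretic argument. Starting from $K_{six}^{opt}=C(\tau)-H(e_z)$ with $\tau$ as in Eq.~\eqref{Eq:Tau}, apply the incoherent operations $\hat{O}_{03}$ and $\hat{O}_{12}$ from the proof of Corollary~\ref{Lm:BB84ip}, which average the diagonal entries within each parity block and leave the off-diagonal entries untouched. Using $m_{00}+m_{33}=1-e_z$ and $m_{11}+m_{22}=e_z$, this yields the Bell-diagonal state
\begin{equation}
\tau'=\hat{O}_{12}\circ\hat{O}_{03}(\tau)=
\left(
\begin{array}{cccc}
\frac{1-e_z}{2} & 0 & 0 & \frac{1-e_x-e_y}{2}\\
0 & \frac{e_z}{2} & \frac{e_y-e_x}{2} & 0\\
0 & \frac{e_y-e_x}{2} & \frac{e_z}{2} & 0\\
\frac{1-e_x-e_y}{2} & 0 & 0 & \frac{1-e_z}{2}
\end{array}
\right),
\end{equation}
and monotonicity of the relative entropy of coherence under incoherent maps gives $C(\tau)\ge C(\tau')$.

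The last step is to evaluate $C(\tau')$. Diagonalizing the two $2\times2$ parity blocks of $\tau'$ and invoking Eq.~\eqref{Eq:six_para}, their eigenvalues are $\{p_0,p_1\}$ and $\{p_2,p_3\}$, so $\tau'$ is exactly the Bell-diagonal state with spectrum $\{p_i\}$ and $S(\tau')=H(\{p_i\})$; its $Z$-basis diagonal $\bigl(\tfrac{1-e_z}{2},\tfrac{e_z}{2},\tfrac{e_z}{2},\tfrac{1-e_z}{2}\bigr)$ has Shannon entropy $1+H(e_z)$. Hence $C(\tau')=1+H(e_z)-H(\{p_i\})$, so $C(\tau')-H(e_z)=1-H(\{p_i\})=K_{six}$ by Eq.~\eqref{Eq:resix}, and the chain $K_{six}^{opt}=C(\tau)-H(e_z)\ge C(\tau')-H(e_z)=K_{six}$ closes. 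The same bookkeeping shows the inequality is saturated in particular when the diagonal blocks of $\tau$ are already balanced, i.e. $m_{00}=m_{33}$ and $m_{11}=m_{22}$, in parallel with the BB84 case. I expect the only nontrivial point to be the eigenvalue/diagonal accounting that collapses $C(\tau')-H(e_z)$ back to $1-H(\{p_i\})$; everything else is either set inclusion or monotonicity of coherence.
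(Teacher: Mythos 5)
Your proposal is correct and follows essentially the same route as the paper: the set-inclusion remark, the incoherent maps $\hat{O}_{12}\circ\hat{O}_{03}$ producing $\tau'$, monotonicity of $C$, and the identification $C(\tau')-H(e_z)=1-H(\{p_i\})$ are exactly the paper's argument. Your explicit block-eigenvalue bookkeeping showing the spectrum of $\tau'$ is $\{p_0,p_1,p_2,p_3\}$ and its $Z$-diagonal entropy is $1+H(e_z)$ just spells out a step the paper leaves implicit.
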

Like in the BB84 case, one can obtain Corollary \ref{Lm:sixip} directly from Eqs.~\eqref{Eq:resix} and \eqref{Eq:sixOpt:Der}. Here we show a proof based on the coherence theory~\cite{streltsov2017coherence}.
\begin{proof}
Similar to the proof in Corollary \ref{Lm:BB84ip}, we apply the incoherent operation $\hat{O}$ on the state $\tau$, and get $\tau'=\hat{O}_{12}\circ \hat{O}_{03}(\tau)$, that is
\begin{equation}\label{Eq:Tau'}
\tau'=
\left(
  \begin{array}{cccc}
  \frac{1-e_z}{2} & 0 & 0 & (1-e_x-e_y)/2\\
    0 & \frac{e_z}{2} & (e_y-e_x)/2 & 0 \\
    0 & (e_y-e_x)/2 & \frac{e_z}{2} & 0 \\
    (1-e_x-e_y)/2 & 0 & 0 & \frac{1-e_z}{2}\\
  \end{array}
\right).
\end{equation}
Due to monotonicity of coherence under incoherent operation, one has
\begin{equation}
\begin{aligned}
K_{six}^{opt}&= C(\tau)-H(e_z) \\
&\ge C(\tau')-H(e_z) \\
&=1-H(\{p_i\}) \\
&=K_{six}.
\end{aligned}
\end{equation}
Here, we substitute the probabilities $p_i$ on the Bell diagonal basis for the error rates $e_x,e_y$, and $e_z$ with Eqs.~\eqref{Eq:six:ex} to \eqref{Eq:six:ez}, and calculate the coherence $C(\tau')$.
\end{proof}

From Eq.~\eqref{Eq:Tau'}, it is clear to see that $K_{six}^{opt}= K_{six}$ when the diagonal elements in the two subspaces $\Pi^+$ and $\Pi^-$ are balanced, i.e. $m_{00}= m_{33}$ and $m_{11}= m_{22}$.
In practice, to achieve this improvement of the key rate, Alice and Bob need to replace the estimation of $e_z$ with the more refined parameters $m_{00}$, $m_{11}$, $m_{22}$, $m_{33}$ in the parameter estimation step, then perform the privacy amplification with the updated key rate formula Eq.~\eqref{Eq:KR:sixOpt}. Note that this modification does not require extra quantum or classical communications between Alice and Bob.

Here, we have some remarks regarding the improvement of the key rates. In Secs.~\ref{Sec:imBB84} and \ref{Sec:im6}, we have shown that under the coherence-based framework, one can improve the key rates of BB84 and the six-state protocol with fine-grained parameters. These key rate improvements can be understood as a fine-grained estimation of coherence in  $\Phi(\rho_{AB})$. On the other hand, by utilizing other key rate formulas, such as the Devetak-Winter approach, one may also get similar improvements of the key rates by using fine-grained parameters. See Appendix \ref{Ap:Winter} for more discussions.

\subsection{Numerical simulation}
To illustrate the improvement on the security analysis via the coherence framework, we numerically compare the four key rates analyzed above in Fig.~\ref{fig:CKR}, i.e., $K_{BB84}$ in Eq.~\eqref{Eq:reBB84}, $K_{BB84}^{opt}$ in Eq.~\eqref{Eq:KR:BB84Opt}, $K_{six}$ in Eq.~\eqref{Eq:resix} and $K_{six}^{opt}$ in Eq.~\eqref{Eq:KR:sixOpt}.
Here we set typical experimental parameters for simulation, and use a parameter $\alpha$ to describe the unbalance of the diagonal elements of $\rho_{AB}$, i.e. $m_{00}/m_{33}=m_{22}/m_{11}=\frac{\alpha}{1-\alpha}$.

\begin{figure}[bht]
\centering
\resizebox{8cm}{!}{\includegraphics[scale=1]{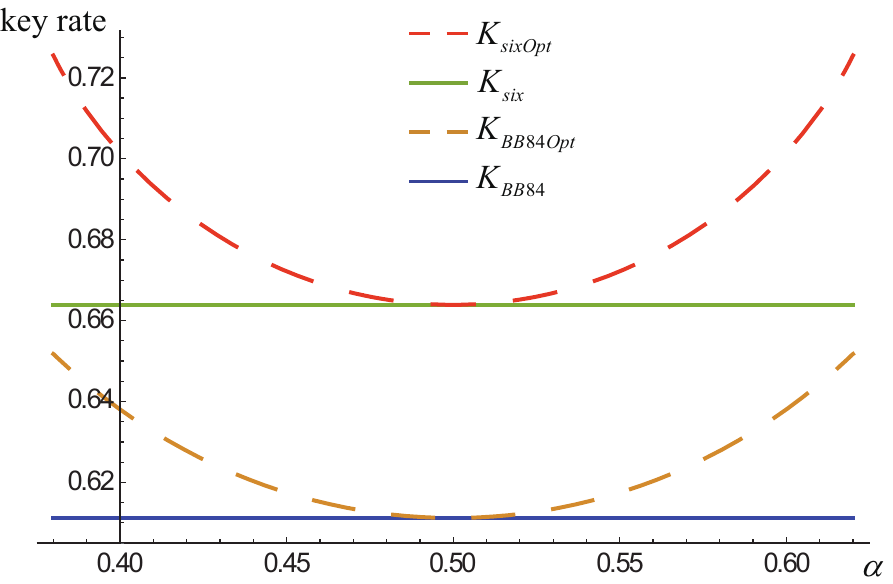}}
\caption{(Color online) Key rate comparison of different schemes. We set $e_x(e_p)=e_y=e_z(e_b)=3\%$. The parameter $\alpha$ describes the unbalance of the diagonal elements, $m_{00}/m_{33}=m_{22}/m_{11}=\frac{\alpha}{1-\alpha}$, where the region $\alpha\in[0.38,0.62]$ is considered to guarantee the non-negativity of the state $\rho_{AB}$. From top to bottom, the key rate plots are $K_{six}^{opt}$, $K_{six}$, $K_{BB84}^{opt}$, and $K_{BB84}$, respectively.}\label{fig:CKR}
\end{figure}

The numerical result shows that the coherence-based key rate of the six-state protocol enjoys the highest key rate, while the Shor-Preskill key rate of BB84 possesses the lowest key rate. As $\alpha=0.5$, that is, there is no unbalance of diagonal elements, $K_{BB84}^{opt}=K_{BB84}$ and $K_{six}^{opt}=K_{six}$; as $\alpha$ departs from $0.5$, the unbalance becomes significant and one can clearly see the improvements on the key rates.

We remark that the unbalance of the diagonal elements could happen in practical QKD scenarios. In the next section, one can see that the asymmetry of the detectors can lead to this phenomenon [see $\rho_{AB}^Z$ in Eq.~\eqref{Zmatrix:general} for an example].
\section{Practical issue:  detection efficiency mismatch}\label{Sec:mismatch}
In this section, we apply our coherence framework to QKD security analysis when considering a practical issue --- detection efficiency mismatch. Here, we focus on analyzing the BB84 protocol. We show that the key rate derived by our framework is generally higher than in the previous analyses~\cite{fung2009mismatch}.

Ideally, the two detectors which detect $\ket{0}$ and $\ket{1}$ in $Z$ basis ($\ket{+}$ and $\ket{-}$ in $X$ basis) respectively are assumed to be identical. However, in practical scenarios, there are always imperfections in the channels and detectors, which may lead to different efficiencies for $\ket{0}$ and $\ket{1}$ (or $\ket{+}$ and $\ket{-})$~\cite{zhao2008quantum}.

\subsection{Detector model}
In practice, the detection efficiency of a detector is normally related to other degrees of freedom of the input photons, such as time, space, or spectrum~\cite{fung2009mismatch}. For example, Fig.~\ref{fig:Dmis} shows the detection efficiency mismatch that is related to the temporal degree of freedom of the injected photons. Employing the analytical methods in Ref.~\cite{fung2009mismatch}, here we model the measurement by the two detectors on Bob's side by the measurement of
\begin{eqnarray}
&M_0=\eta_0\ket{0}_B\bra{0},\\
&M_1=\eta_1\ket{1}_B\bra{1},
\end{eqnarray}
where $0\leq \eta_0$, $\eta_1 \leq 1$ are the efficiencies of the two detectors. We assume $\eta_0$ and $\eta_1$ can be calibrated thus are known to Alice and Bob.

Here, we decompose $M_0$ and $M_1$ by a filtering operation $F_z$ and an ideal $Z$-basis measurement, where
\begin{equation}\label{Eq:filterZ}
F_z=\sqrt{\eta_0}\ket{0}_B\bra{0}+\sqrt{\eta_1}\ket{1}_B\bra{1}.
\end{equation}
Similarly, the measurement in the $X$ basis with the two nonidentical detectors can be decomposed by a filtering operation,
\begin{equation}
F_x=\sqrt{\eta_0}\ket{+}_B\bra{+}+\sqrt{\eta_1}\ket{-}_B\bra{-}
\end{equation}
followed by an ideal $X$-basis measurement $\{\ket{+}_B\bra{+}, \ket{-}_B\bra{-}\}$.

Under this decomposition, before the ideal $Z$-basis measurement, the state is transformed to
\begin{equation}\label{Eq:Zstate}
\rho_{AB}^Z=\frac{F_z\rho_{AB}F_z}{\mathrm{Tr}(F_z\rho_{AB} F_z)},
\end{equation}
and the obtained bit error rate is represented by
\begin{equation}\label{Eq:eb}
e_b=\mathrm{Tr}(\Pi^-\rho_{AB}^Z).
\end{equation}
Similarly, for the $X$-basis measurement, one has
\begin{equation}\label{Eq:Xstate}
\rho_{AB}^X=\frac{F_x\rho_{AB}F_x}{\mathrm{Tr}(F_x\rho_{AB} F_x)},
\end{equation}
and the obtained phase error rate is
\begin{equation}\label{Xphase:error}
e_p=\mathrm{Tr}(\Pi^-_x\rho_{AB}^X),
\end{equation}
where $\Pi^-_x=\ket{+-}\bra{+-}+\ket{-+}\bra{-+}$. We remark that $e_p$ is \emph{not} the phase error corresponding to the state measured in the $Z$ basis. That error should be
\begin{equation}
e_p'=\mathrm{Tr}(\Pi^-_x\rho_{AB}^Z).
\end{equation}
Note that the discrepancy between $e_p$ and $e_p'$ originates from the detection efficiency mismatch of the two detectors.

\begin{figure}[bht]
\centering
\resizebox{8cm}{!}{\includegraphics[scale=1]{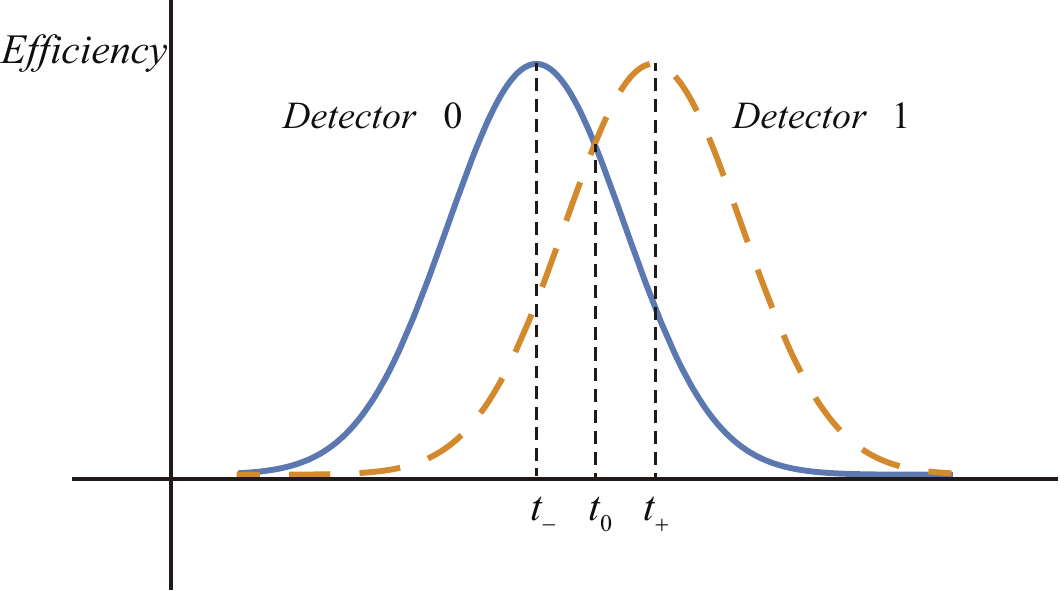}}
\caption{(Color online) Illustration of detector efficiency mismatch in the time domain. Due to the optical path difference between the two detectors, the two detectors have different detection efficiency in the time domain. If the arrival time of the signal is $t_0$, the efficiencies of the two detectors are the same. However, if the arrival time is $t_-$ or $t_+$, the efficiency of Detector 0 (the solid blue line) is higher or lower.}\label{fig:Dmis}
\end{figure}

\subsection{Derivation of the key rate}
Essentially, the task of deriving the final key rate is to estimate $e_p'$ with the knowledge of the measurement results in the $Z$ and $X$ bases.

Let us explicitly write down $\rho_{AB}^Z$ in Eq.~\eqref{Eq:Zstate},
\begin{equation}\label{Zmatrix:general}
\rho_{AB}^Z=
\frac{1}{\Gamma}\left(
  \begin{array}{cccc}
    \eta_0 m_{00} & \cdot & \cdot &\sqrt{\eta_0\eta_1}m_{03}  \\
    \cdot & \eta_1 m_{11} & \sqrt{\eta_0\eta_1}m_{12} & \cdot \\
    \cdot & \sqrt{\eta_0\eta_1}m_{21} & \eta_0 m_{22} & \cdot \\
    \sqrt{\eta_0\eta_1}m_{30} & \cdot & \cdot & \eta_1 m_{33} \\
  \end{array}
\right),
\end{equation}
where the matrix elements that are not related to the parameter estimation are represented by ``$\cdot$", and the normalization factor is
\begin{equation}\label{Z:diagnormal}
\Gamma=\eta_0 m_{00}+\eta_1 m_{11}+\eta_0 m_{22}+\eta_1 m_{33}.
\end{equation}
Employing Eq.~\eqref{perror:BB84}, one has
\begin{eqnarray}\label{epZ:epAB}
e_p'&=&1/2-\frac{\sqrt{\eta_0\eta_1}}{\Gamma}\{\re[m_{03}]+\re[m_{12}]\},\\\nonumber
&=&1/2-\frac{\sqrt{\eta_0\eta_1}}{\Gamma}(1/2-e_p''),
\end{eqnarray}
where the second line applies Eq.~\eqref{perror:BB84} for $e_p''$, and $e_p''$ is defined as
\begin{equation}\label{ep:AB}
e_p''=\mathrm{Tr}(\Pi^-_x\rho_{AB}).
\end{equation}

Actually, $\Gamma$ and $e_p''$ in Eq.~\eqref{epZ:epAB} both can be obtained from measurement results. Denote the probabilities of obtaining $\ket{00}$, $\ket{01}$, $\ket{10}$, and $\ket{11}$ when both sides are measured in the $Z$ basis by $\hat{m}_{00}, \hat{m}_{11}, \hat{m}_{22}$ and $\hat{m}_{33}$, respectively. Then from Eq.~\eqref{Zmatrix:general} one has
\begin{eqnarray}\label{Z:diag:m}
\hat{m}_{00}=\eta_0 m_{00}/\Gamma\ , \ \ \  \hat{m}_{11}=\eta_1 m_{11}/\Gamma\ ,\\\nonumber
\hat{m}_{22}=\eta_0 m_{22}/\Gamma\ , \ \ \  \hat{m}_{33}=\eta_1 m_{33}/\Gamma\ ,
\end{eqnarray}
Since $m_{00}+m_{11}+m_{22}+m_{33}=1$, $\Gamma$ can be represented as
\begin{equation}\label{Z:diagsum}
\Gamma=\frac{1}{\hat{m}_{00}/\eta_0+\hat{m}_{11}/\eta_1+\hat{m}_{22}/\eta_0+\hat{m}_{33}/\eta_1}.
\end{equation}

Similarly, one can explicitly write down $\rho_{AB}^X$ in Eq.~\eqref{Eq:Xstate} in the X basis,
\begin{equation}\label{Xmatrix:general}
\rho_{AB}^X=
\frac{1}{\Gamma'}\left(
  \begin{array}{cccc}
    \eta_0 m'_{00} & \cdot & \cdot &\sqrt{\eta_0\eta_1}m'_{03}  \\
    \cdot & \eta_1 m'_{11} & \sqrt{\eta_0\eta_1}m'_{12} & \cdot \\
    \cdot & \sqrt{\eta_0\eta_1}m'_{21} & \eta_0 m'_{22} & \cdot \\
    \sqrt{\eta_0\eta_1}m'_{30} & \cdot & \cdot & \eta_1 m'_{33} \\
  \end{array}
\right),
\end{equation}
where $m_{i,j}'$ denotes the matrix elements of $\rho_{AB}$ in the $X$ basis and the normalization factor is
\begin{equation}\label{X:diagnormal}
\Gamma'=\eta_0 m'_{00}+\eta_1 m'_{11}+\eta_0 m'_{22}+\eta_0 m'_{22}.
\end{equation}
Denote the probabilities of obtaining $\ket{++}$, $\ket{+-}$, $\ket{-+}$, and $\ket{--}$ when both sides are measured in the $X$ basis by $\hat{m}'_{00}, \hat{m}'_{11}, \hat{m}'_{22}$ and $\hat{m}'_{33}$, respectively. Then one has
\begin{eqnarray}\label{X:diag}
\hat{m}'_{00}=\eta_0 m'_{00}/\Gamma'\ , \ \ \  \hat{m}'_{11}=\eta_1 m'_{11}/\Gamma'\ ,\\\nonumber
\hat{m}'_{22}=\eta_0 m'_{22}/\Gamma'\ , \ \ \  \hat{m}'_{33}=\eta_1 m'_{33}/\Gamma'\ .
\end{eqnarray}
Similarly to $\Gamma$ in Eq.~\eqref{Z:diagsum} for the Z basis, $\Gamma'$ can be represented as
\begin{eqnarray}\label{Eq:Xdiagsum}
\Gamma'=\frac{1}{\hat{m}'_{00}/\eta_0+\hat{m}'_{11}/\eta_1+\hat{m}'_{22}/\eta_0+\hat{m}'_{33}/\eta_1}.
\end{eqnarray}
By the definition in Eq.~\eqref{ep:AB}, we have
\begin{eqnarray}\label{phaseAB:exp}
e_p''&=&m'_{11}+m'_{22}\\\nonumber
&=&\Gamma'(\hat{m}'_{11}/\eta_1+\hat{m}'_{22}/\eta_0).
\end{eqnarray}
%

With Eqs.~\eqref{epZ:epAB}, ~\eqref{Z:diagsum}, ~\eqref{Eq:Xdiagsum} and  ~\eqref{phaseAB:exp}, the phase error $e_p'$ can be precisely estimated from the measurement results in $Z$ and $X$ bases. By contrast, $e_p'$ is roughly upper bounded in previous results ~\cite{fung2009mismatch}. This precise estimation of $e_p'$ allows Alice and Bob to obtain a higher key rate than in the previous analysis. Also, the key rate can be further improved by applying Theorem \ref{Th:KR:BB84Opt} to $\rho_{AB}^Z$.

Therefore, with fine-grained parameters, one can expect a higher key rate than the previous ones. This is to be illustrated in the following subsection.

\subsection{Analytical key rate formula under symmetric attack}
To simplify the analysis, we assume Eve's attack to be symmetric between bits $0$ and $1$ in the $Z/X$-basis, i.e., the diagonal elements of $\rho_{AB}$ in both the $Z$ basis and the $X$ basis are balanced. That is
\begin{eqnarray}\label{Z:sym}
&m_{00}=m_{33}=c,\\\nonumber
&m_{11}=m_{22}=d,
\end{eqnarray}
with the normalization condition $2(c+d)=1$.
Meanwhile, for the $X$ basis, one has
\begin{eqnarray}\label{X:sym}
&m'_{00}=m'_{33}=c',\\\nonumber
&m'_{11}=m'_{22}=d',
\end{eqnarray}
with $2(c'+d')=1$. Then via Eq.~\eqref{Z:diagnormal}, one has
\begin{eqnarray}\label{Gam:nomea}
\Gamma&=&\eta_0 c+\eta_1 d+\eta_0 d+\eta_1 c\\\nonumber
&=&(\eta_0+\eta_1)(c+d)\\\nonumber
&=&(\eta_0+\eta_1)/2,
\end{eqnarray}
where $\Gamma$ is a constant related to the detection efficiency. 

With Eqs.~\eqref{Xphase:error}, \eqref{Xmatrix:general} and \eqref{X:diagnormal}, one has
\begin{eqnarray}\label{Phase:XeqIni}
e_p&=&(\eta_1 m'_{11}+\eta_0 m'_{22})/\Gamma'\\\nonumber
&=&\frac{\eta_1 m'_{11} + \eta_0 m'_{22}}
{\eta_0 m'_{00} + \eta_1 m'_{11} +\eta_0 m'_{10} + \eta_1 m'_{22}}\\\nonumber
&=&\frac{(\eta_0+\eta_1)d'}{(\eta_0+\eta_1)(c'+d')}\\\nonumber
&=&e_p'',
\end{eqnarray}
where the last line is on account of the definition of $e_p''$.
Inserting Eqs.~\eqref{Gam:nomea} and \eqref{Phase:XeqIni} into Eq.~\eqref{epZ:epAB}, we have
\begin{equation}\label{sym:epZ:epAB}
e_p'=1/2-\frac{2\sqrt{\eta_0\eta_1}}{\eta_0+\eta_1}(1/2-e_p),
\end{equation}
which means $e_p'$ can be precisely estimated with $e_p$.

With Eq.~\eqref{sym:epZ:epAB}, one can estimate the key rate by applying Theorem \ref{Th:KR:BB84Opt}. For the current scenario that is restricted to the symmetric attack, the optimization Problem \ref{Pro:1} can be solved analytically. See Appendix \ref{Der:kmismath} for a detailed derivation. The key rate is given by
\begin{equation}\label{Krate_mismatch}
K=H(x)-H(f(x,e_p))-H(e_b),
\end{equation}
where
\begin{eqnarray*}
&x=\frac{\eta_0}{\eta_0+\eta_1},\\\nonumber
&f(x,e_p)=1/2+\sqrt{(1/2-x)^2+x(1-x)(1-2e_p)^2}.
\end{eqnarray*}

Comparatively, Ref.~\cite{fung2009mismatch} proposes two methods of analyzing the key rate with the detection efficiency mismatch issue. There, one key rate formula is obtained via the data discarding process,
\begin{equation}
K_1=\frac{2min\{\eta_0, \eta_1\}}{\eta_0+\eta_1}\Big(1-H(e_p)-H(e_b)\Big).
\end{equation}
The other key rate is obtained via a virtual protocol based on Koashi's complimentary approach~\cite{Koashi2009simple},
\begin{equation}
K_2=\frac{2min\{\eta_0, \eta_1\}}{\eta_0+\eta_1}\Big(1-H(e_p)\Big)-H(e_b).
\end{equation}

To compare above key rates, $K$, $K_1$ and $K_2$, we first consider the \textit{noiseless} case, where $e_p=e_b=0$. Then, one has $K=H(\frac{\eta_0}{\eta_0+\eta_1})$ and $K_1=K_2=\frac{2min\{\eta_0, \eta_1\}}{\eta_0+\eta_1}$. It is clear that $K\geq K_1 = K_2$. 
In the \textit{noisy} case, the key rates obtained from the three analysis are plotted in Fig.~\ref{Fig:mismatch}. It shows that $K$ is larger than $K_2$ for any efficiency mismatch extent; 
while $K$ is larger than $K_1$ if the mismatch is not too serious. This manifests the advantage of coherence framework for analyzing QKD security.

When the efficiency mismatch becomes large ($x$ approaches 0 in Fig.~\ref{Fig:mismatch}), $K$ becomes negative; but $K_1$ keeps positive and thus larger than $K$. This fact can be understood as follows. Suppose the initial state before measurement $\rho_{AB}$ possesses positive key rate (bit and phase error are both small). The data discarding approach effectively transforms the state $\rho_{AB}^Z$ to $\rho_{AB}$ with probability $\frac{2min\{\eta_0, \eta_1\}}{\eta_0+\eta_1}$, thus the key rate $K_1$ is always positive. As $x\rightarrow 0$ $(\eta_0\rightarrow 0$), the probability of successful transform approaches $0$, thus $K_1 \rightarrow 0$. On the other hand, as $x\rightarrow 0$, the phase error of the state $\rho_{AB}^Z$  in Eq.~\eqref{sym:epZ:epAB} approaches $1/2$. 
Consequently, the first two terms in Eq.\eqref{Krate_mismatch} approaches zero, and $K \rightarrow-H(e_b)\leq 0$ .

\begin{figure}[bht]
\centering
\resizebox{8cm}{!}{\includegraphics[scale=1]{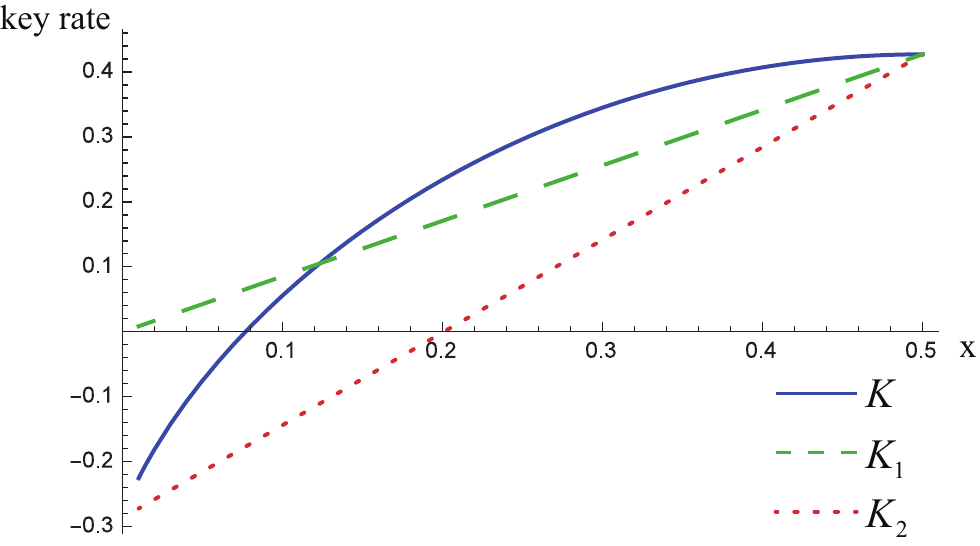}}
\caption{(Color online) Comparison between the key rates obtained from three analyses targeting the detection efficiency mismatch issue. The solid blue, dashed green and dotted red curves represent $K$, $K_1$ and $K_2$, respectively. We set $e_p=e_b=5\%$, and plot the key rates versus  $x=\frac{\eta_0}{\eta_0+\eta_1}$, which describes the efficiency mismatch.  Without loss of generality, we assume $\eta_0\leq\eta_1$ and hence $0\leq x \leq0.5$.
}\label{Fig:mismatch}
\end{figure}


\section{Relation with entanglement} \label{Sec:Entanglement}
The existing security analyses~\cite{Lo1999unconditional,shor2000simple} usually starts from entanglement distillation protocols~\cite{Bennett1996Purification,Horodecki09Quantum}, where entanglement is taken as an essential resource to deliver the security of the final key. In contrast, our work is based on the relation between quantum coherence and intrinsic randomness, which is related to the security in QRNG and QKD. Specifically, after correcting the bit errors, we take Alice and Bob as a whole and analyze the intrinsic randomness out of reach of Eve. From this point of view, our approach shares similarity with Koashi's, which is based on the complementary arguments for the joint system~\cite{Koashi2009simple}. In addition, recently there have been several works considering the interplay between coherence and entanglement~\cite{Ma2016converting,Streltsov2015measuring}. However, we remark that in these works the authors normally investigate converting subsystem coherence (not the coherence in the bipartite system) into global correlations and the incoherent operations used there cannot be directly applied into the QKD security analysis.

Here, we show some relations between our key rate and the entanglement property of the input state $\rho_{AB}$. As shown in Eq.~\eqref{Eq:KReb}, the key rate can be enhanced if Alice and Bob acquire more information of the shared state $\rho_{AB}$ and estimate the coherence of $\Phi(\rho_{AB})$ more accurately. Suppose Alice and Bob perform a full tomography of $\rho_{AB}$ in the parameter estimation step, then the state set $S$ only contains one state $\rho_{AB}$. 
An upper bound for the key rate is shown in the following proposition.
\begin{proposition}\label{Ob:hash}
Consider a protocol in which a full tomography on $\rho_{AB}$ is performed in the parameter estimation, then the key rate is upper bounded by
\begin{equation}\label{Eq:hash}
K(\rho_{AB})=C(\Phi(\rho_{AB}))-H(e_b)\leq S(\mathrm{Tr}_A(\Phi(\rho_{AB})))-S(\Phi(\rho_{AB})).
\end{equation}
\end{proposition}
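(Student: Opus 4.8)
The plan is to peel off both sides of \eqref{Eq:hash} using the definition $C(\sigma)=S(\sigma^{\textrm{diag}})-S(\sigma)$ and show that the statement collapses to an elementary inequality between Shannon entropies of the $Z$-basis outcomes.

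First I would use that the partial dephasing $\Phi$ does not touch the $Z_A\otimes Z_B$ diagonal of $\rho_{AB}$ [see \eqref{Eq:DephaseD}], so $(\Phi(\rho_{AB}))^{\textrm{diag}}=\rho_{AB}^{\textrm{diag}}$ and therefore $C(\Phi(\rho_{AB}))=S(\rho_{AB}^{\textrm{diag}})-S(\Phi(\rho_{AB}))$. The common term $S(\Phi(\rho_{AB}))$ then cancels from both sides of \eqref{Eq:hash}, so what remains to prove is
\[
S(\rho_{AB}^{\textrm{diag}})-H(e_b)\;\le\;S\!\left(\mathrm{Tr}_A(\Phi(\rho_{AB}))\right).
\]

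Next I would evaluate the two remaining entropies. A short computation from the block form \eqref{Eq:DephaseD} shows that $\mathrm{Tr}_A(\Phi(\rho_{AB}))$ is diagonal in the $Z_B$ basis with eigenvalues $m_{00}+m_{22}$ and $m_{11}+m_{33}$ (the off-diagonal entries that survive $\Phi$ connect states with different $B$-labels, so they drop out under $\mathrm{Tr}_A$); hence $S(\mathrm{Tr}_A(\Phi(\rho_{AB})))=H(\mathcal{Z}_B)$, the Shannon entropy of Bob's $Z$-basis result. Similarly $S(\rho_{AB}^{\textrm{diag}})=H(\mathcal{Z}_A\mathcal{Z}_B)$ is the joint Shannon entropy of the two $Z$-basis results, and $H(e_b)=H(\mathcal{Z}_A\oplus\mathcal{Z}_B)$ since $e_b=\mathrm{Tr}(\Pi^-\rho_{AB})$ is exactly the probability that the two bits disagree. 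Thus the inequality to be shown is $H(\mathcal{Z}_A\mathcal{Z}_B)-H(\mathcal{Z}_A\oplus\mathcal{Z}_B)\le H(\mathcal{Z}_B)$.

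Finally, since $(\mathcal{Z}_A,\mathcal{Z}_B)\mapsto(\mathcal{Z}_A\oplus\mathcal{Z}_B,\mathcal{Z}_B)$ is a bijection of two bits, the chain rule gives $H(\mathcal{Z}_A\mathcal{Z}_B)=H(\mathcal{Z}_A\oplus\mathcal{Z}_B)+H(\mathcal{Z}_B\,|\,\mathcal{Z}_A\oplus\mathcal{Z}_B)$, and conditioning cannot increase classical entropy, so $H(\mathcal{Z}_B\,|\,\mathcal{Z}_A\oplus\mathcal{Z}_B)\le H(\mathcal{Z}_B)$; rearranging yields the claim, hence \eqref{Eq:hash}. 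The proof is short; the only spot that needs a little care is the partial-trace bookkeeping in the middle step. It is also worth remarking that the right-hand side of \eqref{Eq:hash} equals $-S(A|B)_{\Phi(\rho_{AB})}$, i.e.\ the coherent information (hashing-bound rate) of the dephased state $\Phi(\rho_{AB})$, which is the conceptual reason it upper bounds the coherence-based key rate.
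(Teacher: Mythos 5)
Your proof is correct, and it reaches the bound by a slightly different packaging than the paper's. The paper expands $C(\Phi(\rho_{AB}))=(1-e_b)C(\rho_{AB}^+)+e_bC(\rho_{AB}^-)$ blockwise, notes that inside each parity block the diagonal entropy equals the entropy of Bob's marginal (because the $B$ label fixes the basis state within $\Pi^{\pm}$), and then applies concavity of the von Neumann entropy to $(1-e_b)\rho_B^{+}+e_b\rho_B^{-}=\mathrm{Tr}_A(\Phi(\rho_{AB}))$, absorbing $H(e_b)$ into $S(\Phi(\rho_{AB}))$ via the orthogonal-mixture identity. You instead cancel $S(\Phi(\rho_{AB}))$ from both sides using $(\Phi(\rho_{AB}))^{\textrm{diag}}=\rho_{AB}^{\textrm{diag}}$ and reduce everything to the classical inequality $H(\mathcal{Z}_A\mathcal{Z}_B)\le H(\mathcal{Z}_A\oplus\mathcal{Z}_B)+H(\mathcal{Z}_B)$, proved by the chain rule plus ``conditioning does not increase entropy.'' These are really the same inequality in different clothes: $H(\mathcal{Z}_B\,|\,\mathcal{Z}_A\oplus\mathcal{Z}_B)=(1-e_b)S(\rho_B^+)+e_bS(\rho_B^-)$, so your conditioning step is exactly the paper's concavity step restricted to classical distributions; what your route buys is that it avoids invoking the block additivity of coherence and makes the argument purely about Shannon entropies of the $Z$ outcomes and the XOR bit, which is arguably more elementary. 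One small bookkeeping correction: the surviving off-diagonal terms $m_{03}\,\ketbra{00}{11}$ and $m_{12}\,\ketbra{01}{10}$ vanish under $\mathrm{Tr}_A$ because their \emph{Alice} labels differ (i.e.\ $\mathrm{Tr}_A\ketbra{ab}{a'b'}=\delta_{aa'}\ketbra{b}{b'}$), not because their Bob labels differ; since both labels differ here, your conclusion that $\mathrm{Tr}_A(\Phi(\rho_{AB}))$ is diagonal with eigenvalues $m_{00}+m_{22}$ and $m_{11}+m_{33}$ is unaffected.
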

Note that the right side of Eq.~\eqref{Eq:hash} is the hashing inequality for the state $\Phi(\rho_{AB})$, which is a lower bound for one-way LOCC (local operations and classical communication) entanglement distillation protocol~\cite{Devetak2005distillation,Horodecki09Quantum}. We remark that the projection operation $\Phi$ on state $\rho_{AB}$ is a nonlocal operation, hence our analysis framework based on coherence could potentially yield higher key rate than the usual entanglement distillation analysis. For conciseness, we leave the proof of Proposition \ref{Ob:hash} in Appendix \ref{Ap:profhash}. We also compare our key rate with the Devetak-Winter formula~\cite{Devetak2005distillation} in Appendix \ref{Ap:Winter}.

Now we define a key rate that is independent of measurement basis by maximizing the key rate generated by state $\rho_{AB}$ over all local bases, i.e.,
\begin{equation}\label{}
K ^{m}(\rho_{AB})= \max_{Z_A\otimes Z_B}\Big\{ C(\Phi(\rho_{AB}))-H(e_b) \Big\},
\end{equation}
where $Z_A\otimes Z_B$ labels all the local bases of Alice and Bob.

One can see that given a pure state $\ket{\Psi_{AB}}$, the maximal key rate is its entanglement entropy, i.e., $K ^{m}(\Psi_{AB})=S(\rho_A)$, with $\Psi_{AB}=\ket{\Psi_{AB}}\bra{\Psi_{AB}}$ and $\rho_A=\mathrm{Tr}_B(\Psi_{AB})$. To be specific, suppose $\ket{\Psi}_{AB}=a_0\ket{\psi_0}_A\ket{\psi'_0}_B+a_1\ket{\psi_1}_A\ket{\psi'_1}_B$ is the the Schmidt decomposition of $\ket{\Psi}_{AB}$, one can choose $\{\ket{\psi_0}_A,\ket{\psi_1}_A\}$ and $\{\ket{\psi'_0}_B,\ket{\psi'_1}_B\}$ as the optimal local basis that maximizes the key rate. In addition, it is clear for a product state $\ket{\Psi_{AB}}=\ket{\psi_0}_A\ket{\psi'_0}_B$ the maximal key rate is zero. And the following proposition gives an upper bound for the maximal key rate for general state.

\begin{proposition}
The maximal key rate of the state $\rho_{AB}$ optimized over all local bases is upper bounded by the entanglement of formation,
\begin{equation}\label{Eformation}
K ^{m}(\rho_{AB})\leq \min_{\{p_i, \Psi_i\}}\sum_i p_i K^m(\Psi_i)=E^{form}(\rho_{AB}),
\end{equation}
where the minimization is over all the convex decompositions of $\rho_{AB}=\sum p_i \Psi_i$.
\end{proposition}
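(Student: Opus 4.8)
The plan is to reduce the statement to the convexity of the relative entropy of coherence together with the value of $K^m$ on pure states already recorded above. First I would note that for a pure state $\Psi$ one has $K^m(\Psi)=S(\mathrm{Tr}_B\Psi)$, the entropy of entanglement; consequently
\begin{equation}
\min_{\{p_i,\Psi_i\}}\sum_i p_i K^m(\Psi_i)=\min_{\{p_i,\Psi_i\}}\sum_i p_i S(\mathrm{Tr}_B\Psi_i)=E^{form}(\rho_{AB})
\end{equation}
is just the definition of the entanglement of formation. Hence the whole content of the proposition is the single inequality $K^m(\rho_{AB})\le\sum_i p_i K^m(\Psi_i)$ for every pure-state decomposition $\rho_{AB}=\sum_i p_i\Psi_i$; once this holds, taking the infimum over decompositions finishes the proof.

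To establish that inequality I would fix the local basis $Z_A\otimes Z_B$ attaining the maximum in the definition of $K^m(\rho_{AB})$ (the maximum is attained, the local bases forming a compact set), so that $K^m(\rho_{AB})=C(\Phi(\rho_{AB}))-H(e_b(\rho_{AB}))$ with $\Phi$, $C$ and $e_b=\mathrm{Tr}(\Pi^-\rho_{AB})$ all referred to this basis. Since $\Phi$ is a linear channel and $e_b$ is linear in the state, $\Phi(\rho_{AB})=\sum_i p_i\Phi(\Psi_i)$ and $e_b(\rho_{AB})=\sum_i p_i e_b(\Psi_i)$. I then invoke two elementary facts: the relative entropy of coherence is convex, so $C(\Phi(\rho_{AB}))\le\sum_i p_i C(\Phi(\Psi_i))$; and the binary entropy is concave, so $H(e_b(\rho_{AB}))\ge\sum_i p_i H(e_b(\Psi_i))$. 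Combining,
\begin{equation}
K^m(\rho_{AB})=C(\Phi(\rho_{AB}))-H(e_b(\rho_{AB}))\le\sum_i p_i\left[C(\Phi(\Psi_i))-H(e_b(\Psi_i))\right]\le\sum_i p_i K^m(\Psi_i),
\end{equation}
where the last inequality holds because each bracket is the objective defining $K^m(\Psi_i)$ evaluated at the fixed basis $Z_A\otimes Z_B$, which for each $\Psi_i$ is no larger than the maximum over all local bases.

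For completeness I would add a one-line proof of the convexity of $C$: writing $C(\tau)=\min_{\delta\in\mathcal I}S(\tau\|\delta)=S(\tau\|\Delta(\tau))$ with $\Delta$ the full dephasing in the reference basis and $\mathcal I$ the incoherent states, and using $\Delta(\sum_i p_i\tau_i)=\sum_i p_i\Delta(\tau_i)$ together with joint convexity of the quantum relative entropy, one gets $C(\sum_i p_i\tau_i)=S(\sum_i p_i\tau_i\,\|\,\sum_i p_i\Delta(\tau_i))\le\sum_i p_i S(\tau_i\|\Delta(\tau_i))=\sum_i p_i C(\tau_i)$, which is the standard convexity axiom for a coherence measure. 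I do not anticipate a genuine obstacle: the only real inputs are this convexity and the pure-state identity $K^m(\Psi)=S(\mathrm{Tr}_B\Psi)$, and the argument is exactly the generic pattern by which the entanglement of formation dominates any quantity that is convex in the state and reduces to the entropy of entanglement on pure states. The single point worth a word of care is interchanging the basis maximisation with the convex sum in the last step — this is harmless because the optimising basis is frozen once and for all for $\rho_{AB}$, while each $\Psi_i$ is free to use its own optimiser, which only relaxes the bound.
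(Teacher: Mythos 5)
Your proposal is correct and follows essentially the same route as the paper: fix the optimal local basis for $\rho_{AB}$, use convexity of the key-rate objective (convexity of the relative entropy of coherence plus concavity of the binary entropy term, with $\Phi$ and $e_b$ linear in the state), relax each $\Psi_i$ to its own optimal basis, and minimize over pure-state decompositions, identifying the result with $E^{form}$ via the pure-state value $K^m(\Psi)=S(\mathrm{Tr}_B\Psi)$. Your spelled-out justification of the convexity of $C$ and of the interchange of the basis optimization with the convex sum only makes explicit what the paper states tersely.
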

\begin{proof}
Note that the key rate $K$ is convex due to the convexity of the relative entropy of coherence and the concavity of the von Neumann entropy. Hence, for any decomposition of $\rho_{AB}=\sum p_i \Psi_i$,
\begin{equation}\label{}
K ^{m}(\rho_{AB})= K(\rho_{AB})_{Z_A^o\otimes Z_B^o}\leq  \sum_i p_i K(\Psi_i)_{Z_A^o\otimes Z_B^o} \leq \sum_i p_i K^m(\Psi_i),
\end{equation}
where $Z_A^o\otimes Z_B^o$ represents the optimal local basis for $\rho_{AB}$ and the last inequality holds since one can improve the key rate of $\Psi_i$ further by choosing specific optimal basis for each of them. Consequently, the maximal key rate is upper bounded by the entanglement of formation as shown in Eq.~\eqref{Eformation}.
\end{proof}
It is also clear to see that the key rate for any separable state $K ^{m}(\rho_{AB}^{sep})\leq 0$, since it can be written as the combination of product states~\cite{Curty2004Precondition}.

\section{Discussion and conclusion}\label{Sec:conclude}
We have proposed a framework that captures the close relation between coherence and QKD. By considering a generic entanglement-based QKD protocol, the framework shows that the secure key rate can be quantified via the coherence of the shared bipartite quantum states. By applying the proposed framework to the BB84 and six-state protocols, we reproduce the key rates. Furthermore, the framework can even allow us to improve the key rates by modifying the postprocessing. And it is also shown to be advantageous to analyze the practical issue of detector efficiency mismatch in QKD. More generally, the coherence-based framework also provide us convenience to analyze the key rate by using tools from coherence theory. 

Along this direction, a number of problems can be explored in the future. Note that our current security analysis is performed under the collective attack scenario. An important future study is to extend the results to the scenario of coherent attacks and take into account finite-key effects~\cite{Renner2005Security,Ma2011Finite}. To solve this problem, one may employ results from recent studies on one-shot coherence theories~\cite{Zhao2018Dilution,Zhao2018On,Bartosz2018One}. There, coherence can be quantified in a non-asymptotic setting where finite data-size effects appear. Also, apart from the currently studied cases, the framework has potential to be applied to many other QKD protocols, such as the three-state protocol~\cite{Boileau05Three,Fung2006three} and B92 protocol~\cite{Bennett92Quantum}, where similar derivation and improvement of the key rate are expected. In particular, the framework can be naturally extended to measurement-device-independent QKD~\cite{Lo12MDI}, since bipartite quantum states are directly distributed and measured in this kind of protocol. In addition, generalization to high dimensional QKD and continuous-variable QKD~\cite{scarani2009security} is also interesting. Also, we expect our framework to be useful in addressing more practical issues in QKD, the solutions to many of which are missing or very complicated at the moment.

Moreover, it is intriguing to reexamine the previous QKD security analyses from the coherence theory point of view. To be specific, a common technique of security analysis is to transform the original protocol to the equivalent virtual protocol. The latter is easier to analyze but share the same amount of secure keys. In the virtual protocol, the operations conducted there are incoherent operations~\cite{streltsov2017coherence} (more specifically, dephasing-covariant incoherent operation~\cite{Eric2016Critical,Marvian2016speakable}), which commute with the final $Z$-basis measurement to generate keys. It is also interesting to investigate the connection between coherence and entanglement~\cite{Ma2016converting,Streltsov2015measuring,Patrick2012Unification,Yao15Quantum,Alexander17Towards} under the scenario of security analysis. This topic might not only deepen our understanding of the basic quantum resources, but also inspire useful applications in quantum information processing.

\acknowledgments
J.~Ma and Y.~Zhou contributed equally to this work. We acknowledge H.~Zhou, P.~Zeng, and A.~S.~Trushechkin for the insightful discussions and useful comments. This work was supported by the National Natural Science Foundation of China Grants No.~11674193 and No.~11875173, and the National Key R\&D Program of China Grants No.~2017YFA0303900 and No.~2017YFA0304004.

\appendix
\section{Quantum bit error correction} \label{App:QEC}
We first clarify the information reconciliation of the original protocol in (i)-(v), and then convert it to a quantum version, the quantum bit error correction of the virtual protocol.

In the original protocol, after the $Z$-basis measurement on $\rho_{AB}^{\otimes n}$, Alice and Bob get $n$-bit strings $\mathcal{Z}_A^n$ and $\mathcal{Z}_B^n$, respectively. Due to errors, the random variables $\mathcal{Z}_A^n$ and $\mathcal{Z}_B^n$ are not identical in general. Then, in (linear) error correction, Alice generate an error syndrome by hashing her bit string with an $nH(\mathcal{Z}_A|\mathcal{Z}_B)\times n$ random binary matrix. By consuming $nH(\mathcal{Z}_A|\mathcal{Z}_B)$ preshared secret bits, Alice sends the syndrome to Bob safely with the one-time-pad encryption. After obtaining the syndromes, Bob can correct the corresponding error bits.

In the virtual protocol, the quantum bit error correction is executed before the $Z$-basis measurement. Specifically, Alice and Bob now share $nH(\mathcal{Z}_A|\mathcal{Z}_B)$ EPR pairs. First, they use their state $\rho_{AB}^{\otimes n}$ to control the EPR pairs according to the hashing matrix separately, where the ancillary EPR pairs act as the target of the CNOT gate. Second, they measure the EPR pairs in the $Z$ basis separately and get the measurement results $\mathcal{Z}_A^a$ and $\mathcal{Z}_B^a$, where $a$ labels the ancilla. Then Alice sends $\mathcal{Z}_A^a$ to Bob and Bob obtains the error syndrome via bitwise binary addition $\mathcal{Z}_A^a\oplus \mathcal{Z}_B^a$. Finally, Bob locates the bit errors and applies the $\sigma_x$ operation to correct them. Here, it is clear that the quantum bit error correction commutes with the $Z$-basis measurement. Take a simple example, where
\begin{equation}\label{Eq:HashingM}
H_{2\times 3}=\left(
  \begin{array}{ccc}
    1 & 1 & 0\\
    0 & 1 & 1\\
  \end{array}
\right);
\end{equation}
the circuit is illustrated in Fig.~\ref{Fig:Hashing}.

\begin{figure}[bht]
\centering
\resizebox{8cm}{!}{\includegraphics[scale=1]{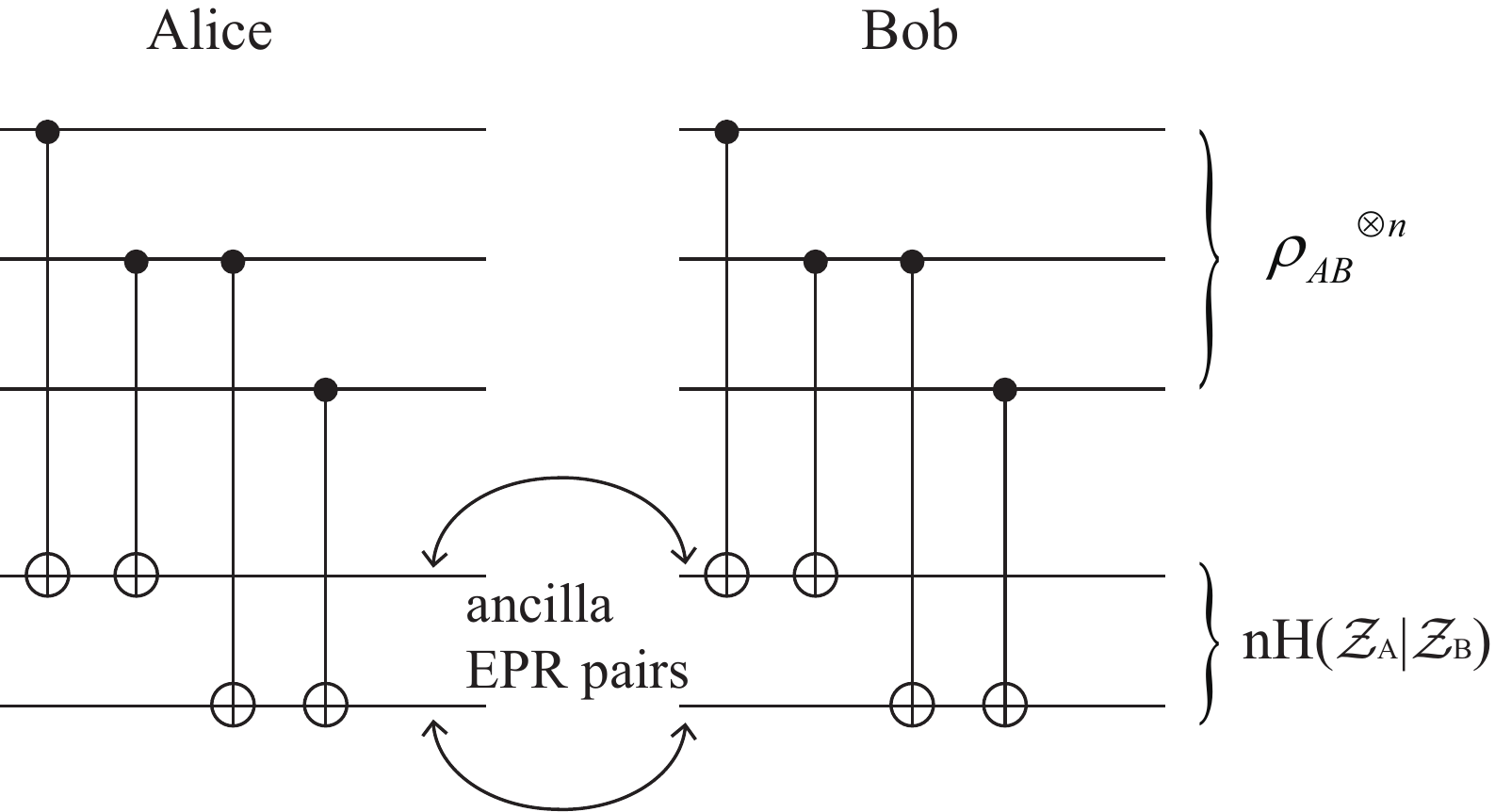}}
\caption{The circuit for quantum bit error correction. Alice and Bob separately use their state $\rho_{AB}^{\otimes n}$ to control the EPR pairs according to a $nH(\mathcal{Z}_A|\mathcal{Z}_B)\times n$ hashing matrix with $n\rightarrow\infty$. Here for clearness we show the schematic with a $2\times 3$ hashing matrix $H_{2\times 3}$ given in Eq.~\eqref{Eq:HashingM}.}\label{Fig:Hashing}
\end{figure}

\section{Analytical solution to Problem \ref{Pro:1}}\label{sym:pro1}
\begin{lemma}\label{Lem:sym:pro1}
If the four diagonal elements of the density matrix in Eq.~\eqref{Eq:Pro:1} satisfy  $m_{00}/m_{33}=m_{11}/m_{22}=\frac{\alpha}{1-\alpha}$ (or $m_{00}/m_{33}=m_{22}/m_{11}$), the minimal coherence obtained from the optimization in Problem \ref{Pro:1} is $H(\alpha)-H\left(\frac1{2}+\sqrt{(\alpha-\frac1{2})^2+(\frac1{2}-e_p)^2}\right)$, with the solution $\bar{a}=(1-e_b)(1/2-e_p)$ and $\bar{b}=e_b(1/2-e_p)$.
\end{lemma}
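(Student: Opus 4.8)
The plan is to exploit the block structure of $\rho(a,b)$ in \eqref{Eq:Pro:1}: it is supported on the two orthogonal rank-two subspaces $\Pi^+=\mathrm{span}\{\ket{00},\ket{11}\}$ and $\Pi^-=\mathrm{span}\{\ket{01},\ket{10}\}$, carrying weights $m_{00}+m_{33}=1-e_b$ and $m_{11}+m_{22}=e_b$, with normalized blocks $\hat\rho^+\propto\bigl(\begin{smallmatrix}m_{00}&a\\a&m_{33}\end{smallmatrix}\bigr)$ and $\hat\rho^-\propto\bigl(\begin{smallmatrix}m_{11}&b\\b&m_{22}\end{smallmatrix}\bigr)$ (positivity of these $2\times2$ blocks is exactly the box constraint $|a|\le\sqrt{m_{00}m_{33}}$, $|b|\le\sqrt{m_{11}m_{22}}$). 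Using additivity of the relative entropy of coherence over orthogonal subspaces, the same property invoked in \eqref{Eq:3K12}, I get $C(\rho(a,b))=(1-e_b)C(\hat\rho^+)+e_bC(\hat\rho^-)$. Under the symmetry hypothesis $m_{00}/m_{33}=m_{11}/m_{22}=\alpha/(1-\alpha)$ both normalized blocks have diagonal $(\alpha,1-\alpha)$ (in the ``or'' case $m_{00}/m_{33}=m_{22}/m_{11}$ the $\Pi^-$ block has diagonal $(1-\alpha,\alpha)$, which yields the same $H(\alpha)$ and the same function below, so the two cases merge). A normalized qubit state with diagonal $(\alpha,1-\alpha)$ and real off-diagonal $c$ has larger eigenvalue $\tfrac12+\sqrt{(\alpha-\tfrac12)^2+c^2}$, hence coherence $H(\alpha)-F(c)$ with $F(c):=H\!\bigl(\tfrac12+\sqrt{(\alpha-\tfrac12)^2+c^2}\bigr)$. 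Therefore $C(\rho(a,b))=H(\alpha)-(1-e_b)F\!\bigl(\tfrac{a}{1-e_b}\bigr)-e_bF\!\bigl(\tfrac{b}{e_b}\bigr)$, and minimizing $C$ amounts to maximizing $(1-e_b)F\!\bigl(\tfrac{a}{1-e_b}\bigr)+e_bF\!\bigl(\tfrac{b}{e_b}\bigr)$ over $a+b=\tfrac12-e_p$.

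Next I would reduce to the regime $a,b\ge0$ (and, since $F$ is even, WLOG $e_p\le\tfrac12$): $F$ is even and nonincreasing in $|c|$, so if e.g. $a<0$, sliding $a$ up to $0$ keeps $a+b$ fixed, strictly decreases both $|a|$ and $|b|$, stays inside the box, and only increases the objective; hence an optimum has $a,b\ge0$. The core of the argument is then the claim that $F$ is concave on $[0,\sqrt{\alpha(1-\alpha)}]$. I would prove this by direct computation: with $t:=\sqrt{(\alpha-\tfrac12)^2+c^2}$ one has $\tfrac{d}{dt}H(\tfrac12+t)=\log_2\tfrac{1/2-t}{1/2+t}<0$ and $\tfrac{d^2}{dt^2}H(\tfrac12+t)=-\tfrac{1}{\ln 2}\tfrac{1}{1/4-t^2}<0$ for $t\in(0,\tfrac12)$; since $t'(c)=c/t$ and $(c/t)'=(\alpha-\tfrac12)^2/t^3$, the chain rule gives $F''(c)=H''_{tt}\,\tfrac{c^2}{t^2}+H'_t\,\tfrac{(\alpha-1/2)^2}{t^3}<0$, so $F$ is concave (when $\alpha=\tfrac12$ this degenerates to $F(c)=H(\tfrac12+c)$, concave outright). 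With concavity established, Jensen's inequality with weights $(1-e_b,e_b)$ at the points $\tfrac{a}{1-e_b},\tfrac{b}{e_b}$ gives $(1-e_b)F\!\bigl(\tfrac{a}{1-e_b}\bigr)+e_bF\!\bigl(\tfrac{b}{e_b}\bigr)\le F(a+b)=F(\tfrac12-e_p)$, with equality iff $\tfrac{a}{1-e_b}=\tfrac{b}{e_b}$, which together with $a+b=\tfrac12-e_p$ forces $\bar a=(1-e_b)(\tfrac12-e_p)$ and $\bar b=e_b(\tfrac12-e_p)$.

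It remains to check that $(\bar a,\bar b)$ is feasible and that the average $a+b=\tfrac12-e_p$ lies in the concavity domain used for Jensen; both follow from nonemptiness of the constraint set of Problem~\ref{Pro:1}: any feasible $(a,b)$ satisfies $|\tfrac12-e_p|=|a+b|\le\sqrt{m_{00}m_{33}}+\sqrt{m_{11}m_{22}}=\sqrt{\alpha(1-\alpha)}$ (the last equality using $\sqrt{m_{00}m_{33}}=(1-e_b)\sqrt{\alpha(1-\alpha)}$ and $\sqrt{m_{11}m_{22}}=e_b\sqrt{\alpha(1-\alpha)}$), which is precisely what $\bar a,\bar b$ require, and it also guarantees $\tfrac12+\sqrt{(\alpha-\tfrac12)^2+(\tfrac12-e_p)^2}\le 1$ so that $F(\tfrac12-e_p)$ is well defined. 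Substituting back yields the minimal coherence $H(\alpha)-F(\tfrac12-e_p)=H(\alpha)-H\!\bigl(\tfrac12+\sqrt{(\alpha-\tfrac12)^2+(\tfrac12-e_p)^2}\bigr)$, as claimed.

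The step I expect to be the main obstacle is the concavity of $F$: although it is ``just calculus,'' it needs the right substitution $t=\sqrt{(\alpha-\tfrac12)^2+c^2}$ to make the sign of $F''$ transparent, plus a little care at the endpoint $t\to\tfrac12$ (where $F''\to-\infty$ but $F$ stays concave) and the degenerate case $\alpha=\tfrac12$. Everything else—the block decomposition, additivity, the sign reduction, Jensen, and the feasibility bound—is routine once the concavity of $F$ is in hand.
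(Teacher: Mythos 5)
Your proposal is correct and follows essentially the same route as the paper's proof: block decomposition on $\Pi^{\pm}$ with additivity of coherence, reduction to nonnegative off-diagonals, concavity of $c\mapsto H\bigl(\tfrac12+\sqrt{(\alpha-\tfrac12)^2+c^2}\bigr)$ plus Jensen, and saturation at $\bar a=(1-e_b)(\tfrac12-e_p)$, $\bar b=e_b(\tfrac12-e_p)$. The only (minor) difference is that you establish the key concavity by a direct second-derivative computation, whereas the paper obtains the same two-point Jensen inequality from the concavity and monotonicity of $H$ together with the convexity of $g(x)=\tfrac12+\sqrt{(\alpha-\tfrac12)^2+x^2}$; both arguments are valid.
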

\begin{proof}
Here, we only consider the case $m_{00}/m_{33}=m_{11}/m_{22}$, and the proof for the other case $\bar{m}_{00}/\bar{m}_{33}=\bar{m}_{22}/\bar{m}_{11}$ can proceed in a similar way.

Due to the additivity property of coherence, we can express the coherence of $\rho(a,b)$, like in Eq.\eqref{Eq:3K12}, as
\begin{equation}\label{AEq:sym:p1}
C(\rho(a,b))=(1-e_b)C(\rho^+)+e_bC(\rho^-)
\end{equation}
where $\rho^{+(-)}=\Pi^{+(-)}\rho(a,b)\Pi^{+(-)}/\mathrm{Tr}(\Pi^{+(-)}\rho(a,b))$. With $m_{00}/m_{33}=m_{11}/m_{22}=\frac{\alpha}{1-\alpha}$, we can explicitly write down the matrix form of $\rho^{+(-)}$ as,

\begin{equation}
\rho^+=
\left(
  \begin{array}{cc}
    \alpha & a'\\
     a' & 1-\alpha
  \end{array}
\right),\ \
\rho^-=
\left(
  \begin{array}{cc}
    \alpha & b'\\
     b' & 1-\alpha
  \end{array}
\right),
\end{equation}
where $(1-e_b)a'=a$ and $e_bb'=b$.

The relative entropy of coherence of the state $\rho^+$ is,
\begin{eqnarray}\label{}
C(\rho^+)&=&S(\Delta(\rho^+))-S(\rho^+),\\\nonumber
&=&H(\alpha)-H\left(\frac1{2}+\sqrt{(\alpha-\frac1{2})^2+{|a'|}^2}\right),
\end{eqnarray}
where $\Delta(\cdot)$ is the dephasing operation of the $\{\ket{00},\ket{11}\}$ basis.
For simplicity, we denote $g(x)=\frac1{2}+\sqrt{\gamma+x^2}$, where $\gamma=(\alpha-\frac1{2})^2$, hence
\begin{equation}\label{Aeq:sym:p}
C(\rho^+)=H(\alpha)-H(g(|a'|)).
\end{equation}
And similarly for $\rho^-$, we have
\begin{equation}\label{Aeq:sym:m}
C(\rho^-)=H(\alpha)-H(g(|b'|)).
\end{equation}

In fact,
$g(x)$ is a monotonically increasing convex function, on account of
\begin{eqnarray}
g'(x)&=\frac{x}{\sqrt{\gamma+x^2}}\geq0,\\\nonumber
g''(x)&=\frac{\gamma}{{(\gamma+x^2)}^{\frac{3}{2}}}>0.
\end{eqnarray}
Moreover we can show that $H\big{(}g(x)\big{)}$ is a concave function. Specifically, for two variables $x_1$, $x_2$ with probability $p_1$, $p_2$,
\begin{equation}\label{}
\sum_ip_iH\big{(}g(x_i)\big{)}\leq H\Big{(}\sum_i p_i g(x_i)\Big{)}\leq H\Big{(}g\Big{(}\sum_i p_ix_i\Big{)}\Big{)},
\end{equation}
where the summation $i=1,2$. The first inequality is due to the concavity of the entropy function $H(x)$. The second inequality holds because of three facts: $g(x)$ is a convex function, i.e., $\sum_i p_i g(x_i)\geq g\Big{(}\sum_i p_ix_i\Big{)}$; $g(x)$ is larger than $\frac{1}{2}$; $H(x)$ monotonically decreases as $x\geq\frac1{2}$.

Then inserting Eq.~\eqref{Aeq:sym:p} and \eqref{Aeq:sym:m} into Eq.~\eqref{AEq:sym:p1}, and utilizing the concavity of $H(g(x))$, we have
\begin{eqnarray}\label{Aeq:eq:coh}
C(\rho(a,b))&=&H(\alpha)-(1-e_b)H(g(|a'|))-e_bH(g(|b'|))\\\nonumber
&\geq& H(\alpha)-H(g([1-e_b]|a'|+e_b|b'|)),
\end{eqnarray}
where the equality holds when $|a'|= |b'|$.

Remembering that the coherence $C(\rho(a,b))$ should be minimized under the constraint $a+b=(1-e_b)a'+e_bb'=\frac1{2}-e_p$,  we have
\begin{equation}\label{}
(1-e_b)|a'|+e_b|b'|\geq|(1-e_b)a'+e_bb'|=|\frac1{2}-e_p|,
\end{equation}
where the equality is saturated when $a'$ and $b'$ share the same sign. Consequently, following Eq.~\eqref{Aeq:eq:coh}, we have
\begin{eqnarray}\label{}
C(\rho(a,b))&\geq& H(\alpha)-H(g([1-e_b]|a'|+e_b|b'|))\\\nonumber
&\geq& H(\alpha)-H(g(|\frac1{2}-e_p|)),
\end{eqnarray}
where the second inequality holds since $H(g(x))$ is a monotonically decreasing function. And the inequality is saturated when $\bar{a}=(1-e_b)(1/2-e_p)$ and $\bar{b}=e_b(1/2-e_p)$.
\end{proof}
\section{Derivation of the key rate $K$ in Eq.~\eqref{Krate_mismatch}}\label{Der:kmismath}
Here we derive the key rate for the symmetric attack scenario, where the key is generated from the $Z$-basis bit of $\rho_{AB}^Z$.  Under the symmetric assumption in Eq.~\eqref{Z:sym},
the four diagonal elements in the $Z$-basis of $\rho_{AB}^Z$ are, $2c\frac{\eta_0}{\eta_0+\eta_1}$, $2d\frac{\eta_1}{\eta_0+\eta_1}$, $2d\frac{\eta_0}{\eta_0+\eta_1}$, and $2c\frac{\eta_1}{\eta_0+\eta_1}$ respectively. And the phase error $e_p'$ is given by Eq.~\eqref{sym:epZ:epAB}. Consequently, the coherence minimization of Problem \ref{Pro:1} becomes
\begin{equation}
\rho(a,b)=
\left(
  \begin{array}{cccc}
    2c\frac{\eta_0}{\eta_0+\eta_1} & 0 & 0 & a\\
    0 & 2d\frac{\eta_1}{\eta_0+\eta_1} & b & 0 \\
    0 & b & 2d\frac{\eta_0}{\eta_0+\eta_1} & 0 \\
    a & 0 & 0 & 2c\frac{\eta_1}{\eta_0+\eta_1} \\
  \end{array}
\right),
\end{equation}
where $a+b=\frac1{2}-e_p'$. It is clear to find that this minimization satisfies the condition in Lemma \ref{Lem:sym:pro1} with $\alpha=\frac{\eta_0}{\eta_0+\eta_1}$. Hence, according to Theorem \ref{Th:KR:BB84Opt}, the key rate reads

\begin{eqnarray}
K
&=&H(\alpha)-H\left(1/2+\sqrt{(\alpha-\frac1{2})^2+(\frac1{2}-e_p')^2}\right)-H(e_b)\\\nonumber
&=&H(\alpha)-H\left(1/2+\sqrt{(\alpha-\frac1{2})^2+\alpha(1-\alpha)(1-2e_p)^2}\right)-H(e_b),
\end{eqnarray}
where Eq.~\eqref{sym:epZ:epAB} is applied in the third line to express $e_p'$ with $e_p$. If we substitute $x$ for $\alpha$ in the above equation, we get the key rate in Eq.~\eqref{Krate_mismatch} in the main part.

%
\section{Proof of Proposition \ref{Ob:hash}}\label{Ap:profhash}
From Eq.~\eqref{Eq:3K12}, one has $C(\Phi(\rho_{AB}))=(1-e_b)C(\rho_{AB}^+)+e_b C(\rho_{AB}^-)$. By definition~\cite{Plenio2014coherence},
\begin{eqnarray}\label{}
C(\rho_{AB}^+)&=&S(\rho_{AB}^{+\textrm{diag}})-S(\rho_{AB}^+)\\\nonumber
&=&LS(\rho_{B}^+)-S(\rho_{AB}^+),
\end{eqnarray}
where $\rho_{B}^+=\mathrm{Tr}_B(\rho_{AB}^+)$. Here in the second line we utilize the fact that $S(\rho_{B}^+)=S(\rho_{AB}^{+\textrm{diag}})$, since $\rho_{AB}^+$ is in the $\Pi^+$ subspace. Similarly, one has
\begin{eqnarray}\label{}
C(\rho_{AB}^-)&=S(\rho_{B}^-)-S(\rho_{AB}^-).
\end{eqnarray}
As a result,
\begin{eqnarray}\label{}
K(\rho_{AB})&=&C(\Phi(\rho_{AB}))-H(e_b)\\\nonumber
&=&(1-e_b)\Big(S(\rho_{B}^+)-S(\rho_{AB}^+)\Big)+e_b\Big(S(\rho_{B}^-)-S(\rho_{AB}^-)\Big)-H(e_b)\\\nonumber
&=&(1-e_b)S(\rho_{B}^+)+e_bS(\rho_{B}^-)-\Big((1-e_b)S(\rho_{AB}^+)+e_b S(\rho_{AB}^-)+  H(e_b)\Big) ,\\\nonumber
&\leq& S((1-e_b)\rho_{B}^++e_b\rho_{B}^-)-S(\Phi(\rho_{AB})),\\\nonumber
&=& S\Big((1-e_b)\mathrm{Tr}_A(\rho_{AB}^+)+e_b\mathrm{Tr}_A(\rho_{AB}^-)\Big)-S(\Phi(\rho_{AB})),\\\nonumber
&=&S(\mathrm{Tr}_A(\Phi(\rho_{AB})))-S(\Phi(\rho_{AB})),
\end{eqnarray}
where the inequality in the fourth line is due to the concavity of entropy.

\section{Comparison with the Devetak-Winter formula}\label{Ap:Winter}
The Devetak-Winter formula shows that the key rate of state $\rho_{AB}$ in the i.i.d.~scenario is $K_{D-W}=S(\mathcal{Z}_A|E)-H(\mathcal{Z}_A|\mathcal{Z}_B)$.
This formula considers the one-way information reconciliation protocol. And in this case, the information reconciliation term $H(\mathcal{Z}_A|\mathcal{Z}_B)$ is the same as in our formula. Note that our formula Eq.~\eqref{Th:keyrate} can be applied to more general information reconciliation protocols, whereas the Devetak-Winter one is originally designed for one-way postprocessing. Thus, we focus on the first term $S(\mathcal{Z}_A|E)$ which is used to estimate the privacy of the sifted key on Alice's side. In fact, it can be written in the relative entropy form~\cite{Patrick2012Unification,Patrick2016description} as
\begin{equation}\label{Eq:winter}
S(\mathcal{Z}_A|E)=D(\rho_{AB}\|\Delta_{Z_A}(\rho_{AB})),
\end{equation}
where $\Delta_{Z_A}$ is the \emph{partial} dephasing operation on system $A$, i.e., $\Delta_{Z_A}(\rho_{AB})=\sum_{i=0,1}\ket{i}_A\bra{i}\rho_{AB}\ket{i}_A\bra{i}$. Here $S(\mathcal{Z}_A|E)$ equals to the amount of basis-dependent \emph{discord} of $\rho_{AB}$~\cite{Modi2012Discord}.

On the other hand, the term corresponding to privacy, $C(\Phi(\rho_{AB}))$ in our key formula in Eq.~\eqref{Eq:KR}, can also be written in the relative entropy form. By definition~\cite{Plenio2014coherence}, we have
\begin{equation}\label{}
C(\Phi(\rho_{AB}))=D(\Phi(\rho_{AB})\|\Delta_{Z_{AB}}(\Phi(\rho_{AB}))).
\end{equation}
Compared with Eq.~\eqref{Eq:winter} of Devetak-Winter formula, $C(\Phi(\rho_{AB}))$ quantifies the global coherence of $\Phi(\rho_{AB})$.

It is enlightening to note that using the same fine-grained parameters, one can achieve the same key rate improvement from the Devetak-Winter formula as our coherence framework. Here is the proof. As shown in Eq.~\eqref{Eq:Pro:1}, $\rho_{AB}$ constrained by the fine-grained parameters in the BB84 protocol satisfies $\rho_{AB}=\Phi(\rho_{AB})$. Similarly, Eq.~\eqref{Eq:Tau} shows that $\rho_{AB}=\Phi(\rho_{AB})$ is also satisfied for $\rho_{AB}$ constrained by the fine-grained parameters in the six-state protocol. Therefore, for both protocols, one has
\begin{eqnarray}\label{Eq:DWcoh}
C(\Phi(\rho_{AB}))&=&D(\Phi(\rho_{AB})\|\Delta_{Z_{AB}}(\Phi(\rho_{AB})))\\\nonumber
&=&D(\Phi(\rho_{AB})\|\Delta_{Z_A}(\Phi(\rho_{AB})))\\\nonumber
&=&D(\rho_{AB}\|\Delta_{Z_A}(\rho_{AB}))\\\nonumber
&=&S(\mathcal{Z}_A|E)
\end{eqnarray}
where the second equality employs the fact that $\Delta_{Z_A}(\Phi(\cdot))=\Delta_{Z_{AB}}(\Phi(\cdot))$ and the third equality employs $\rho_{AB}=\Phi(\rho_{AB})$.

Therefore, with the fine-grained parameters, $K_{D-W}$ is equal to the key rate formula \eqref{Th:keyrate}. This implies one can derive the same improved key rate formulas as those in Theorem \ref{Th:KR:BB84Opt} and Theorem \ref{Th:KR:sixOpt}, from the Devetak-Winter formula with fine-grained parameters.

\bibliographystyle{apsrev4-1}

\bibliographystyle{apsrev4-1}

\bibliography{bibCoherenceQKD}

\begin{thebibliography}{58}%
\makeatletter
\providecommand \@ifxundefined [1]{%
 \@ifx{#1\undefined}
}%
\providecommand \@ifnum [1]{%
 \ifnum #1\expandafter \@firstoftwo
 \else \expandafter \@secondoftwo
 \fi
}%
\providecommand \@ifx [1]{%
 \ifx #1\expandafter \@firstoftwo
 \else \expandafter \@secondoftwo
 \fi
}%
\providecommand \natexlab [1]{#1}%
\providecommand \enquote  [1]{``#1''}%
\providecommand \bibnamefont  [1]{#1}%
\providecommand \bibfnamefont [1]{#1}%
\providecommand \citenamefont [1]{#1}%
\providecommand \href@noop [0]{\@secondoftwo}%
\providecommand \href [0]{\begingroup \@sanitize@url \@href}%
\providecommand \@href[1]{\@@startlink{#1}\@@href}%
\providecommand \@@href[1]{\endgroup#1\@@endlink}%
\providecommand \@sanitize@url [0]{\catcode `\\12\catcode `\$12\catcode
  `\&12\catcode `\#12\catcode `\^12\catcode `\_12\catcode `\%12\relax}%
\providecommand \@@startlink[1]{}%
\providecommand \@@endlink[0]{}%
\providecommand \url  [0]{\begingroup\@sanitize@url \@url }%
\providecommand \@url [1]{\endgroup\@href {#1}{\urlprefix }}%
\providecommand \urlprefix  [0]{URL }%
\providecommand \Eprint [0]{\href }%
\providecommand \doibase [0]{http://dx.doi.org/}%
\providecommand \selectlanguage [0]{\@gobble}%
\providecommand \bibinfo  [0]{\@secondoftwo}%
\providecommand \bibfield  [0]{\@secondoftwo}%
\providecommand \translation [1]{[#1]}%
\providecommand \BibitemOpen [0]{}%
\providecommand \bibitemStop [0]{}%
\providecommand \bibitemNoStop [0]{.\EOS\space}%
\providecommand \EOS [0]{\spacefactor3000\relax}%
\providecommand \BibitemShut  [1]{\csname bibitem#1\endcsname}%
\let\auto@bib@innerbib\@empty
\bibitem [{\citenamefont {Born}(1926)}]{Born1926}%
  \BibitemOpen
  \bibfield  {author} {\bibinfo {author} {\bibfnamefont {M.}~\bibnamefont
  {Born}},\ }\href {\doibase 10.1007/BF01397477} {\bibfield  {journal}
  {\bibinfo  {journal} {Zeitschrift f{\"u}r Physik}\ }\textbf {\bibinfo
  {volume} {37}},\ \bibinfo {pages} {863} (\bibinfo {year} {1926})}\BibitemShut
  {NoStop}%
\bibitem [{\citenamefont
  {Schr{\"o}dinger}(1935)}]{schrodinger1935gegenwartige}%
  \BibitemOpen
  \bibfield  {author} {\bibinfo {author} {\bibfnamefont {E.}~\bibnamefont
  {Schr{\"o}dinger}},\ }\href@noop {} {\bibfield  {journal} {\bibinfo
  {journal} {Naturwissenschaften}\ }\textbf {\bibinfo {volume} {23}},\ \bibinfo
  {pages} {807} (\bibinfo {year} {1935})}\BibitemShut {NoStop}%
\bibitem [{\citenamefont {Baumgratz}\ \emph {et~al.}(2014)\citenamefont
  {Baumgratz}, \citenamefont {Cramer},\ and\ \citenamefont
  {Plenio}}]{Plenio2014coherence}%
  \BibitemOpen
  \bibfield  {author} {\bibinfo {author} {\bibfnamefont {T.}~\bibnamefont
  {Baumgratz}}, \bibinfo {author} {\bibfnamefont {M.}~\bibnamefont {Cramer}}, \
  and\ \bibinfo {author} {\bibfnamefont {M.~B.}\ \bibnamefont {Plenio}},\
  }\href {\doibase 10.1103/PhysRevLett.113.140401} {\bibfield  {journal}
  {\bibinfo  {journal} {Phys. Rev. Lett.}\ }\textbf {\bibinfo {volume} {113}},\
  \bibinfo {pages} {140401} (\bibinfo {year} {2014})}\BibitemShut {NoStop}%
\bibitem [{\citenamefont {Streltsov}\ \emph
  {et~al.}(2017{\natexlab{a}})\citenamefont {Streltsov}, \citenamefont
  {Adesso},\ and\ \citenamefont {Plenio}}]{streltsov2017coherence}%
  \BibitemOpen
  \bibfield  {author} {\bibinfo {author} {\bibfnamefont {A.}~\bibnamefont
  {Streltsov}}, \bibinfo {author} {\bibfnamefont {G.}~\bibnamefont {Adesso}}, \
  and\ \bibinfo {author} {\bibfnamefont {M.~B.}\ \bibnamefont {Plenio}},\
  }\href {\doibase 10.1103/RevModPhys.89.041003} {\bibfield  {journal}
  {\bibinfo  {journal} {Rev. Mod. Phys.}\ }\textbf {\bibinfo {volume} {89}},\
  \bibinfo {pages} {041003} (\bibinfo {year} {2017}{\natexlab{a}})}\BibitemShut
  {NoStop}%
\bibitem [{\citenamefont {Braunstein}\ \emph {et~al.}(1996)\citenamefont
  {Braunstein}, \citenamefont {Caves},\ and\ \citenamefont
  {Milburn}}]{BRAUNSTEIN1996135}%
  \BibitemOpen
  \bibfield  {author} {\bibinfo {author} {\bibfnamefont {S.~L.}\ \bibnamefont
  {Braunstein}}, \bibinfo {author} {\bibfnamefont {C.~M.}\ \bibnamefont
  {Caves}}, \ and\ \bibinfo {author} {\bibfnamefont {G.}~\bibnamefont
  {Milburn}},\ }\href {\doibase http://dx.doi.org/10.1006/aphy.1996.0040}
  {\bibfield  {journal} {\bibinfo  {journal} {Annals of Physics}\ }\textbf
  {\bibinfo {volume} {247}},\ \bibinfo {pages} {135 } (\bibinfo {year}
  {1996})}\BibitemShut {NoStop}%
\bibitem [{\citenamefont {Nielsen}\ and\ \citenamefont
  {Chuang}(2010)}]{nielsen2010quantum}%
  \BibitemOpen
  \bibfield  {author} {\bibinfo {author} {\bibfnamefont {M.~A.}\ \bibnamefont
  {Nielsen}}\ and\ \bibinfo {author} {\bibfnamefont {I.~L.}\ \bibnamefont
  {Chuang}},\ }\href@noop {} {\emph {\bibinfo {title} {Quantum computation and
  quantum information}}}\ (\bibinfo  {publisher} {Cambridge university press},\
  \bibinfo {year} {2010})\BibitemShut {NoStop}%
\bibitem [{\citenamefont {\AA{}berg}(2014)}]{aaberg2014catalytic}%
  \BibitemOpen
  \bibfield  {author} {\bibinfo {author} {\bibfnamefont {J.}~\bibnamefont
  {\AA{}berg}},\ }\href {\doibase 10.1103/PhysRevLett.113.150402} {\bibfield
  {journal} {\bibinfo  {journal} {Phys. Rev. Lett.}\ }\textbf {\bibinfo
  {volume} {113}},\ \bibinfo {pages} {150402} (\bibinfo {year}
  {2014})}\BibitemShut {NoStop}%
\bibitem [{\citenamefont {Lostaglio}\ \emph
  {et~al.}(2015{\natexlab{a}})\citenamefont {Lostaglio}, \citenamefont
  {Jennings},\ and\ \citenamefont {Rudolph}}]{lostaglio2015description}%
  \BibitemOpen
  \bibfield  {author} {\bibinfo {author} {\bibfnamefont {M.}~\bibnamefont
  {Lostaglio}}, \bibinfo {author} {\bibfnamefont {D.}~\bibnamefont {Jennings}},
  \ and\ \bibinfo {author} {\bibfnamefont {T.}~\bibnamefont {Rudolph}},\
  }\href@noop {} {\bibfield  {journal} {\bibinfo  {journal} {Nature
  Communications}\ }\textbf {\bibinfo {volume} {6}},\ \bibinfo {pages} {6383}
  (\bibinfo {year} {2015}{\natexlab{a}})}\BibitemShut {NoStop}%
\bibitem [{\citenamefont {Huelga}\ and\ \citenamefont
  {Plenio}(2013)}]{Huelga2013biology}%
  \BibitemOpen
  \bibfield  {author} {\bibinfo {author} {\bibfnamefont {S.}~\bibnamefont
  {Huelga}}\ and\ \bibinfo {author} {\bibfnamefont {M.}~\bibnamefont
  {Plenio}},\ }\href {\doibase 10.1080/00405000.2013.829687} {\bibfield
  {journal} {\bibinfo  {journal} {Contemporary Physics}\ }\textbf {\bibinfo
  {volume} {54}},\ \bibinfo {pages} {181} (\bibinfo {year} {2013})},\ \Eprint
  {http://arxiv.org/abs/https://doi.org/10.1080/00405000.2013.829687}
  {https://doi.org/10.1080/00405000.2013.829687} \BibitemShut {NoStop}%
\bibitem [{\citenamefont {Hillery}(2016)}]{Hillery2016Coherence}%
  \BibitemOpen
  \bibfield  {author} {\bibinfo {author} {\bibfnamefont {M.}~\bibnamefont
  {Hillery}},\ }\href {\doibase 10.1103/PhysRevA.93.012111} {\bibfield
  {journal} {\bibinfo  {journal} {Phys. Rev. A}\ }\textbf {\bibinfo {volume}
  {93}},\ \bibinfo {pages} {012111} (\bibinfo {year} {2016})}\BibitemShut
  {NoStop}%
\bibitem [{\citenamefont {Matera}\ \emph {et~al.}(2016)\citenamefont {Matera},
  \citenamefont {Egloff}, \citenamefont {Killoran},\ and\ \citenamefont
  {Plenio}}]{Matera2016Coherent}%
  \BibitemOpen
  \bibfield  {author} {\bibinfo {author} {\bibfnamefont {J.~M.}\ \bibnamefont
  {Matera}}, \bibinfo {author} {\bibfnamefont {D.}~\bibnamefont {Egloff}},
  \bibinfo {author} {\bibfnamefont {N.}~\bibnamefont {Killoran}}, \ and\
  \bibinfo {author} {\bibfnamefont {M.~B.}\ \bibnamefont {Plenio}},\ }\href
  {http://stacks.iop.org/2058-9565/1/i=1/a=01LT01} {\bibfield  {journal}
  {\bibinfo  {journal} {Quantum Science and Technology}\ }\textbf {\bibinfo
  {volume} {1}},\ \bibinfo {pages} {01LT01} (\bibinfo {year}
  {2016})}\BibitemShut {NoStop}%
\bibitem [{\citenamefont {Marvian}\ \emph {et~al.}(2016)\citenamefont
  {Marvian}, \citenamefont {Spekkens},\ and\ \citenamefont
  {Zanardi}}]{Marvian2016speed}%
  \BibitemOpen
  \bibfield  {author} {\bibinfo {author} {\bibfnamefont {I.}~\bibnamefont
  {Marvian}}, \bibinfo {author} {\bibfnamefont {R.~W.}\ \bibnamefont
  {Spekkens}}, \ and\ \bibinfo {author} {\bibfnamefont {P.}~\bibnamefont
  {Zanardi}},\ }\href {\doibase 10.1103/PhysRevA.93.052331} {\bibfield
  {journal} {\bibinfo  {journal} {Phys. Rev. A}\ }\textbf {\bibinfo {volume}
  {93}},\ \bibinfo {pages} {052331} (\bibinfo {year} {2016})}\BibitemShut
  {NoStop}%
\bibitem [{\citenamefont {Napoli}\ \emph {et~al.}(2016)\citenamefont {Napoli},
  \citenamefont {Bromley}, \citenamefont {Cianciaruso}, \citenamefont {Piani},
  \citenamefont {Johnston},\ and\ \citenamefont
  {Adesso}}]{Napoli2016Robustness}%
  \BibitemOpen
  \bibfield  {author} {\bibinfo {author} {\bibfnamefont {C.}~\bibnamefont
  {Napoli}}, \bibinfo {author} {\bibfnamefont {T.~R.}\ \bibnamefont {Bromley}},
  \bibinfo {author} {\bibfnamefont {M.}~\bibnamefont {Cianciaruso}}, \bibinfo
  {author} {\bibfnamefont {M.}~\bibnamefont {Piani}}, \bibinfo {author}
  {\bibfnamefont {N.}~\bibnamefont {Johnston}}, \ and\ \bibinfo {author}
  {\bibfnamefont {G.}~\bibnamefont {Adesso}},\ }\href {\doibase
  10.1103/PhysRevLett.116.150502} {\bibfield  {journal} {\bibinfo  {journal}
  {Phys. Rev. Lett.}\ }\textbf {\bibinfo {volume} {116}},\ \bibinfo {pages}
  {150502} (\bibinfo {year} {2016})}\BibitemShut {NoStop}%
\bibitem [{\citenamefont {Lostaglio}\ \emph
  {et~al.}(2015{\natexlab{b}})\citenamefont {Lostaglio}, \citenamefont
  {Korzekwa}, \citenamefont {Jennings},\ and\ \citenamefont
  {Rudolph}}]{lostaglio2015quantum}%
  \BibitemOpen
  \bibfield  {author} {\bibinfo {author} {\bibfnamefont {M.}~\bibnamefont
  {Lostaglio}}, \bibinfo {author} {\bibfnamefont {K.}~\bibnamefont {Korzekwa}},
  \bibinfo {author} {\bibfnamefont {D.}~\bibnamefont {Jennings}}, \ and\
  \bibinfo {author} {\bibfnamefont {T.}~\bibnamefont {Rudolph}},\ }\href@noop
  {} {\bibfield  {journal} {\bibinfo  {journal} {Physical Review X}\ }\textbf
  {\bibinfo {volume} {5}},\ \bibinfo {pages} {021001} (\bibinfo {year}
  {2015}{\natexlab{b}})}\BibitemShut {NoStop}%
\bibitem [{\citenamefont {Bagan}\ \emph {et~al.}(2016)\citenamefont {Bagan},
  \citenamefont {Bergou}, \citenamefont {Cottrell},\ and\ \citenamefont
  {Hillery}}]{Bagan2016Path}%
  \BibitemOpen
  \bibfield  {author} {\bibinfo {author} {\bibfnamefont {E.}~\bibnamefont
  {Bagan}}, \bibinfo {author} {\bibfnamefont {J.~A.}\ \bibnamefont {Bergou}},
  \bibinfo {author} {\bibfnamefont {S.~S.}\ \bibnamefont {Cottrell}}, \ and\
  \bibinfo {author} {\bibfnamefont {M.}~\bibnamefont {Hillery}},\ }\href
  {\doibase 10.1103/PhysRevLett.116.160406} {\bibfield  {journal} {\bibinfo
  {journal} {Phys. Rev. Lett.}\ }\textbf {\bibinfo {volume} {116}},\ \bibinfo
  {pages} {160406} (\bibinfo {year} {2016})}\BibitemShut {NoStop}%
\bibitem [{\citenamefont {Yuan}\ \emph {et~al.}(2015)\citenamefont {Yuan},
  \citenamefont {Zhou}, \citenamefont {Cao},\ and\ \citenamefont
  {Ma}}]{Yuan15randomness}%
  \BibitemOpen
  \bibfield  {author} {\bibinfo {author} {\bibfnamefont {X.}~\bibnamefont
  {Yuan}}, \bibinfo {author} {\bibfnamefont {H.}~\bibnamefont {Zhou}}, \bibinfo
  {author} {\bibfnamefont {Z.}~\bibnamefont {Cao}}, \ and\ \bibinfo {author}
  {\bibfnamefont {X.}~\bibnamefont {Ma}},\ }\href {\doibase
  10.1103/PhysRevA.92.022124} {\bibfield  {journal} {\bibinfo  {journal} {Phys.
  Rev. A}\ }\textbf {\bibinfo {volume} {92}},\ \bibinfo {pages} {022124}
  (\bibinfo {year} {2015})}\BibitemShut {NoStop}%
\bibitem [{\citenamefont {Yuan}\ \emph {et~al.}(2016)\citenamefont {Yuan},
  \citenamefont {Zhao}, \citenamefont {Girolami},\ and\ \citenamefont
  {Ma}}]{yuan2016interplay}%
  \BibitemOpen
  \bibfield  {author} {\bibinfo {author} {\bibfnamefont {X.}~\bibnamefont
  {Yuan}}, \bibinfo {author} {\bibfnamefont {Q.}~\bibnamefont {Zhao}}, \bibinfo
  {author} {\bibfnamefont {D.}~\bibnamefont {Girolami}}, \ and\ \bibinfo
  {author} {\bibfnamefont {X.}~\bibnamefont {Ma}},\ }\href@noop {} {\bibfield
  {journal} {\bibinfo  {journal} {arXiv preprint arXiv:1605.07818}\ } (\bibinfo
  {year} {2016})}\BibitemShut {NoStop}%
\bibitem [{\citenamefont {Ma}\ \emph {et~al.}(2019)\citenamefont {Ma},
  \citenamefont {Hakande}, \citenamefont {Yuan},\ and\ \citenamefont
  {Ma}}]{Ma2019Coherence}%
  \BibitemOpen
  \bibfield  {author} {\bibinfo {author} {\bibfnamefont {J.}~\bibnamefont
  {Ma}}, \bibinfo {author} {\bibfnamefont {A.}~\bibnamefont {Hakande}},
  \bibinfo {author} {\bibfnamefont {X.}~\bibnamefont {Yuan}}, \ and\ \bibinfo
  {author} {\bibfnamefont {X.}~\bibnamefont {Ma}},\ }\href {\doibase
  10.1103/PhysRevA.99.022328} {\bibfield  {journal} {\bibinfo  {journal} {Phys.
  Rev. A}\ }\textbf {\bibinfo {volume} {99}},\ \bibinfo {pages} {022328}
  (\bibinfo {year} {2019})}\BibitemShut {NoStop}%
\bibitem [{\citenamefont {Shor}\ and\ \citenamefont
  {Preskill}(2000)}]{shor2000simple}%
  \BibitemOpen
  \bibfield  {author} {\bibinfo {author} {\bibfnamefont {P.~W.}\ \bibnamefont
  {Shor}}\ and\ \bibinfo {author} {\bibfnamefont {J.}~\bibnamefont
  {Preskill}},\ }\href {\doibase 10.1103/PhysRevLett.85.441} {\bibfield
  {journal} {\bibinfo  {journal} {Phys. Rev. Lett.}\ }\textbf {\bibinfo
  {volume} {85}},\ \bibinfo {pages} {441} (\bibinfo {year} {2000})}\BibitemShut
  {NoStop}%
\bibitem [{\citenamefont {Lo}\ and\ \citenamefont
  {Chau}(1999)}]{Lo1999unconditional}%
  \BibitemOpen
  \bibfield  {author} {\bibinfo {author} {\bibfnamefont {H.~K.}\ \bibnamefont
  {Lo}}\ and\ \bibinfo {author} {\bibfnamefont {H.~F.}\ \bibnamefont {Chau}},\
  }\href {http://science.sciencemag.org/content/283/5410/2050} {\bibfield
  {journal} {\bibinfo  {journal} {Science}\ }\textbf {\bibinfo {volume}
  {283}},\ \bibinfo {pages} {2050} (\bibinfo {year} {1999})}\BibitemShut
  {NoStop}%
\bibitem [{\citenamefont {Bennett}\ and\ \citenamefont
  {Brassard}(1984)}]{bennett1984QKD}%
  \BibitemOpen
  \bibfield  {author} {\bibinfo {author} {\bibfnamefont {C.~H.}\ \bibnamefont
  {Bennett}}\ and\ \bibinfo {author} {\bibfnamefont {G.}~\bibnamefont
  {Brassard}},\ }in\ \href {https://doi.org/10.1016/j.tcs.2014.05.025} {\emph
  {\bibinfo {booktitle} {Proceedings of the IEEE International Conference on
  Computers, Systems and Signal Processing}}}\ (\bibinfo  {publisher} {IEEE
  Press},\ \bibinfo {address} {New York},\ \bibinfo {year} {1984})\ pp.\
  \bibinfo {pages} {175--179}\BibitemShut {NoStop}%
\bibitem [{\citenamefont {Bennett}\ \emph {et~al.}(1992)\citenamefont
  {Bennett}, \citenamefont {Brassard},\ and\ \citenamefont
  {Mermin}}]{bennett1992quantum}%
  \BibitemOpen
  \bibfield  {author} {\bibinfo {author} {\bibfnamefont {C.~H.}\ \bibnamefont
  {Bennett}}, \bibinfo {author} {\bibfnamefont {G.}~\bibnamefont {Brassard}}, \
  and\ \bibinfo {author} {\bibfnamefont {N.~D.}\ \bibnamefont {Mermin}},\
  }\href {\doibase 10.1103/PhysRevLett.68.557} {\bibfield  {journal} {\bibinfo
  {journal} {Phys. Rev. Lett.}\ }\textbf {\bibinfo {volume} {68}},\ \bibinfo
  {pages} {557} (\bibinfo {year} {1992})}\BibitemShut {NoStop}%
\bibitem [{\citenamefont {Bru\ss{}}(1998)}]{BD98Six}%
  \BibitemOpen
  \bibfield  {author} {\bibinfo {author} {\bibfnamefont {D.}~\bibnamefont
  {Bru\ss{}}},\ }\href {\doibase 10.1103/PhysRevLett.81.3018} {\bibfield
  {journal} {\bibinfo  {journal} {Phys. Rev. Lett.}\ }\textbf {\bibinfo
  {volume} {81}},\ \bibinfo {pages} {3018} (\bibinfo {year}
  {1998})}\BibitemShut {NoStop}%
\bibitem [{\citenamefont {Lo}(2001)}]{hoi2001proof}%
  \BibitemOpen
  \bibfield  {author} {\bibinfo {author} {\bibfnamefont {H.-K.}\ \bibnamefont
  {Lo}},\ }\href {http://dl.acm.org/citation.cfm?id=2011333.2011337} {\bibfield
   {journal} {\bibinfo  {journal} {Quantum Info. Comput.}\ }\textbf {\bibinfo
  {volume} {1}},\ \bibinfo {pages} {81} (\bibinfo {year} {2001})}\BibitemShut
  {NoStop}%
\bibitem [{\citenamefont {Bennett}\ \emph {et~al.}(1996)\citenamefont
  {Bennett}, \citenamefont {Brassard}, \citenamefont {Popescu}, \citenamefont
  {Schumacher}, \citenamefont {Smolin},\ and\ \citenamefont
  {Wootters}}]{Bennett1996Purification}%
  \BibitemOpen
  \bibfield  {author} {\bibinfo {author} {\bibfnamefont {C.~H.}\ \bibnamefont
  {Bennett}}, \bibinfo {author} {\bibfnamefont {G.}~\bibnamefont {Brassard}},
  \bibinfo {author} {\bibfnamefont {S.}~\bibnamefont {Popescu}}, \bibinfo
  {author} {\bibfnamefont {B.}~\bibnamefont {Schumacher}}, \bibinfo {author}
  {\bibfnamefont {J.~A.}\ \bibnamefont {Smolin}}, \ and\ \bibinfo {author}
  {\bibfnamefont {W.~K.}\ \bibnamefont {Wootters}},\ }\href {\doibase
  10.1103/PhysRevLett.76.722} {\bibfield  {journal} {\bibinfo  {journal} {Phys.
  Rev. Lett.}\ }\textbf {\bibinfo {volume} {76}},\ \bibinfo {pages} {722}
  (\bibinfo {year} {1996})}\BibitemShut {NoStop}%
\bibitem [{\citenamefont {Curty}\ \emph {et~al.}(2004)\citenamefont {Curty},
  \citenamefont {Lewenstein},\ and\ \citenamefont
  {L\"utkenhaus}}]{Curty2004Precondition}%
  \BibitemOpen
  \bibfield  {author} {\bibinfo {author} {\bibfnamefont {M.}~\bibnamefont
  {Curty}}, \bibinfo {author} {\bibfnamefont {M.}~\bibnamefont {Lewenstein}}, \
  and\ \bibinfo {author} {\bibfnamefont {N.}~\bibnamefont {L\"utkenhaus}},\
  }\href {\doibase 10.1103/PhysRevLett.92.217903} {\bibfield  {journal}
  {\bibinfo  {journal} {Phys. Rev. Lett.}\ }\textbf {\bibinfo {volume} {92}},\
  \bibinfo {pages} {217903} (\bibinfo {year} {2004})}\BibitemShut {NoStop}%
\bibitem [{\citenamefont {Ma}\ \emph {et~al.}(2016{\natexlab{a}})\citenamefont
  {Ma}, \citenamefont {Yuan}, \citenamefont {Cao}, \citenamefont {Qi},\ and\
  \citenamefont {Zhang}}]{Ma2016QRNG}%
  \BibitemOpen
  \bibfield  {author} {\bibinfo {author} {\bibfnamefont {X.}~\bibnamefont
  {Ma}}, \bibinfo {author} {\bibfnamefont {X.}~\bibnamefont {Yuan}}, \bibinfo
  {author} {\bibfnamefont {Z.}~\bibnamefont {Cao}}, \bibinfo {author}
  {\bibfnamefont {B.}~\bibnamefont {Qi}}, \ and\ \bibinfo {author}
  {\bibfnamefont {Z.}~\bibnamefont {Zhang}},\ }\href@noop {} {\bibfield
  {journal} {\bibinfo  {journal} {npj Quantum Information}\ }\textbf {\bibinfo
  {volume} {2}},\ \bibinfo {pages} {16021} (\bibinfo {year}
  {2016}{\natexlab{a}})}\BibitemShut {NoStop}%
\bibitem [{\citenamefont {Herrero-Collantes}\ and\ \citenamefont
  {Garcia-Escartin}(2017)}]{herrero2017quantum}%
  \BibitemOpen
  \bibfield  {author} {\bibinfo {author} {\bibfnamefont {M.}~\bibnamefont
  {Herrero-Collantes}}\ and\ \bibinfo {author} {\bibfnamefont {J.~C.}\
  \bibnamefont {Garcia-Escartin}},\ }\href {\doibase
  10.1103/RevModPhys.89.015004} {\bibfield  {journal} {\bibinfo  {journal}
  {Rev. Mod. Phys.}\ }\textbf {\bibinfo {volume} {89}},\ \bibinfo {pages}
  {015004} (\bibinfo {year} {2017})}\BibitemShut {NoStop}%
\bibitem [{\citenamefont {Devetak}\ and\ \citenamefont
  {Winter}(2005)}]{Devetak2005distillation}%
  \BibitemOpen
  \bibfield  {author} {\bibinfo {author} {\bibfnamefont {I.}~\bibnamefont
  {Devetak}}\ and\ \bibinfo {author} {\bibfnamefont {A.}~\bibnamefont
  {Winter}},\ }in\ \href@noop {} {\emph {\bibinfo {booktitle} {Proceedings of
  the Royal Society of London A: Mathematical, Physical and Engineering
  Sciences}}},\ Vol.\ \bibinfo {volume} {461}\ (\bibinfo {organization} {The
  Royal Society},\ \bibinfo {year} {2005})\ pp.\ \bibinfo {pages}
  {207--235}\BibitemShut {NoStop}%
\bibitem [{\citenamefont {Fung}\ \emph {et~al.}(2010)\citenamefont {Fung},
  \citenamefont {Ma},\ and\ \citenamefont {Chau}}]{Fung2010Finite}%
  \BibitemOpen
  \bibfield  {author} {\bibinfo {author} {\bibfnamefont {C.-H.~F.}\
  \bibnamefont {Fung}}, \bibinfo {author} {\bibfnamefont {X.}~\bibnamefont
  {Ma}}, \ and\ \bibinfo {author} {\bibfnamefont {H.~F.}\ \bibnamefont
  {Chau}},\ }\href {\doibase 10.1103/PhysRevA.81.012318} {\bibfield  {journal}
  {\bibinfo  {journal} {Phys. Rev. A}\ }\textbf {\bibinfo {volume} {81}},\
  \bibinfo {pages} {012318} (\bibinfo {year} {2010})}\BibitemShut {NoStop}%
\bibitem [{Note1()}]{Note1}%
  \BibitemOpen
  \bibinfo {note} {Here, one-way information reconciliation means that in the
  post-processing stage, Alice (or Bob) could determine the final key from her
  (or his) sifted key directly. For example, one can use Alice's sifted key
  after privacy-amplification hashing as the final key.}\BibitemShut {Stop}%
\bibitem [{\citenamefont {Gottesman}\ and\ \citenamefont
  {Lo}(2003)}]{gottesman2003proof}%
  \BibitemOpen
  \bibfield  {author} {\bibinfo {author} {\bibfnamefont {D.}~\bibnamefont
  {Gottesman}}\ and\ \bibinfo {author} {\bibfnamefont {H.-K.}\ \bibnamefont
  {Lo}},\ }\href@noop {} {\bibfield  {journal} {\bibinfo  {journal} {IEEE
  Transactions on Information Theory}\ }\textbf {\bibinfo {volume} {49}},\
  \bibinfo {pages} {457} (\bibinfo {year} {2003})}\BibitemShut {NoStop}%
\bibitem [{\citenamefont {Yu}\ \emph {et~al.}(2016)\citenamefont {Yu},
  \citenamefont {Zhang}, \citenamefont {Xu},\ and\ \citenamefont
  {Tong}}]{yu2016alternative}%
  \BibitemOpen
  \bibfield  {author} {\bibinfo {author} {\bibfnamefont {X.-D.}\ \bibnamefont
  {Yu}}, \bibinfo {author} {\bibfnamefont {D.-J.}\ \bibnamefont {Zhang}},
  \bibinfo {author} {\bibfnamefont {G.~F.}\ \bibnamefont {Xu}}, \ and\ \bibinfo
  {author} {\bibfnamefont {D.~M.}\ \bibnamefont {Tong}},\ }\href {\doibase
  10.1103/PhysRevA.94.060302} {\bibfield  {journal} {\bibinfo  {journal} {Phys.
  Rev. A}\ }\textbf {\bibinfo {volume} {94}},\ \bibinfo {pages} {060302(R)}
  (\bibinfo {year} {2016})}\BibitemShut {NoStop}%
\bibitem [{\citenamefont {Maassen}\ and\ \citenamefont
  {Uffink}(1988)}]{maassen1988generalized}%
  \BibitemOpen
  \bibfield  {author} {\bibinfo {author} {\bibfnamefont {H.}~\bibnamefont
  {Maassen}}\ and\ \bibinfo {author} {\bibfnamefont {J.~B.~M.}\ \bibnamefont
  {Uffink}},\ }\href@noop {} {\bibfield  {journal} {\bibinfo  {journal}
  {Physical Review Letters}\ }\textbf {\bibinfo {volume} {60}},\ \bibinfo
  {pages} {1103} (\bibinfo {year} {1988})}\BibitemShut {NoStop}%
\bibitem [{\citenamefont {Berta}\ \emph {et~al.}(2010)\citenamefont {Berta},
  \citenamefont {Christandl}, \citenamefont {Colbeck}, \citenamefont {Renes},\
  and\ \citenamefont {Renner}}]{berta2010uncertainty}%
  \BibitemOpen
  \bibfield  {author} {\bibinfo {author} {\bibfnamefont {M.}~\bibnamefont
  {Berta}}, \bibinfo {author} {\bibfnamefont {M.}~\bibnamefont {Christandl}},
  \bibinfo {author} {\bibfnamefont {R.}~\bibnamefont {Colbeck}}, \bibinfo
  {author} {\bibfnamefont {J.~M.}\ \bibnamefont {Renes}}, \ and\ \bibinfo
  {author} {\bibfnamefont {R.}~\bibnamefont {Renner}},\ }\href@noop {}
  {\bibfield  {journal} {\bibinfo  {journal} {Nature Physics}\ }\textbf
  {\bibinfo {volume} {6}},\ \bibinfo {pages} {659} (\bibinfo {year}
  {2010})}\BibitemShut {NoStop}%
\bibitem [{\citenamefont {Fung}\ \emph {et~al.}(2009)\citenamefont {Fung},
  \citenamefont {Tamaki}, \citenamefont {Qi}, \citenamefont {Lo},\ and\
  \citenamefont {Ma}}]{fung2009mismatch}%
  \BibitemOpen
  \bibfield  {author} {\bibinfo {author} {\bibfnamefont {C.-H.~F.}\
  \bibnamefont {Fung}}, \bibinfo {author} {\bibfnamefont {K.}~\bibnamefont
  {Tamaki}}, \bibinfo {author} {\bibfnamefont {B.}~\bibnamefont {Qi}}, \bibinfo
  {author} {\bibfnamefont {H.-K.}\ \bibnamefont {Lo}}, \ and\ \bibinfo {author}
  {\bibfnamefont {X.}~\bibnamefont {Ma}},\ }\href@noop {} {\bibfield  {journal}
  {\bibinfo  {journal} {Quantum Information \& Computation}\ }\textbf {\bibinfo
  {volume} {9}},\ \bibinfo {pages} {131} (\bibinfo {year} {2009})}\BibitemShut
  {NoStop}%
\bibitem [{\citenamefont {Zhao}\ \emph {et~al.}(2008)\citenamefont {Zhao},
  \citenamefont {Fung}, \citenamefont {Qi}, \citenamefont {Chen},\ and\
  \citenamefont {Lo}}]{zhao2008quantum}%
  \BibitemOpen
  \bibfield  {author} {\bibinfo {author} {\bibfnamefont {Y.}~\bibnamefont
  {Zhao}}, \bibinfo {author} {\bibfnamefont {C.-H.~F.}\ \bibnamefont {Fung}},
  \bibinfo {author} {\bibfnamefont {B.}~\bibnamefont {Qi}}, \bibinfo {author}
  {\bibfnamefont {C.}~\bibnamefont {Chen}}, \ and\ \bibinfo {author}
  {\bibfnamefont {H.-K.}\ \bibnamefont {Lo}},\ }\href@noop {} {\bibfield
  {journal} {\bibinfo  {journal} {Physical Review A}\ }\textbf {\bibinfo
  {volume} {78}},\ \bibinfo {pages} {042333} (\bibinfo {year}
  {2008})}\BibitemShut {NoStop}%
\bibitem [{\citenamefont {Koashi}(2009)}]{Koashi2009simple}%
  \BibitemOpen
  \bibfield  {author} {\bibinfo {author} {\bibfnamefont {M.}~\bibnamefont
  {Koashi}},\ }\href@noop {} {\bibfield  {journal} {\bibinfo  {journal} {New
  Journal of Physics}\ }\textbf {\bibinfo {volume} {11}},\ \bibinfo {pages}
  {045018} (\bibinfo {year} {2009})}\BibitemShut {NoStop}%
\bibitem [{\citenamefont {Horodecki}\ \emph {et~al.}(2009)\citenamefont
  {Horodecki}, \citenamefont {Horodecki}, \citenamefont {Horodecki},\ and\
  \citenamefont {Horodecki}}]{Horodecki09Quantum}%
  \BibitemOpen
  \bibfield  {author} {\bibinfo {author} {\bibfnamefont {R.}~\bibnamefont
  {Horodecki}}, \bibinfo {author} {\bibfnamefont {P.}~\bibnamefont
  {Horodecki}}, \bibinfo {author} {\bibfnamefont {M.}~\bibnamefont
  {Horodecki}}, \ and\ \bibinfo {author} {\bibfnamefont {K.}~\bibnamefont
  {Horodecki}},\ }\href {\doibase 10.1103/RevModPhys.81.865} {\bibfield
  {journal} {\bibinfo  {journal} {Rev. Mod. Phys.}\ }\textbf {\bibinfo {volume}
  {81}},\ \bibinfo {pages} {865} (\bibinfo {year} {2009})}\BibitemShut
  {NoStop}%
\bibitem [{\citenamefont {Ma}\ \emph {et~al.}(2016{\natexlab{b}})\citenamefont
  {Ma}, \citenamefont {Yadin}, \citenamefont {Girolami}, \citenamefont
  {Vedral},\ and\ \citenamefont {Gu}}]{Ma2016converting}%
  \BibitemOpen
  \bibfield  {author} {\bibinfo {author} {\bibfnamefont {J.}~\bibnamefont
  {Ma}}, \bibinfo {author} {\bibfnamefont {B.}~\bibnamefont {Yadin}}, \bibinfo
  {author} {\bibfnamefont {D.}~\bibnamefont {Girolami}}, \bibinfo {author}
  {\bibfnamefont {V.}~\bibnamefont {Vedral}}, \ and\ \bibinfo {author}
  {\bibfnamefont {M.}~\bibnamefont {Gu}},\ }\href@noop {} {\bibfield  {journal}
  {\bibinfo  {journal} {Physical Review Letters}\ }\textbf {\bibinfo {volume}
  {116}},\ \bibinfo {pages} {160407} (\bibinfo {year}
  {2016}{\natexlab{b}})}\BibitemShut {NoStop}%
\bibitem [{\citenamefont {Streltsov}\ \emph {et~al.}(2015)\citenamefont
  {Streltsov}, \citenamefont {Singh}, \citenamefont {Dhar}, \citenamefont
  {Bera},\ and\ \citenamefont {Adesso}}]{Streltsov2015measuring}%
  \BibitemOpen
  \bibfield  {author} {\bibinfo {author} {\bibfnamefont {A.}~\bibnamefont
  {Streltsov}}, \bibinfo {author} {\bibfnamefont {U.}~\bibnamefont {Singh}},
  \bibinfo {author} {\bibfnamefont {H.~S.}\ \bibnamefont {Dhar}}, \bibinfo
  {author} {\bibfnamefont {M.~N.}\ \bibnamefont {Bera}}, \ and\ \bibinfo
  {author} {\bibfnamefont {G.}~\bibnamefont {Adesso}},\ }\href {\doibase
  10.1103/PhysRevLett.115.020403} {\bibfield  {journal} {\bibinfo  {journal}
  {Phys. Rev. Lett.}\ }\textbf {\bibinfo {volume} {115}},\ \bibinfo {pages}
  {020403} (\bibinfo {year} {2015})}\BibitemShut {NoStop}%
\bibitem [{\citenamefont {Renner}(2005)}]{Renner2005Security}%
  \BibitemOpen
  \bibfield  {author} {\bibinfo {author} {\bibfnamefont {R.}~\bibnamefont
  {Renner}},\ }\href@noop {} {\bibfield  {journal} {\bibinfo  {journal} {arXiv
  preprint arXiv:quant-ph/0512258}\ } (\bibinfo {year} {2005})}\BibitemShut
  {NoStop}%
\bibitem [{\citenamefont {Ma}\ \emph {et~al.}(2011)\citenamefont {Ma},
  \citenamefont {Fung}, \citenamefont {Boileau},\ and\ \citenamefont
  {Chau}}]{Ma2011Finite}%
  \BibitemOpen
  \bibfield  {author} {\bibinfo {author} {\bibfnamefont {X.}~\bibnamefont
  {Ma}}, \bibinfo {author} {\bibfnamefont {C.-H.~F.}\ \bibnamefont {Fung}},
  \bibinfo {author} {\bibfnamefont {J.-C.}\ \bibnamefont {Boileau}}, \ and\
  \bibinfo {author} {\bibfnamefont {H.}~\bibnamefont {Chau}},\ }\href {\doibase
  DOI: 10.1016/j.cose.2010.11.001} {\bibfield  {journal} {\bibinfo  {journal}
  {Computers \& Security}\ }\textbf {\bibinfo {volume} {30}},\ \bibinfo {pages}
  {172 } (\bibinfo {year} {2011})}\BibitemShut {NoStop}%
\bibitem [{\citenamefont {Zhao}\ \emph {et~al.}(2018)\citenamefont {Zhao},
  \citenamefont {Liu}, \citenamefont {Yuan}, \citenamefont {Chitambar},\ and\
  \citenamefont {Ma}}]{Zhao2018Dilution}%
  \BibitemOpen
  \bibfield  {author} {\bibinfo {author} {\bibfnamefont {Q.}~\bibnamefont
  {Zhao}}, \bibinfo {author} {\bibfnamefont {Y.}~\bibnamefont {Liu}}, \bibinfo
  {author} {\bibfnamefont {X.}~\bibnamefont {Yuan}}, \bibinfo {author}
  {\bibfnamefont {E.}~\bibnamefont {Chitambar}}, \ and\ \bibinfo {author}
  {\bibfnamefont {X.}~\bibnamefont {Ma}},\ }\href {\doibase
  10.1103/PhysRevLett.120.070403} {\bibfield  {journal} {\bibinfo  {journal}
  {Phys. Rev. Lett.}\ }\textbf {\bibinfo {volume} {120}},\ \bibinfo {pages}
  {070403} (\bibinfo {year} {2018})}\BibitemShut {NoStop}%
\bibitem [{\citenamefont {{Zhao}}\ \emph {et~al.}(2019)\citenamefont {{Zhao}},
  \citenamefont {{Liu}}, \citenamefont {{Yuan}}, \citenamefont {{Chitambar}},\
  and\ \citenamefont {{Winter}}}]{Zhao2018On}%
  \BibitemOpen
  \bibfield  {author} {\bibinfo {author} {\bibfnamefont {Q.}~\bibnamefont
  {{Zhao}}}, \bibinfo {author} {\bibfnamefont {Y.}~\bibnamefont {{Liu}}},
  \bibinfo {author} {\bibfnamefont {X.}~\bibnamefont {{Yuan}}}, \bibinfo
  {author} {\bibfnamefont {E.}~\bibnamefont {{Chitambar}}}, \ and\ \bibinfo
  {author} {\bibfnamefont {A.}~\bibnamefont {{Winter}}},\ }\href {\doibase
  10.1109/TIT.2019.2911102} {\bibfield  {journal} {\bibinfo  {journal} {IEEE
  Transactions on Information Theory}\ ,\ \bibinfo {pages} {1}} (\bibinfo
  {year} {2019})}\BibitemShut {NoStop}%
\bibitem [{\citenamefont {Regula}\ \emph {et~al.}(2018)\citenamefont {Regula},
  \citenamefont {Fang}, \citenamefont {Wang},\ and\ \citenamefont
  {Adesso}}]{Bartosz2018One}%
  \BibitemOpen
  \bibfield  {author} {\bibinfo {author} {\bibfnamefont {B.}~\bibnamefont
  {Regula}}, \bibinfo {author} {\bibfnamefont {K.}~\bibnamefont {Fang}},
  \bibinfo {author} {\bibfnamefont {X.}~\bibnamefont {Wang}}, \ and\ \bibinfo
  {author} {\bibfnamefont {G.}~\bibnamefont {Adesso}},\ }\href {\doibase
  10.1103/PhysRevLett.121.010401} {\bibfield  {journal} {\bibinfo  {journal}
  {Phys. Rev. Lett.}\ }\textbf {\bibinfo {volume} {121}},\ \bibinfo {pages}
  {010401} (\bibinfo {year} {2018})}\BibitemShut {NoStop}%
\bibitem [{\citenamefont {Boileau}\ \emph {et~al.}(2005)\citenamefont
  {Boileau}, \citenamefont {Tamaki}, \citenamefont {Batuwantudawe},
  \citenamefont {Laflamme},\ and\ \citenamefont {Renes}}]{Boileau05Three}%
  \BibitemOpen
  \bibfield  {author} {\bibinfo {author} {\bibfnamefont {J.-C.}\ \bibnamefont
  {Boileau}}, \bibinfo {author} {\bibfnamefont {K.}~\bibnamefont {Tamaki}},
  \bibinfo {author} {\bibfnamefont {J.}~\bibnamefont {Batuwantudawe}}, \bibinfo
  {author} {\bibfnamefont {R.}~\bibnamefont {Laflamme}}, \ and\ \bibinfo
  {author} {\bibfnamefont {J.~M.}\ \bibnamefont {Renes}},\ }\href {\doibase
  10.1103/PhysRevLett.94.040503} {\bibfield  {journal} {\bibinfo  {journal}
  {Phys. Rev. Lett.}\ }\textbf {\bibinfo {volume} {94}},\ \bibinfo {pages}
  {040503} (\bibinfo {year} {2005})}\BibitemShut {NoStop}%
\bibitem [{\citenamefont {Fung}\ and\ \citenamefont
  {Lo}(2006)}]{Fung2006three}%
  \BibitemOpen
  \bibfield  {author} {\bibinfo {author} {\bibfnamefont {C.-H.~F.}\
  \bibnamefont {Fung}}\ and\ \bibinfo {author} {\bibfnamefont {H.-K.}\
  \bibnamefont {Lo}},\ }\href {\doibase 10.1103/PhysRevA.74.042342} {\bibfield
  {journal} {\bibinfo  {journal} {Phys. Rev. A}\ }\textbf {\bibinfo {volume}
  {74}},\ \bibinfo {pages} {042342} (\bibinfo {year} {2006})}\BibitemShut
  {NoStop}%
\bibitem [{\citenamefont {Bennett}(1992)}]{Bennett92Quantum}%
  \BibitemOpen
  \bibfield  {author} {\bibinfo {author} {\bibfnamefont {C.~H.}\ \bibnamefont
  {Bennett}},\ }\href {\doibase 10.1103/PhysRevLett.68.3121} {\bibfield
  {journal} {\bibinfo  {journal} {Phys. Rev. Lett.}\ }\textbf {\bibinfo
  {volume} {68}},\ \bibinfo {pages} {3121} (\bibinfo {year}
  {1992})}\BibitemShut {NoStop}%
\bibitem [{\citenamefont {Lo}\ \emph {et~al.}(2012)\citenamefont {Lo},
  \citenamefont {Curty},\ and\ \citenamefont {Qi}}]{Lo12MDI}%
  \BibitemOpen
  \bibfield  {author} {\bibinfo {author} {\bibfnamefont {H.-K.}\ \bibnamefont
  {Lo}}, \bibinfo {author} {\bibfnamefont {M.}~\bibnamefont {Curty}}, \ and\
  \bibinfo {author} {\bibfnamefont {B.}~\bibnamefont {Qi}},\ }\href {\doibase
  10.1103/PhysRevLett.108.130503} {\bibfield  {journal} {\bibinfo  {journal}
  {Phys. Rev. Lett.}\ }\textbf {\bibinfo {volume} {108}},\ \bibinfo {pages}
  {130503} (\bibinfo {year} {2012})}\BibitemShut {NoStop}%
\bibitem [{\citenamefont {Scarani}\ \emph {et~al.}(2009)\citenamefont
  {Scarani}, \citenamefont {Bechmann-Pasquinucci}, \citenamefont {Cerf},
  \citenamefont {Du\ifmmode~\check{s}\else \v{s}\fi{}ek}, \citenamefont
  {L\"utkenhaus},\ and\ \citenamefont {Peev}}]{scarani2009security}%
  \BibitemOpen
  \bibfield  {author} {\bibinfo {author} {\bibfnamefont {V.}~\bibnamefont
  {Scarani}}, \bibinfo {author} {\bibfnamefont {H.}~\bibnamefont
  {Bechmann-Pasquinucci}}, \bibinfo {author} {\bibfnamefont {N.~J.}\
  \bibnamefont {Cerf}}, \bibinfo {author} {\bibfnamefont {M.}~\bibnamefont
  {Du\ifmmode~\check{s}\else \v{s}\fi{}ek}}, \bibinfo {author} {\bibfnamefont
  {N.}~\bibnamefont {L\"utkenhaus}}, \ and\ \bibinfo {author} {\bibfnamefont
  {M.}~\bibnamefont {Peev}},\ }\href {\doibase 10.1103/RevModPhys.81.1301}
  {\bibfield  {journal} {\bibinfo  {journal} {Rev. Mod. Phys.}\ }\textbf
  {\bibinfo {volume} {81}},\ \bibinfo {pages} {1301} (\bibinfo {year}
  {2009})}\BibitemShut {NoStop}%
\bibitem [{\citenamefont {Chitambar}\ and\ \citenamefont
  {Gour}(2016)}]{Eric2016Critical}%
  \BibitemOpen
  \bibfield  {author} {\bibinfo {author} {\bibfnamefont {E.}~\bibnamefont
  {Chitambar}}\ and\ \bibinfo {author} {\bibfnamefont {G.}~\bibnamefont
  {Gour}},\ }\href {\doibase 10.1103/PhysRevLett.117.030401} {\bibfield
  {journal} {\bibinfo  {journal} {Phys. Rev. Lett.}\ }\textbf {\bibinfo
  {volume} {117}},\ \bibinfo {pages} {030401} (\bibinfo {year}
  {2016})}\BibitemShut {NoStop}%
\bibitem [{\citenamefont {Marvian}\ and\ \citenamefont
  {Spekkens}(2016)}]{Marvian2016speakable}%
  \BibitemOpen
  \bibfield  {author} {\bibinfo {author} {\bibfnamefont {I.}~\bibnamefont
  {Marvian}}\ and\ \bibinfo {author} {\bibfnamefont {R.~W.}\ \bibnamefont
  {Spekkens}},\ }\href {\doibase 10.1103/PhysRevA.94.052324} {\bibfield
  {journal} {\bibinfo  {journal} {Phys. Rev. A}\ }\textbf {\bibinfo {volume}
  {94}},\ \bibinfo {pages} {052324} (\bibinfo {year} {2016})}\BibitemShut
  {NoStop}%
\bibitem [{\citenamefont {Coles}(2012)}]{Patrick2012Unification}%
  \BibitemOpen
  \bibfield  {author} {\bibinfo {author} {\bibfnamefont {P.~J.}\ \bibnamefont
  {Coles}},\ }\href {\doibase 10.1103/PhysRevA.85.042103} {\bibfield  {journal}
  {\bibinfo  {journal} {Phys. Rev. A}\ }\textbf {\bibinfo {volume} {85}},\
  \bibinfo {pages} {042103} (\bibinfo {year} {2012})}\BibitemShut {NoStop}%
\bibitem [{\citenamefont {Yao}\ \emph {et~al.}(2015)\citenamefont {Yao},
  \citenamefont {Xiao}, \citenamefont {Ge},\ and\ \citenamefont
  {Sun}}]{Yao15Quantum}%
  \BibitemOpen
  \bibfield  {author} {\bibinfo {author} {\bibfnamefont {Y.}~\bibnamefont
  {Yao}}, \bibinfo {author} {\bibfnamefont {X.}~\bibnamefont {Xiao}}, \bibinfo
  {author} {\bibfnamefont {L.}~\bibnamefont {Ge}}, \ and\ \bibinfo {author}
  {\bibfnamefont {C.~P.}\ \bibnamefont {Sun}},\ }\href {\doibase
  10.1103/PhysRevA.92.022112} {\bibfield  {journal} {\bibinfo  {journal} {Phys.
  Rev. A}\ }\textbf {\bibinfo {volume} {92}},\ \bibinfo {pages} {022112}
  (\bibinfo {year} {2015})}\BibitemShut {NoStop}%
\bibitem [{\citenamefont {Streltsov}\ \emph
  {et~al.}(2017{\natexlab{b}})\citenamefont {Streltsov}, \citenamefont {Rana},
  \citenamefont {Bera},\ and\ \citenamefont {Lewenstein}}]{Alexander17Towards}%
  \BibitemOpen
  \bibfield  {author} {\bibinfo {author} {\bibfnamefont {A.}~\bibnamefont
  {Streltsov}}, \bibinfo {author} {\bibfnamefont {S.}~\bibnamefont {Rana}},
  \bibinfo {author} {\bibfnamefont {M.~N.}\ \bibnamefont {Bera}}, \ and\
  \bibinfo {author} {\bibfnamefont {M.}~\bibnamefont {Lewenstein}},\ }\href
  {\doibase 10.1103/PhysRevX.7.011024} {\bibfield  {journal} {\bibinfo
  {journal} {Phys. Rev. X}\ }\textbf {\bibinfo {volume} {7}},\ \bibinfo {pages}
  {011024} (\bibinfo {year} {2017}{\natexlab{b}})}\BibitemShut {NoStop}%
\bibitem [{\citenamefont {Coles}\ \emph {et~al.}(2016)\citenamefont {Coles},
  \citenamefont {Metodiev},\ and\ \citenamefont
  {L\"utkenhaus}}]{Patrick2016description}%
  \BibitemOpen
  \bibfield  {author} {\bibinfo {author} {\bibfnamefont {P.~J.}\ \bibnamefont
  {Coles}}, \bibinfo {author} {\bibfnamefont {E.~M.}\ \bibnamefont {Metodiev}},
  \ and\ \bibinfo {author} {\bibfnamefont {N.}~\bibnamefont {L\"utkenhaus}},\
  }\href@noop {} {\bibfield  {journal} {\bibinfo  {journal} {Nature
  Communications}\ }\textbf {\bibinfo {volume} {7}},\ \bibinfo {pages} {11712}
  (\bibinfo {year} {2016})}\BibitemShut {NoStop}%
\bibitem [{\citenamefont {Modi}\ \emph {et~al.}(2012)\citenamefont {Modi},
  \citenamefont {Brodutch}, \citenamefont {Cable}, \citenamefont {Paterek},\
  and\ \citenamefont {Vedral}}]{Modi2012Discord}%
  \BibitemOpen
  \bibfield  {author} {\bibinfo {author} {\bibfnamefont {K.}~\bibnamefont
  {Modi}}, \bibinfo {author} {\bibfnamefont {A.}~\bibnamefont {Brodutch}},
  \bibinfo {author} {\bibfnamefont {H.}~\bibnamefont {Cable}}, \bibinfo
  {author} {\bibfnamefont {T.}~\bibnamefont {Paterek}}, \ and\ \bibinfo
  {author} {\bibfnamefont {V.}~\bibnamefont {Vedral}},\ }\href {\doibase
  10.1103/RevModPhys.84.1655} {\bibfield  {journal} {\bibinfo  {journal} {Rev.
  Mod. Phys.}\ }\textbf {\bibinfo {volume} {84}},\ \bibinfo {pages} {1655}
  (\bibinfo {year} {2012})}\BibitemShut {NoStop}%
\end{thebibliography}%

\end{document}